\documentclass[aps,twocolumn,superscriptaddress]{revtex4-2}

\usepackage{graphicx}
\usepackage{hyperref}
\hypersetup{
    colorlinks=true,
    linkcolor=magenta,
    citecolor=blue,
    urlcolor=blue
}
\usepackage[utf8]{inputenc}
\usepackage{amsmath, amssymb, bm, amsthm}
\usepackage{braket}
\usepackage{here}
\usepackage{fancybox}
\usepackage{ascmac}
\usepackage{paralist}
\usepackage{comment}
\usepackage{url}
\usepackage{xcolor}
\usepackage[normalem]{ulem}
\usepackage{fancyhdr}

\newtheorem{theorem}{Theorem}
\newtheorem{lemma}{Lemma}

\theoremstyle{remark}

\begin{document}

\title{Provably efficient experimental classification of quantum data with quantum kernels}

\author{Keitaro Anai}
\affiliation{Department of Applied Physics, School of Engineering, The University of Tokyo,
7-3-1 Hongo, Bunkyo-ku, Tokyo 113-8656, Japan}

\author{Akimoto Nakayama}
\affiliation{Graduate School of Engineering Science, The University of Osaka, 1-3 Machikaneyama, Toyonaka, Osaka 560-8531, Japan}

\author{Shintaro Ooka}
\affiliation{Department of Applied Physics, School of Engineering, The University of Tokyo,
7-3-1 Hongo, Bunkyo-ku, Tokyo 113-8656, Japan}

\author{Hiroko Tomoda}
\affiliation{Department of Applied Physics, School of Engineering, The University of Tokyo,
7-3-1 Hongo, Bunkyo-ku, Tokyo 113-8656, Japan}

\author{Yuto Morohoshi}
\affiliation{Graduate School of Engineering Science, The University of Osaka, 1-3 Machikaneyama, Toyonaka, Osaka 560-8531, Japan}

\author{Yohji Chin}
\affiliation{Department of Applied Physics, School of Engineering, The University of Tokyo,
7-3-1 Hongo, Bunkyo-ku, Tokyo 113-8656, Japan}

\author{Shigehito Miki}
\affiliation{Advanced ICT Research Institute, National Institute of Information and Communications Technology, 588-2 Iwaoka, Nishi-ku, Kobe, Hyogo 651-2492, Japan}

\author{Masahiro Yabuno}
\affiliation{Advanced ICT Research Institute, National Institute of Information and Communications Technology, 588-2 Iwaoka, Nishi-ku, Kobe, Hyogo 651-2492, Japan}

\author{Hirotaka Terai}
\affiliation{Advanced ICT Research Institute, National Institute of Information and Communications Technology, 588-2 Iwaoka, Nishi-ku, Kobe, Hyogo 651-2492, Japan}

\author{Peter van Loock}
\affiliation{Institute of Physics, Johannes-Gutenberg University of Mainz, Staudingerweg 7, 55128 Mainz, Germany}

\author{Kosuke Mitarai}
\affiliation{Graduate School of Engineering Science, The University of Osaka, 1-3 Machikaneyama, Toyonaka, Osaka 560-8531, Japan}
\affiliation{Center for Quantum Information and Quantum Biology, The University of Osaka, 1-2 Machikaneyama, Toyonaka, Osaka 560-0043, Japan}

\author{Shuntaro Takeda}
\email{takeda@ap.t.u-tokyo.ac.jp}
\affiliation{Department of Applied Physics, School of Engineering, The University of Tokyo,
7-3-1 Hongo, Bunkyo-ku, Tokyo 113-8656, Japan}

% up to 200 words
\begin{abstract}
    Analyzing quantum systems is essential for quantum information processing, but can require exponentially many measurements without exploiting the structure of the system or task. Quantum machine learning offers a promising way to reduce this cost by learning from quantum data. Here, we theoretically establish a framework for efficient quantum-state classification and experimentally demonstrate the resulting scheme on a programmable photonic processor. Our analysis links learning efficiency to the structure of quantum data, without relying on a physical description tailored to each task. We derive sufficient conditions for accurate classification with a number of measurements polynomial in system size. We apply this guarantee to a resource problem in continuous-variable (CV) quantum computing: classifying stellar rank, a measure of non-Gaussianity. Experimentally, we implement the classification of single-mode CV quantum states with stellar rank zero and one. This is enabled by developing a programmable CV photonic processor capable of preparing and processing diverse Gaussian and non-Gaussian states and performing virtual projection onto such states, going beyond the Gaussian-only capabilities of most existing photonic platforms. The experiment demonstrates accurate classification despite experimental imperfections. Our results establish a theory-guided and experimentally validated framework for quantum-state classification that can be implemented on quantum devices available today.
\end{abstract}

% Word count:
% abstract: 200/200
% introduction: 575
% quantum kernel method: 822
% application of quantum: 509
% experimental kernel and: 542
% conclusion and outlook: 309
%%%%%%%%%%%%%%%%%%%%%%%%%%%%%%%%%
% sum in the main text: 2757/3000

\date{\today}

\maketitle

\section{Introduction}

Analyzing quantum systems—for example, detecting entanglement, identifying phases of matter, and characterizing quantum devices—is essential for quantum information processing. However, such analyses have limitations. Naive analyses without exploiting the structure of the system or task, such as full quantum-state tomography, can require a number of samples that grows exponentially with system size because of the exponential growth of the Hilbert space dimension~\cite{Anshu2024, Mele2025}. To avoid this cost, scientists have been developing methods tailored to individual tasks, using the relevant physical properties to derive suitable measurement strategies, such as entanglement certification of Einstein-Podolsky-Rosen states~\cite{PhysRevLett.84.2722} and characterizing the device noise of Clifford gates~\cite{PhysRevLett.106.180504}. However, such attempts typically require physical insight specific to each problem, and such efficient analysis is not always available for some tasks, such as determining whether an unstructured mixed state is entangled.

Machine learning may offer a promising way to address these limitations. 
For tasks involving classical physical systems, by learning the relationship between input data and target properties from a given dataset, models can make predictions without requiring explicit insight into the underlying physics.
This capacity has motivated quantum machine learning (QML) approaches that take quantum states themselves as quantum data~\cite{PhysRevLett.126.190505, Wu2023quantumphase, doi:10.1126/science.abn7293, Herrmann2022, Saggio2021, GONG2023906}. However, guaranteeing efficient and accurate predictions without task-specific physical assumptions remains difficult, and many existing approaches still rely on the structure of particular tasks~\cite{Wu2023quantumphase, Hsin2022Provably, Lewis2024, morohoshi2025learningfunctionshamiltonianshamiltonian}.
Although some theoretical studies have developed general, task-independent criteria for avoiding overfitting~\cite{Banchi2021Generalization,Caro2022Generalization,Hur2025Margins}, where good performance on the examples used for learning does not carry over to unseen examples, avoiding overfitting alone does not guarantee accurate predictions.

Here, we propose a learning framework using kernel methods with fidelity for this purpose, theoretically establish its efficiency under certain conditions, and experimentally demonstrate the resulting scheme on a photonic quantum processor (Fig.~\ref{fig:concept}). 
Our theoretical analysis links learning efficiency to the structure of quantum data, without relying on a physical description tailored to each task. 
Intuitively, classification is easier when states with the same label are similar and those with different labels are dissimilar. 
To quantify this, we bring a well-established data-structure measure from classical learning theory into the quantum setting. 
We then derive the sufficient conditions based on this measure for accurate classification with a number of measurements polynomial in system size. Separately, when viewed in terms of errors caused by estimating kernels from finitely many shots, our shot-complexity scaling is more favorable than in previous analyses~\cite{Gentinetta2024QSVMComplexity}.

We use this framework on a physically motivated problem: identifying non-Gaussian resources in continuous-variable (CV) quantum states.
Specifically, we classify stellar rank, a resource measure of non-Gaussianity~\cite{PhysRevLett.124.063605}.
Using the common measure described above, we prove that the state families studied here satisfy our sufficient conditions for learning with resources that scale polynomially with system size.

To demonstrate the effectiveness of this framewok experimentally, we establish a programmable CV photonic processor capable of preparing and processing a variety of Gaussian and non-Gaussian states and performing virtual projection onto such states, going beyond the Gaussian-only capabilities of most existing photonic platforms.~\cite{larsen2021deterministic, yokoyama2025fullstackanalogopticalquantum}.
We implement a minimal nontrivial binary instance on a programmable photonic processor, separating single-mode CV states of stellar rank zero from those of rank one. Although minimal, this instance captures several fundamentally and practically important tasks, including distinguishing Gaussian from non-Gaussian states~\cite{PRXQuantum.2.030204}, detecting Wigner negativity~\cite{Chabaud2021witnessingwigner}, and distinguishing vacuum from single-photon states in discrete-variable photonic quantum information processing~\cite{RevModPhys.79.135}. 
Using stellar-rank classification as a concrete testbed, we experimentally demonstrate that the proposed learning scheme remains effective on current quantum hardware despite experimental imperfections. Our results establish a theory-guided and experimentally validated framework for quantum-state classification, with polynomial resource guarantees under general conditions on the structure of quantum data.

\section{Quantum kernel method for quantum data}\label{sec:quantumkernel}
\begin{figure}[htbp]
    \centering
    \includegraphics[width=1.0\linewidth]{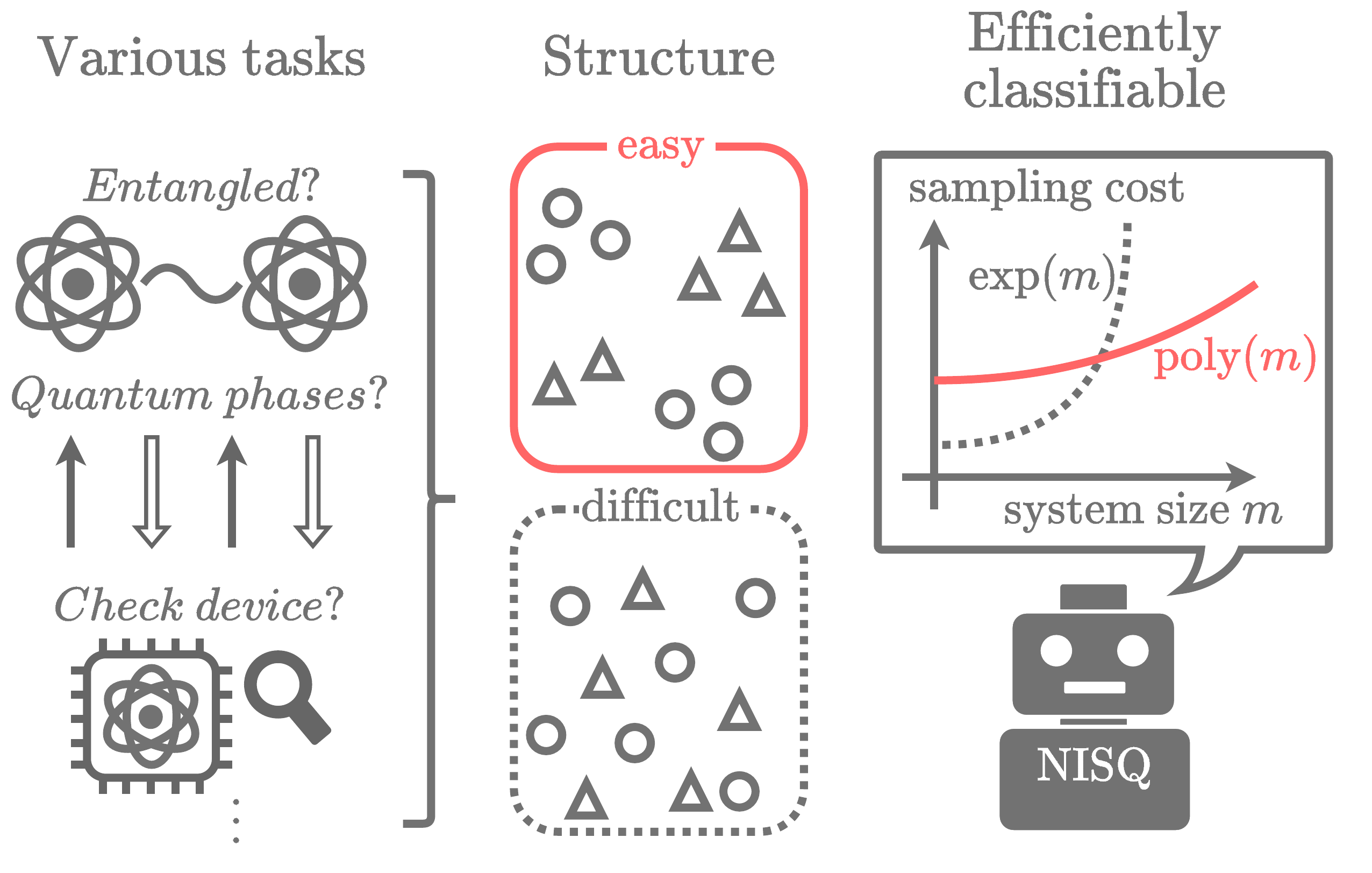}
    \caption{\textbf{Conceptual overview of the proposed quantum-state classification framework.} Our framework is designed to address diverse quantum-state classification tasks motivated by questions in quantum science, such as determining whether a state is entangled, identifying phases of matter, and detecting faulty outputs from quantum devices (left). Across these tasks, we assess the structure of quantum data using a common measure based on a quantum kernel, which quantifies similarity between states (center). For data satisfying our sufficient conditions, the same quantum kernel method enables accurate learning and prediction with numbers of training states and measurement shots that scale polynomially with the system size \(m\), without relying on task-specific physical insight. Our strategy can be implemented even with noisy intermediate-scale quantum (NISQ) devices (right).}
    \label{fig:concept}
\end{figure}

Classifying quantum states by physically meaningful properties is a central task in quantum science.
For instance, we may wish to decide whether a quantum state is entangled or separable, whether the Wigner function of a quantum state has a negative value, or which phase (e.g., paramagnetic, ferromagnetic, or antiferromagnetic) a many-particle quantum system belongs to. 
Such classification tasks are essential for analyzing the outputs of quantum sensors and quantum simulators, for evaluating the resources we generate, and for addressing questions of fundamental physical interest.
However, a straightforward approach to such classification, in which the target state is fully reconstructed and then analyzed classically, generally requires resources that scale exponentially with the system size~\cite{Mele2025}. This cost may be avoided by exploiting problem-specific physical insight, but efficient methods based on such insight are not always available.

To address this limitation, we prove that efficient learning is possible under assumptions that do not rely on problem-specific physical descriptions.
Specifically, we focus on a quantum kernel method~\cite{PhysRevLett.122.040504, havlicek2019supervised} based on fidelities between quantum states.
The quantum kernel method is a supervised learning framework in which a classifier is trained on a classical computer using kernel values estimated by a quantum device.
This framework allows us to draw on the rich learning theory of kernel methods, while the kernel used in this work depends only on fidelities between pairs of states, which can be estimated directly by measurements on a quantum device without state tomography~\cite{PhysRevLett.87.167902, havlicek2019supervised}.
These features align with our aim of providing both theoretical guarantees for efficient learning and an experimental demonstration of the method.

More concretely, we consider the classification tasks described below. 
We are first given a training quantum dataset $\{\ket{\psi_i}\}_{i=1}^{N}$ together with the corresponding labels $\{z_i\}_{i=1}^{N}$. Our goal is to predict the unknown label $z_{\mathrm{test}}$ of a test state $\ket{\psi_{\mathrm{test}}}$. 
The method consists of training and testing steps.
In the training step, a quantum device estimates the kernel values $K(\ket{\psi_i}, \ket{\psi_j})$ for pairs of training states.
The classifier is trained using the kernel values evaluated for all pairs of training quantum data.
In the testing step, the device estimates the kernel values between the test state and the training dataset, and the trained classifier outputs the predicted label $z_{\mathrm{pred}}$.
The detailed algorithm is described in Supplementary Sec.~\ref {app:theorem1-proof}.

One simple quantum kernel that can be estimated experimentally is the fidelity kernel \(K_{\mathrm{fid}}=F\), where \(F(\ket{\psi_i},\ket{\psi_j}):=|\braket{\psi_j|\psi_i}|^2\) for pure states~\cite{havlicek2019supervised}.
To allow for more flexible classification rules while using the same fidelity estimates, we instead consider the quantum RBF kernel~\cite{e25060860},
\begin{align}
K_{\sigma}^{\mathrm{RBF}}(\ket{\psi_i},\ket{\psi_j})
&:=
\exp\!\left[
-\frac{\|\hat{\rho}_j-\hat{\rho}_i\|_F^2}{2\sigma^2}
\right]\\
&=
\exp\!\left[
-\frac{1-F(\ket{\psi_i},\ket{\psi_j})}{\sigma^2}
\right],
\label{eq:RBFkernel}
\end{align}
where $\hat{\rho}_i:=\ket{\psi_i}\bra{\psi_i}$, $\|\cdot\|_F$ denotes the Frobenius norm, and $\sigma>0$ is a tunable hyperparameter. Thus, our scheme is a support-vector-machine classifier using
the quantum RBF kernel constructed from fidelities between quantum data. 
See Supplementary Sec.~\ref{sppl:DiagramOfKernel} for a detailed comparison of the fidelity and RBF kernels. 

For binary classification using this scheme, we establish sufficient conditions for learning from a limited number of training states and measurements (Fig.~\ref{fig:concept}).
Intuitively, if states with the same label are similar and states with different labels are dissimilar, we expect that the states can be classified easily using these similarities.
We formalize this intuition using two ingredients: a kernel that measures similarity between quantum states and representative states that serve as reference points for a class, as follows.
Suppose that one class has $M$ representative states such that every state in this class has a kernel value of at least $\kappa$ with at least one representative, whereas every state in the other class has kernel values of at most $\varepsilon_{\mathrm{sep}}$ with all representatives. We define the margin $\eta=\kappa/M-\varepsilon_{\mathrm{sep}}$, which provides a lower bound on the gap between the two classes in their average kernel values over the representatives. 
Our initial intuition can thus be expressed in terms of this margin: a larger positive margin allows learning with fewer training states and measurements.
The following theorem formalizes this connection.
We note that the conditions stated in the theorem require representative states for only one of the two classes.

\textbf{Theorem 1.}
\label{thm:theorem1_anai}
\textit{
Consider binary classification in which training and test states are drawn independently from a fixed distribution. 
% これらの仮定の上で、さらに〜を仮定する、というニュアンスで書けば誤読を防げそう。
Let $M$ denote the number of representative states for one class, and let $\kappa$ and $\varepsilon_{\mathrm{sep}}$ be the lower and upper bounds on the kernel values defined above, respectively.
Let $\eta=\kappa/M-\varepsilon_{\mathrm{sep}}$ denote the corresponding margin.
Suppose that $\eta$ is bounded below by an inverse polynomial in the system size $m$
(This condition also implies that $M$ is bounded above by a polynomial in $m$).
Suppose also that this class does not collapse onto its representatives.
Then the numbers of training states and measurement shots sufficient for learning and prediction are bounded by polynomials in the system size $m$.
}

The full assumptions and guarantees are given in Supplementary Theorem~\ref{thm:formal-main-polynomial-learning}.
Notably, because the guarantee targets the misclassification probability rather than the output of the trained classifier itself, the total number of measurement shots scales as $O(N^4\log N)$ for training and $O(N\log N)$ for each prediction with $N$ training states, compared with $O(N^{4.67})$ and up to $O(N^2)$, respectively, in a previous analysis~\cite{Gentinetta2024QSVMComplexity} (see Supplementary Sec.~\ref{app:theorem1-proof}).

\section{Application of quantum kernel method: estimation of stellar rank}\label{sec:ApplicationOfQuantum}

\begin{figure}[htbp]
    \centering
    \includegraphics[width=0.95\linewidth]{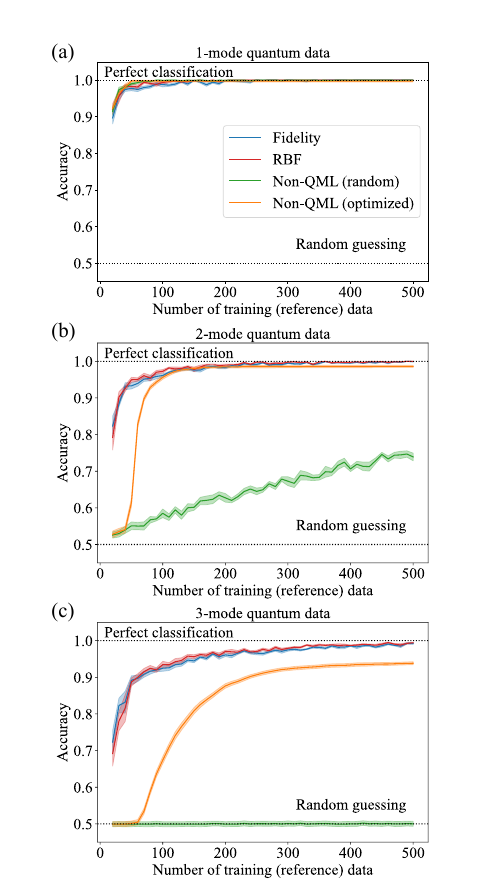}
    \caption{\textbf{Numerical simulation of the mode-number dependence of classification accuracy in stellar-rank-based quantum data classification.} 
    Blue, red, green, and orange lines correspond to results obtained using the fidelity kernel as a linear baseline, quantum RBF kernel, non-QML methods with random-candidate testing, and non-QML methods with optimized-candidate testing, respectively. Here, as the non-QML method, we adopt the approach proposed in Ref.~\cite{PhysRevLett.124.063605}, which is based on a witness called stellar robustness (see Supplementary Sec.~\ref{sppl:StellarRobustnessAnd} for the detailed procedure). 
    For the quantum-kernel methods, the horizontal axis denotes the number of labeled quantum data used to train the classifier, whereas, for the non-QML methods, it denotes the number of reference states of known stellar rank used for threshold-based classification.
    This comparison should be read as a downstream classification benchmark; it does not compare the measurement-shot resources required to obtain the inputs used by each method. The legend, shown only in panel \textbf{a}, applies to all panels.
    The shaded area represents the standard error. 
    \textbf{a–c.} Classification accuracies with one-, two-, and three-mode quantum datasets, respectively.}
    \label{fig:MultiModeAccuracy_main}
\end{figure}

To demonstrate the physical relevance of our quantum kernel method, we apply it to non-Gaussian resource estimation in CV quantum systems.
The non-Gaussian resource is necessary for achieving quantum advantage in CV quantum computations~\cite{PhysRevLett.88.097904, PhysRevResearch.3.033018}. 
In particular, the \textit{stellar rank} is a discrete resource measure that captures the non-Gaussian structure of a CV state. Although previous approaches use witnesses to certify lower bounds on stellar rank~\cite{PRXQuantum.2.020333,fiurasek2022efficient,hct3-6hgn,PhysRevLett.123.043601}, we instead learn a supervised classifier that is provably accurate under the assumptions of Theorem~1.

Here, we formulate a stellar-rank classification task that satisfies the assumptions of Theorem~1.
We consider $m$-mode quantum data of the form $\ket{\psi} = \hat{G}\ket{\bm{n}}$ with its stellar rank $|\bm{n}| = s$, where $\hat{G}$ is an $m$-mode Gaussian operation, $\bm{n} = (n_1, \ldots, n_m) \in \mathbb{Z}_{\geq 0}^m$, $|\bm{n}| = \sum_{i=1}^m n_i$, and $\ket{\bm{n}} = \ket{n_1} \otimes \cdots \otimes \ket{n_m}$ is a tensor product of Fock states~(general definition of the stellar rank is described in Ref.~\cite{PhysRevResearch.3.033018}). The stellar rank $|\bm{n}|$ corresponds to the label $z$ of the classification task.
The precise definitions of the state families considered in the following numerical and experimental analyses are provided in Supplementary Sec.~\ref{sppl:CheckOfAssumptions}, where we verify that they satisfy the assumptions of Theorem~1.
Thus, this task lies within the theorem's finite-shot learnable regime. 

To assess the mode-number dependence of the proposed classifier, we simulate the stellar-rank classification task with QuTiP~\cite{lambert2024qutip5quantumtoolbox}.
We consider one-, two-, and three-mode quantum datasets with stellar rank zero or one, focusing on binary classification as the minimal nontrivial setting. 
The same quantum kernel framework can be extended to higher stellar ranks; see Supplementary Sec.~\ref{sppl:numericalsimulation} for a brief discussion. 
Figure~\ref{fig:MultiModeAccuracy_main} presents the numerical results on how the classification accuracy depends on the number of training $m$-mode quantum data.

We first analyze the performance of our scheme with the quantum RBF kernel. 
For all mode numbers considered here, the accuracy continues to improve with the number of training samples and approaches unity.
This behavior indicates that the RBF kernel retains discriminative information even as the Hilbert-space dimension increases.
Then, we discuss the difference between the quantum RBF kernel and other methods in Fig.~\ref{fig:MultiModeAccuracy_main}.  
In the present simulations, the RBF kernel performs slightly better than the fidelity kernel used as a linear baseline, consistent with the additional expressivity provided by the nonlinear RBF transform.
Moreover, we observe a clear performance difference between the RBF kernel and the non-QML methods. 
The accuracy labeled ``non-QML (random)'' drops sharply as the number of modes increases and becomes close to the random-guessing baseline for three-mode quantum data.
The optimized non-QML baseline partially mitigates this degradation, but it still does not outperform the RBF kernel in the three-mode case.
Thus, within this controlled numerical benchmark, the RBF kernel is the most robust of the tested methods across the one-, two-, and three-mode settings.

\section{Experimental kernel and accuracy}\label{sec:ExperimentalKernelAnd}

Hereafter, we discuss the experimental results. In our experiment, we consider the single-mode quantum dataset $(m=1)$ with stellar rank zero or one as the simplest nontrivial situation of stellar-rank classification.
More formally, we use a labeled set of 180 states \(\{\ket{\psi_i}=\hat G_i\ket{n_i}\}_{i=1}^{180}\), with \(n_i\in\{0,1\}\), and labels \(\{z_i=n_i\}_{i=1}^{180}\)
(see Supplementary Sec.~\ref{sppl:WorkingPrincipleOf} for the detailed definition). 

We experimentally estimate the kernel entries required for this classification using our programmable photonic processor (see \hyperref[sec:methods]{Methods} for details). The processor can generate a variety of quantum states of stellar rank zero (Gaussian) and stellar rank one (non-Gaussian), and perform virtual projection onto such states, going beyond the Gaussian-only capabilities of most existing photonic platforms~\cite{larsen2021deterministic, yokoyama2025fullstackanalogopticalquantum}.

To validate the effectiveness of our quantum kernel method experimentally, we perform two primary assessments: (i) comparison of fidelities between experiment and theory, and (ii) classification accuracy analysis using experimentally and theoretically obtained kernels.

\begin{figure*}[htbp]
    \centering
    \includegraphics[width=0.9\linewidth]{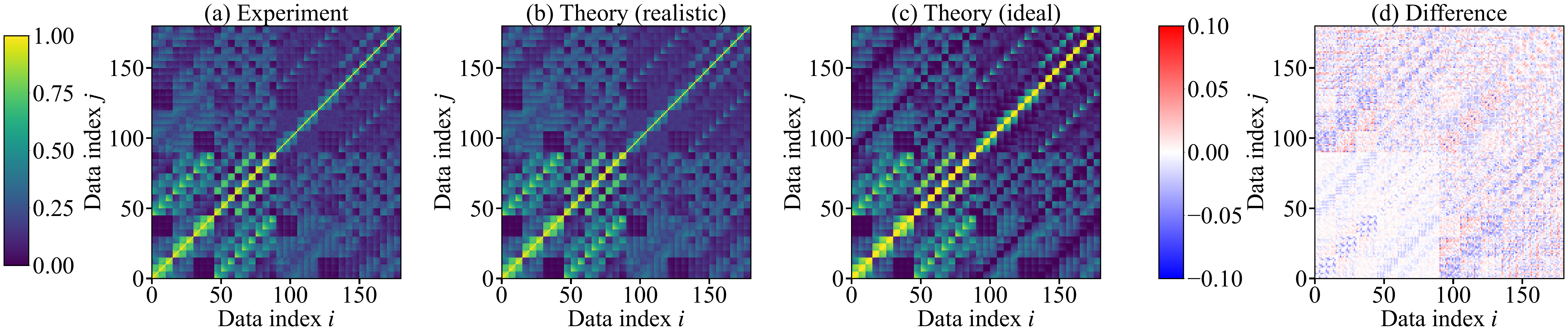}
    \caption{\textbf{Fidelity values $|\braket{\psi_j|\psi_i}|^2$ between quantum data.} The horizontal and vertical axes in Fig.~\ref{fig:kernel_exp} represent the index of the 180 quantum data. In \textbf{a--c}, the colormap represents the fidelity values between quantum data with the corresponding indices; in \textbf{d}, it represents the element-wise difference between \textbf{a} and \textbf{b}. The rows and columns of the matrix are ordered identically. The data are grouped according to their class label, with stellar-rank-zero states placed first and stellar-rank-one states placed second. See Supplementary Sec.~\ref{sppl:WorkingPrincipleOf} for details on the ordering. \textbf{a.} Experimental fidelities. Each element's standard error is approximately 0.004, determined via bootstrap resampling. \textbf{b.} Theoretical fidelities with experimental imperfections. For \textbf{a} and \textbf{b}, the raw fidelities are not positive semi-definite due to the imperfections. Hence, we plot the corrected result (see Supplementary Sec.~\ref{sppl:DiagramOfKernel} for details). \textbf{c.} Theoretical fidelities without experimental imperfections. \textbf{d.} Difference between experimental and theoretical fidelities (Fig.~\ref{fig:kernel_exp}(a) minus Fig.~\ref{fig:kernel_exp}(b)).}
    \label{fig:kernel_exp}
\end{figure*}

First, we evaluate the precision of the measured fidelities between quantum data. Figure~\ref{fig:kernel_exp} shows the experimentally measured fidelities $|\braket{\psi_j|\psi_i}|^2$ alongside the corresponding theoretical predictions. Overall, the experimental fidelities' structure in Fig.~\ref{fig:kernel_exp}(a) reasonably well agrees with the realistic model in Fig.~\ref{fig:kernel_exp}(b), which includes the effect of experimental imperfections such as the optical loss and the noise during the gate operation (see Supplementary Sec.~\ref{sppl:ExperimentalSettingsAnd} for the detailed simulation condition). When compared with the ideal theoretical prediction without including experimental imperfections shown in Fig.~\ref{fig:kernel_exp}(c), the experimental result in Fig.~\ref{fig:kernel_exp}(a) exhibits reduced contrast but nevertheless preserves the overall structure. Figure~\ref{fig:kernel_exp}(d) shows the difference between Figs.~\ref{fig:kernel_exp}(a) and (b). The difference tends to be large when we create stellar-rank-one data $\ket{\psi_i}=\hat{G}_i\ket{1}$, which is attributed to fake counts in photon detection.

\begin{figure}[htbp]
    \centering
    \includegraphics[width=0.9\linewidth]{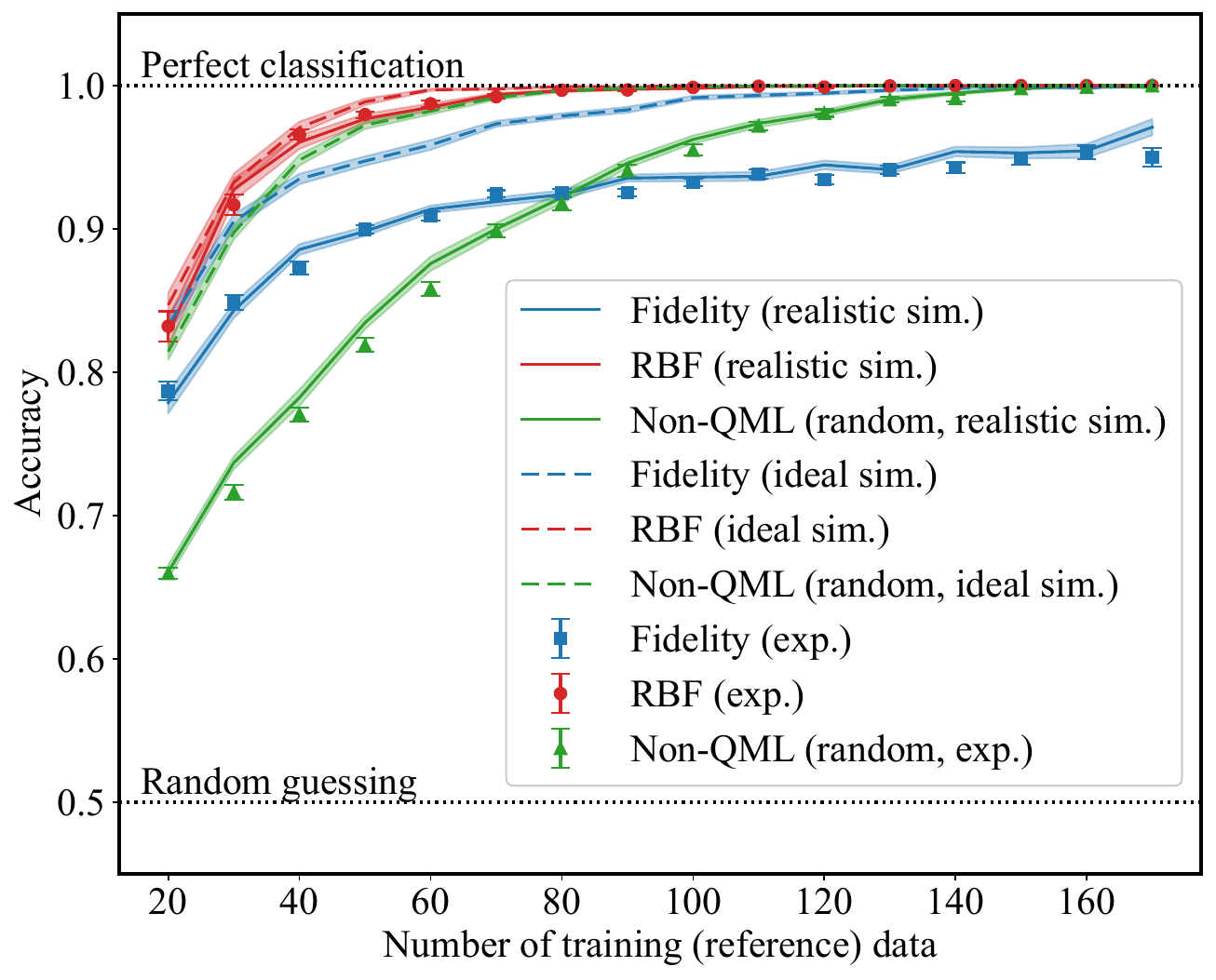}
    \caption{\textbf{Classification accuracy versus training dataset size.} Markers represent experimentally measured accuracies (exp.), solid lines indicate simulated theoretical accuracies including experimental imperfections (realistic sim.), and dashed lines indicate simulated theoretical accuracies without experimental imperfections (ideal sim.). Blue, red, and green correspond to results obtained using the quantum fidelity kernel, quantum RBF kernel, and non-QML methods with random candidates, respectively. The shaded area represents the standard error (see Supplementary Sec.~\ref{sppl:DiagramOfKernel} for details).}
    \label{fig:accuracy_exp}
\end{figure}

Next, we evaluate the classification performance of our quantum kernel method. Figure~\ref{fig:accuracy_exp} compares the classification accuracy as a function of the training-set size for the quantum RBF kernel, the quantum fidelity kernel, and the non-QML baseline. 
The experimental accuracy of the RBF classifier approaches unity as the number of training data points increases and agrees with the realistic theoretical simulation, demonstrating successful experimental learning.
In the presence of experimental imperfections, the RBF classifier shows little degradation relative to the ideal simulation, whereas the fidelity-kernel and non-QML baselines exhibit a more pronounced loss of accuracy.
The comparatively weak degradation of the RBF classifier may reflect the localized, nonlinear weighting induced by the RBF transform: the RBF similarity to each training state decreases exponentially with the fidelity-derived distance $1-F$, with $\sigma$ setting the locality scale, whereas the fidelity kernel yields a global affine decision function in density-operator space.
Within the state family and experimental imperfections considered here, the RBF kernel is therefore the most robust of the tested classifiers, enabling successful learning on the noisy photonic processor.
This result indicates that our scheme is compatible with today's noisy intermediate-scale quantum devices and paves the way toward the quantum advantage with such devices.

\section{Conclusion and outlook}

In this work, we have investigated how the quantum kernel method can enhance our ability to retrieve features of quantum states. We identify sufficient conditions on the structure of quantum data for accurate classification with a number of measurements polynomial in system size. These conditions do not rely on task-specific physical insight or models, providing a common criterion for efficient learning across diverse quantum-state classification tasks. Moreover, we have implemented its proof-of-principle experiment on our CV programmable photonic processor, focusing on the task of classifying quantum states according to their stellar rank—a resource measure of non-Gaussianity for CV quantum computations. This is enabled by developing a programmable CV photonic processor capable of preparing and processing a variety of Gaussian and non-Gaussian states and performing virtual projection onto such states, going beyond the Gaussian-only capabilities of most existing photonic platforms.~\cite{larsen2021deterministic, yokoyama2025fullstackanalogopticalquantum}. Our experimental implementation not only illustrates that our scheme can be applied to the practical resource-estimation task but also highlights that our scheme can be implemented with today's noisy intermediate-scale quantum platforms.

An important future direction will be to demonstrate the sampling advantage experimentally with large-scale quantum devices. For now, we lack an experimental quantum advantage because we deal with small-scale and simple quantum states. However, the recent photonic platform can handle one hundred input qumodes~\cite{yokoyama2025fullstackanalogopticalquantum} and is suitable for manipulating large-scale quantum states, which is essential to achieve experimental quantum advantage. Another future direction will be to identify other practical learning problems where our scheme can be applied. Although we experimentally focus on the specific task of estimating the resource of CV quantum computation, our scheme can retrieve general features of quantum states as long as the quantum states satisfy the assumptions of Theorem~1. Our results establish a theory-guided and experimentally validated route to learning physical resources of quantum states from a finite set of fidelity measurements.

\section*{Acknowledgments}
This work was partly supported by the Japan Science and Technology Agency (JST) Grants No. JPMJCR25I4, No. JPMJFR223R, No. JPMJMS2064, No. JPMJMS256I, and No. JPMJPF2221, and the Japan Society for the Promotion of Science (JSPS) KAKENHI Grants No. 23H01102, No. 23K17300, No. 24KJ0934, and No. 24KJ0726. K.A., S.O., and H.T. acknowledge financial support from The Forefront Physics and Mathematics Program to Drive Transformation (FoPM), WINGS Program, the University of Tokyo. P.v.L. acknowledges support from the BMBF/BMFTR via PhotonQ, QR.N, QuKuK, QuaPhySI, and via the EU project CLUSTEC (grant agreement no. 101080173).

\bibliographystyle{naturemag}
\bibliography{mainrefs}

\clearpage

\section{Methods}\label{sec:methods}

\begin{figure*}[htbp]
    \centering
    \includegraphics[width=0.9\linewidth]{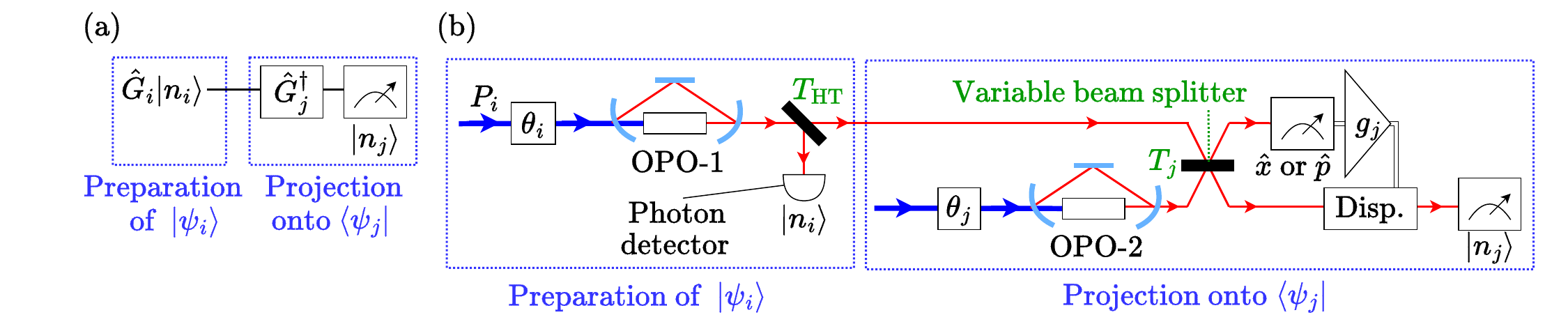}
    % \caption{\textbf{Experimental setup.} 
    \caption{\textbf{Experimental measurement of quantum kernel entries.}
    \textbf{a,} Conceptual quantum circuit for quantum kernel measurement. 
    Quantum data $\ket{\psi_i}$ are generated and the fidelities $|\braket{\psi_j|\psi_i}|^2$ of quantum data are measured by projection onto quantum data $\ket{\psi_j}$. 
    \textbf{b,} Detailed quantum circuit implementation. Squeezed vacuum states or squeezed single-photon states are generated, depending on the measurement result of the photon detector. For the projection onto $\bra{\psi_j}=\bra{n_j}\hat{G}^\dagger_j$, we implement the measurement-induced operation $\hat{G}^\dagger_j$ with an ancillary state from OPO-2, and the projection onto $\bra{n_j}$. Disp., displacement operation; OPO, optical parametric oscillator.}
    \label{fig:setup}
\end{figure*}

Here, we describe the concept of our experimental setup in Fig.~\ref{fig:setup}(a). To implement the quantum kernel method described in Sec.~\ref{sec:quantumkernel}, we experimentally estimate the fidelities \(F(\ket{\psi_i},\ket{\psi_j})=|\braket{\psi_j|\psi_i}|^2\) that define the kernel entries.
Two standard primitives for estimating such fidelities are the SWAP test~\cite{PhysRevLett.87.167902} and the compute-uncompute method~\cite{havlicek2019supervised}. In our experiment, we adopt the latter one for experimental simplicity. In this method, we first prepare quantum data $\ket{\psi_i}$. 
Subsequently, we perform a projective measurement onto another quantum data $\ket{\psi_j}=\hat{G}_j\ket{n_j}$ by applying the inverse unitary operation $\hat{G}_{j}^\dagger$ and measuring the output on the basis of $\{\ket{0},\ket{1}\}$. 
We note that this strategy is available when we know, at least, the classical representation of the training quantum data and can thus implement $\hat{G}_{j}^\dagger$. 
Otherwise, other methods such as the SWAP test can be employed.

We now describe how this conceptual procedure is implemented in our programmable photonic processor in Fig.~\ref{fig:setup}(b). This processor can programmably implement the preparation of arbitrary quantum data $\ket{\psi_i}=\hat{G_i}\ket{n_i}$ as well as the projection onto $\ket{\psi_j}=\hat{G_j}\ket{n_j}$, thereby enabling the estimation of fidelities $|\braket{\psi_j|\psi_i}|^2$ between arbitrary pairs of quantum data. For the preparation of $\ket{\psi_i}$, the quantum data with stellar rank zero (one) are generated if photon detection on the weakly reflected port of the high-transmissivity mirror $T_{\mathrm{HT}}$ does not occur (occurs)~\cite{PhysRevA.55.3184}. The parameters in $\hat{G}_i$ are changed by controlling the power $P_i$ and the phase $\theta_i$ of the pump beam for optical parametric oscillator-1 (OPO-1). For the projection onto $\bra{\psi_j}=\bra{n_j}\hat{G}_j^\dagger$, we first implement arbitrary single-mode Gaussian operations $\hat{G}_{j}^\dagger$ with the measurement-induced operation~\cite{PhysRevA.71.042308}. The parameters of $\hat{G}_{j}^\dagger$ are changed by controlling the phase $\theta_j$ of the pump beam for OPO-2, transmissivity of the beam splitter $T_j$, homodyne measurement basis $\hat{x}$ or $\hat{p}$, and the feedforward gain $g_j$. The homodyne-measurement outputs are post-processed to reflect the displacement operations in $\hat{G}_i$ and $\hat{G}_{j}^\dagger$. Then, the fidelity $|\braket{\psi_j|\psi_i}|^2$ is estimated through the $n_j$-photon component of the output state, which is calculated from the phase-averaged quadrature distribution obtained by the homodyne detection~\cite{PhysRevA.52.R924}. This detection scheme estimates kernel values while avoiding quantum state tomography. See Supplementary Secs.~\ref{sppl:WorkingPrincipleOf} and~\ref{sppl:ExperimentalSettingsAnd} for more details of the experimental setup and parameters.

\clearpage

%% Supplementary Material 開始
\onecolumngrid
\appendix
\pagestyle{plain}
\renewcommand\thesection{\Alph{section}}
\renewcommand{\thefigure}{\thesection\arabic{figure}}
\renewcommand{\thetheorem}{S\arabic{theorem}}
\setcounter{figure}{0} 

\begin{center}
    \textbf{\Large Supplementary Information: Provably efficient experimental classification of quantum data with quantum kernels}
\end{center}
\vspace{1em}

\tableofcontents
\vspace{1em}

\section{ALGORITHM OF QUANTUM KERNEL METHOD}\label{sppl:DiagramOfKernel}

Here, we explain the detailed algorithm of the quantum kernel method introduced in Sec.~\ref{sec:quantumkernel}. Given a training quantum dataset $\{\ket{\psi_i}\}_{i=1}^{N}$ and the corresponding binary labels $\{z_i\}_{i=1}^{N}$, our goal is to predict the unknown label $z_{\mathrm{test}}$ of a test state $\ket{\psi_{\mathrm{test}}}$. We note that once we can classify the binary labels, we can also implement the multiclass classification by repeatedly implementing the binary classification~\cite{991427}. 

Through the whole procedure of binary classification, the essential component is the quantum kernel $K(\ket{\psi_i}, \ket{\psi_j})$, which quantifies the similarity between two quantum data. In this work, we consider two representative kernels: the quantum fidelity kernel $K_{\mathrm{fid}}(\ket{\psi_i},\ket{\psi_j})$ and the quantum radial basis function (RBF) kernel $K_{\sigma}^{\mathrm{RBF}}(\ket{\psi_i},\ket{\psi_j})$, defined respectively by
\begin{align}
    K_{\mathrm{fid}}(\ket{\psi_i},\ket{\psi_j})
    &=
    |\braket{\psi_j|\psi_i}|^2,
    \\
    K_{\sigma}^{\mathrm{RBF}}(\ket{\psi_i},\ket{\psi_j})
    &=
    \exp\left[
        -\frac{1-|\braket{\psi_j|\psi_i}|^2}{\sigma^2}
    \right],
    \label{eq:RBF_sppl}
\end{align}
where $\sigma$ is a real tunable hyperparameter. While many quantum kernel methods for classical data use the fidelity kernel \(K_{\mathrm{fid}}(\ket{\psi_i},\ket{\psi_j})\) between encoded quantum states~\cite{PhysRevLett.122.040504,havlicek2019supervised, PhysRevA.110.022404}, for quantum data, this fidelity kernel has limited expressivity: it is the linear kernel \(K_{\mathrm{fid}}(\ket{\psi_i},\ket{\psi_j})=\mathrm{Tr}(\hat{\rho}_i\hat{\rho}_j)\) on density-operator features \(\hat{\rho}_i=\ket{\psi_i}\bra{\psi_i}\) in Hilbert-Schmidt space.
Thus, as in a classical support vector machine (SVM) with a linear kernel, the classifier is restricted to affine decision functions in the chosen feature space~\cite{Burges1998}. We also note that a kernel similar to our RBF kernel is proposed in Ref.~\cite{Huang2021}, but it only analyzes the reduced density matrix to retrieve local information. 
Other kernel-based approaches for quantum data have also been explored, either without a finite-shot learning guarantee of the type established here~\cite{PhysRevResearch.6.043234, PhysRevResearch.5.043217}, or with a focus on specific phase-learning settings~\cite{Wu2023quantumphase}.

% The binary classification is conducted in two distinct steps, training and testing, as detailed below. In the training step, we first measure the kernel values between all pairs of the training quantum data. Then, using the calculated kernel values, a binary classifier is constructed by solving the following convex quadratic optimization problem (support vector machine formulation):
% \begin{align}
%     &\min_{\bm{\alpha}} \frac{1}{2}\bm{\alpha}^TQ\bm{\alpha}-\sum_{i=1}^{N}\alpha_i \nonumber\\
%     &\text{subject to}\quad \bm{z}^T\bm{\alpha}=0,\quad 0\leq\alpha_i\leq C\quad (i=1,\dots,N),
%     \label{Eq:TrainingStep}
% \end{align}
% where $Q$ is an $N\times N$ positive semi-definite matrix given by $Q_{i,j}=z_i z_j K(\ket{\psi_i},\ket{\psi_j})$, $\bm{z}=(z_1,\dots,z_N)^T$, $\bm{\alpha}=(\alpha_1,\dots,\alpha_N)^T$, and $C>0$ is a regularization parameter.

% Once the classifier is trained, the label $z_{\mathrm{test}}$ of an unseen quantum data $\ket{\psi_{\mathrm{test}}}$ is predicted {(we denote the predicted label by $z_{\mathrm{pred}}$)} by evaluating the kernel values between the test data and the training dataset, followed by the classification rule
% \begin{equation}
%     z_{\mathrm{pred}}=\mathrm{sgn}\left(\sum_{i=1}^{N}z_i\alpha_iK(\ket{\psi_i},\ket{\psi_{\mathrm{test}}})+b\right),
% \end{equation}
% where $\mathrm{sgn}(\cdot)$ denotes the sign function, and the bias term $b\in\mathbb{R}$ is derived straightforwardly from the training solution $(\bm{\alpha},\bm{z},K(\cdot,\cdot))$~\cite{10.1145/1961189.1961199}.

Hereafter, we explain the detailed settings and parameters of machine learning in our simulation and experiment in Secs.~\ref{sec:ApplicationOfQuantum}~and~\ref{sec:ExperimentalKernelAnd}. To implement classification tasks based on our quantum kernel data, we have to measure the fidelities of quantum data and calculate the kernel. Since the experimental kernel is inevitably affected by experimental imperfections and is no longer positive semi-definite, we correct these kernels by finding the closest positive semi-definite kernel with the CVXPY package~\cite{diamond2016cvxpy, agrawal2018rewriting}, as is explained in Ref.~\cite{havlicek2019supervised}. Then, we construct an SVM-based classification model using the scikit-learn library in Python~\cite{10.1145/1961189.1961199}. The classification model has two tunable hyperparameters: the width parameter $\sigma$ of the RBF kernel and the regularization parameter, which controls the balance between fitting the training data and limiting the complexity of the classifier. The precise description of the SVM, including its training objective, the definition of the regularization parameter, and the prediction rule, is given in Supplementary Sec.~\ref{app:theorem1-proof}. We optimize these hyperparameters by cross-validation using a grid search over predefined ranges.

To evaluate how classification performance varies with the number of training samples, we plot the classification accuracy as a function of training set size in Fig.~\ref{fig:MultiModeAccuracy_main}. For each point on the curve, we begin by generating the training samples whose size ranges from 20 to 500 in steps of 10. For each training size, we tune the hyperparameters by cross-validation on the training samples and evaluate the accuracy using an independently generated test quantum dataset of 500 samples. This entire process is repeated for 10 independently generated datasets, and the resulting $10$ accuracy values are used to compute the mean accuracy and standard error for each training size.

A similar procedure is used in Fig.~\ref{fig:accuracy_exp}, where we use 180 quantum samples and vary the number of training samples from 20 to 170. For each training size, we repeat the train-test split, hyperparameter tuning on the training sample, and evaluation 100 times to compute the average accuracy and standard error.

\section{FORMAL LEARNING GUARANTEE AND PROOF OF THEOREM~1}
\label{app:theorem1-proof}

\begin{figure}
  \centering
  \includegraphics[width=\linewidth]{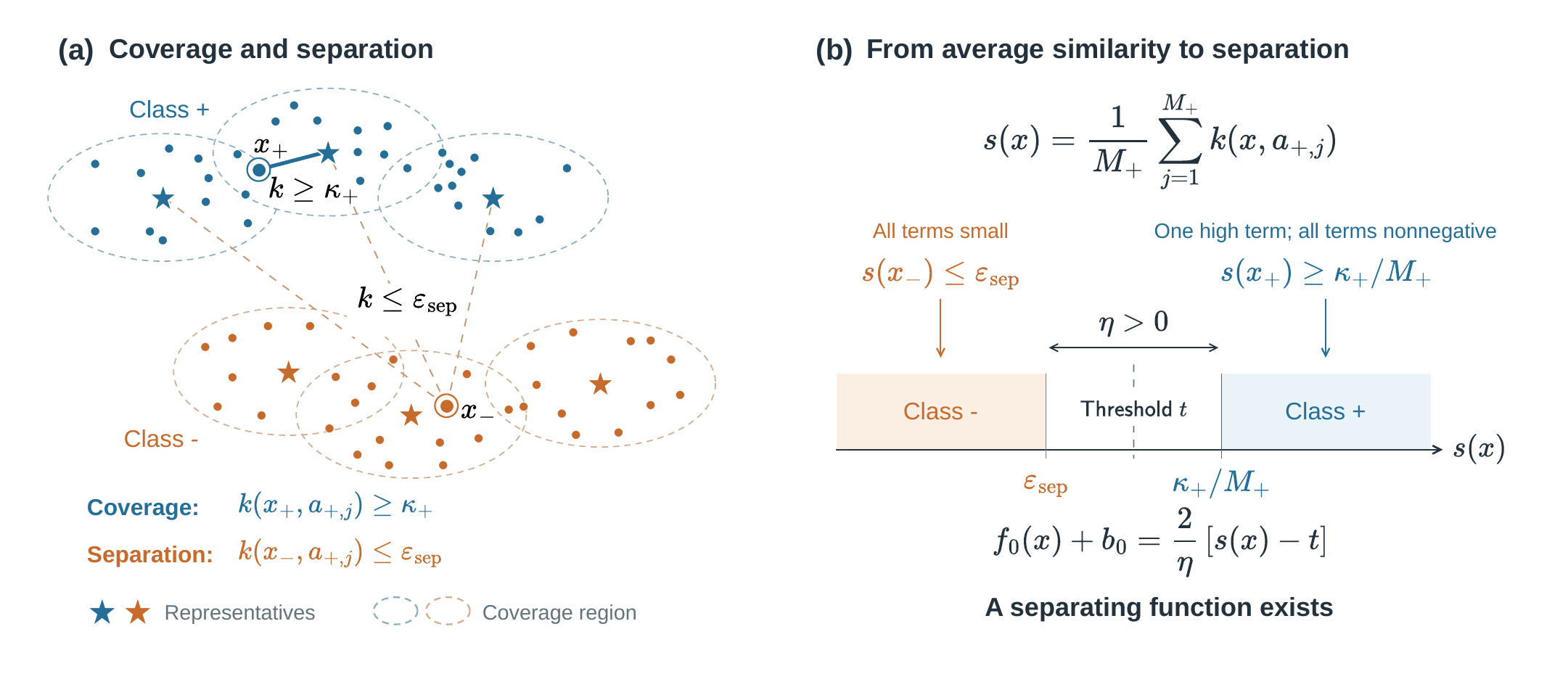}
  \caption{\textbf{From coverage and separation to a unit-margin separator.}
(a) Blue and orange dots denote data points from the two classes, and stars denote their representatives. Dashed ellipses schematically depict coverage regions $\{x\in\mathcal X:k(x,a_{y,j})\ge\kappa_y\}$. 
Every $x\in\mathcal S_y$ lies in at least one region associated with its own class, whereas $k(x,a_{-y,j})\le\varepsilon_{\mathrm{sep}}$ for every opposite-class representative. 
The solid and dashed connections illustrate coverage and separation, respectively.
(b) In this illustration, both classes are evaluated using the same set of blue representatives. 
The score $s(x)=M_+^{-1}\sum_{j=1}^{M_+}k(x,a_{+,j})$ satisfies $s(x_-)\le\varepsilon_{\mathrm{sep}}$ by separation and $s(x_+)\ge\kappa_+/M_+$ by coverage and kernel nonnegativity. 
These bounds leave a positive gap $\eta=\kappa_+/M_+-\varepsilon_{\mathrm{sep}}$. 
The threshold $t$ at the midpoint of this gap separates the classes. 
Centering the score at $t$ and rescaling it yields a separating function $f_0(x)+b_0$ with unit margin almost surely.}
  \label{fig:coverage-separation}
\end{figure}

This section proves Theorem~1.
Theorem~1 extends the analysis of Ref.~\cite{Gentinetta2024QSVMComplexity} in the sense that it also accounts for generalization.
Reference~\cite{Gentinetta2024QSVMComplexity} bounds the number of measurement shots needed to solve the SVM to a given accuracy relative to the solution with the exact kernel; Theorem~1 combines such shot-noise bounds with a generalization bound to guarantee accurate classification of unseen quantum data.
Under the coverage and separation conditions, there exists a unit-margin separator with a bounded norm in the associated reproducing kernel Hilbert space (RKHS) (Sec.~\ref{app:margin-certificate}).
For a fixed classification problem and kernel, both this norm bound and the margin $\eta$ are independent of the number of training states.
Therefore, the regularization parameter $\lambda_{\mathrm{svm}}$ can be chosen from the target misclassification probability and the margin $\eta$ alone, independently of the number $N$ of training states.
The total numbers of measurements then scale as $O(N^4\log N)$ for training and $O(N\log N)$ for each prediction.
For comparison, Ref.~\cite{Gentinetta2024QSVMComplexity} analyzes the accuracy of solving the SVM for a fixed regularization constant multiplying an unnormalized sum of squared hinge losses, corresponding to $\lambda_{\mathrm{svm}}\propto1/N$ in our notation, and derives $O(N^{4.67})$ for training; for each prediction, it estimates $O(N_{\mathrm{SV}}^2)$ in terms of the number $N_{\mathrm{SV}}\le N$ of support vectors, i.e., up to $O(N^2)$, without analyzing prediction further.
Its formal statement is given in Theorem~\ref{thm:formal-main-polynomial-learning}.
The lemmas below establish the intermediate results used in its proof.
Figure~\ref{fig:coverage-separation} illustrates the distributional assumptions of Theorem~\ref{thm:formal-main-polynomial-learning} and the construction of a separating function under these assumptions.

\subsection{Training with finite measurements}
\label{app:training-finite-measurements}

This subsection bounds the expected hinge loss of the score obtained by training with finite measurements.
Section~\ref{app:generic-approx-gram-svm} derives the bound for a general approximate Gram matrix, Sec.~\ref{app:training-finite-shot-gram} controls the error of the Gram matrix estimated from finite measurements, and Sec.~\ref{app:margin-certificate} constructs the unit-margin separator required in Sec.~\ref{app:generic-approx-gram-svm}.

\subsubsection{Expected hinge loss with an approximate Gram matrix}
\label{app:generic-approx-gram-svm}

This subsection derives a bound on the expected hinge loss of an SVM trained with an approximate Gram matrix.

Let $\mathcal X$ be a measurable space and $P$ a probability distribution on $\mathcal X\times\{-1,+1\}$.
For an integer $N\ge1$, let $D_N:=((X_i,Y_i))_{i=1}^N$ be an i.i.d. training sample of size $N$ from $P$.
Let $k:\mathcal X\times\mathcal X\to\mathbb R$ be a measurable symmetric positive semidefinite (PSD) kernel with a separable RKHS $\mathcal H_k$, and assume
\begin{equation*}
    k(x,x)\le1
    \qquad
    \text{for all }x\in\mathcal X.
\end{equation*}
For a measurable score $g:\mathcal X\to\mathbb R$, define its expected hinge loss by
\begin{equation*}
    R_{\mathrm{hinge}}(g)
    :=
    \mathbb E_{(X,Y)\sim P}
    \left[
        \max\{1-Yg(X),0\}
    \right].
\end{equation*}
Fix $\lambda_{\mathrm{svm}}>0$. The SVM objective is
\begin{equation*}
    \mathcal J(f,b)
    :=
    \frac{\lambda_{\mathrm{svm}}}{2}\|f\|_{\mathcal H_k}^2
    +
    \frac1N\sum_{i=1}^N
    \max\{1-Y_i(f(X_i)+b),0\},
    \qquad
    f\in\mathcal H_k,\ b\in\mathbb R.
\end{equation*}
Define the exact Gram matrix on $D_N$ by
\begin{equation*}
    G_{ij}:=k(X_i,X_j).
\end{equation*}
For $f=\sum_{i=1}^N\theta_ik(\,\cdot\,,X_i)$ with $\theta\in\mathbb R^N$, we have
\begin{equation}
    \mathcal J(f,b)
    =
    \frac{\lambda_{\mathrm{svm}}}{2}\theta^\top G\theta
    +
    \frac1N\sum_{i=1}^N
    \max\left\{
            1-Y_i\bigl((G\theta)_i+b\bigr),0
    \right\}
    =:
    J(\theta,b;G),
    \label{eq:generic-svm-objectives}
\end{equation}
and likewise $J(\theta,b;M)$ for any symmetric $M\in\mathbb R^{N\times N}$.
Define the set
\begin{equation}
    \Theta_{\lambda_{\mathrm{svm}}}
    :=
    \left\{
        \theta\in\mathbb R^N:
    0\le Y_i\theta_i\le\frac{1}{\lambda_{\mathrm{svm}}N}
        \ \text{for all }i
    \right\}.
    \label{eq:generic-theta-box}
\end{equation}

The following lemma follows from standard representer and uniform RKHS generalization results~\cite{Kanamori2015StatisticalLearningTheory}.

\begin{lemma}[Expected hinge-loss bound under entrywise Gram matrix error]
    \label{lem:generic-approx-gram-svm}
    Fix constants $R\ge0$, $B\ge0$, $\xi\ge0$, and $\delta\in(0,1)$.
    Suppose that there exist $f_0\in\mathcal H_k$ and
    $b_0\in[-B,B]$ with $\|f_0\|_{\mathcal H_k}\le R$ such that
    \begin{equation}
        Y\bigl(f_0(X)+b_0\bigr)\ge1
        \qquad
        \text{almost surely for }(X,Y)\sim P.
        \label{eq:generic-unit-margin-separator}
    \end{equation}
    For a symmetric PSD matrix $G_{\mathrm{imp}}\in\mathbb R^{N\times N}$, let $(\widetilde\theta_N,\widetilde b_N)$ be any measurable minimizer
    of $J(\,\cdot\,;G_{\mathrm{imp}})$ over
    $\Theta_{\lambda_{\mathrm{svm}}}\times[-B,B]$:
    \begin{equation}
        (\widetilde\theta_N,\widetilde b_N)
        \in
        \operatorname*{arg\,min}_{\substack{
            \theta\in\Theta_{\lambda_{\mathrm{svm}}}\\
            |b|\le B
        }}
        J(\theta,b;G_{\mathrm{imp}}).
        \label{eq:generic-svm-minimizer}
    \end{equation}
    Using the minimizer in Eq.~\eqref{eq:generic-svm-minimizer}, define the
    score evaluated with the exact kernel $k$ by
    \begin{equation}
        g_N(x)
        :=
        \sum_{i=1}^N
        \widetilde\theta_{N,i}k(x,X_i)
        +
        \widetilde b_N.
        \label{eq:generic-reference-score}
    \end{equation}
    Set
    \begin{equation}
        R_\xi
        :=
        \left(
            R^2
            +
            \frac{6\xi}{\lambda_{\mathrm{svm}}^2}
        \right)^{1/2},
        \label{eq:generic-output-radius}
    \end{equation}
    and define
    \begin{equation}
        \Delta_N(R_\xi,B,\delta)
        :=
        \frac{2(R_\xi+B)}{\sqrt N}
        +
        (1+R_\xi+B)
        \sqrt{
            \frac{\log(1/\delta)}{2N}
        }.
        \label{eq:generic-deviation}
    \end{equation}
    Then, with probability at least $1-\delta$ over
    the random draw of $D_N$, every symmetric PSD matrix $G_{\mathrm{imp}}$ with $\max_{i,j}|(G_{\mathrm{imp}})_{ij}-G_{ij}|\le\xi$ and every corresponding
    minimizer in
    Eq.~\eqref{eq:generic-svm-minimizer} satisfy
    \begin{equation}
        R_{\mathrm{hinge}}(g_N)
        \le
        \frac{\lambda_{\mathrm{svm}}R^2}{2}
        +
        \frac{3\xi}{\lambda_{\mathrm{svm}}}
        +
        \Delta_N(R_\xi,B,\delta).
        \label{eq:generic-population-risk-bound}
    \end{equation}
\end{lemma}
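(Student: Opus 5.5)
The plan is to run the standard three-step argument: an empirical comparison, a perturbation bound that is uniform over the box $\Theta_{\lambda_{\mathrm{svm}}}$, and a uniform Rademacher generalization bound over an RKHS ball. The key observation is that every quantity depending on $G_{\mathrm{imp}}$ is controlled deterministically once $\max_{i,j}|(G_{\mathrm{imp}})_{ij}-G_{ij}|\le\xi$ holds. The only probabilistic event is then a uniform deviation bound over a fixed function class that does not depend on $G_{\mathrm{imp}}$. This is why a single event of probability $1-\delta$ covers every admissible $G_{\mathrm{imp}}$ and every minimizer.

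\emph{Step 1 (comparator in the box).} I will show that
\begin{equation*}
\min_{\theta\in\Theta_{\lambda_{\mathrm{svm}}},\,|b|\le B} J(\theta,b;G)\le \frac{\lambda_{\mathrm{svm}}R^2}{2}.
\end{equation*}
Fix $b=b_0$ and minimize $\mathcal J(f,b_0)$ over $f\in\mathcal H_k$. This is the SVM without the bias equality constraint. By the representer theorem and the KKT/dual conditions for the hinge loss, a minimizer has the form $f^\ast=\sum_i\theta^\ast_ik(\cdot,X_i)$ with $\theta^\ast_i=Y_i\alpha_i$, where $\alpha_i\in[0,1/(\lambda_{\mathrm{svm}}N)]$. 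Hence $\theta^\ast\in\Theta_{\lambda_{\mathrm{svm}}}$. Therefore
\begin{equation*}
J(\theta^\ast,b_0;G)=\mathcal J(f^\ast,b_0)\le\mathcal J(f_0,b_0)\le\frac{\lambda_{\mathrm{svm}}R^2}{2},
\end{equation*}
where the last inequality holds because Eq.~\eqref{eq:generic-unit-margin-separator} makes the empirical hinge loss of $f_0+b_0$ vanish almost surely. I expect this step to be the main obstacle: one must verify carefully that the box constraint really emerges from the subgradient optimality condition when $b$ is held fixed. The first thing I will try is to write $0\in\lambda_{\mathrm{svm}}f^\ast+\frac1N\sum_i\partial\ell_i\,(-Y_i)k(\cdot,X_i)$, where $\partial\ell_i\subseteq[0,1]$ is the subdifferential of the hinge at the $i$th point.

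\emph{Step 2 (perturbation and norm control).} For $\theta\in\Theta_{\lambda_{\mathrm{svm}}}$ we have $\|\theta\|_1\le1/\lambda_{\mathrm{svm}}$. With $E=G_{\mathrm{imp}}-G$, this gives
\begin{equation*}
|\theta^\top E\theta|\le\xi\|\theta\|_1^2\le\xi/\lambda_{\mathrm{svm}}^2
\quad\text{and}\quad
|(E\theta)_i|\le\xi/\lambda_{\mathrm{svm}}.
\end{equation*}
Since the hinge is $1$-Lipschitz, it follows that $|J(\theta,b;G_{\mathrm{imp}})-J(\theta,b;G)|\le 3\xi/(2\lambda_{\mathrm{svm}})$ on the whole feasible set. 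Chaining $J(\tilde\theta,\tilde b;G)\le J(\tilde\theta,\tilde b;G_{\mathrm{imp}})+\tfrac{3\xi}{2\lambda_{\mathrm{svm}}}\le J(\theta^\ast,b_0;G_{\mathrm{imp}})+\tfrac{3\xi}{2\lambda_{\mathrm{svm}}}\le J(\theta^\ast,b_0;G)+\tfrac{3\xi}{\lambda_{\mathrm{svm}}}$ yields
\begin{equation*}
J(\tilde\theta_N,\tilde b_N;G)\le\frac{\lambda_{\mathrm{svm}}R^2}{2}+\frac{3\xi}{\lambda_{\mathrm{svm}}}.
\end{equation*}
Two consequences follow because both terms of $J$ are nonnegative. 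First, the empirical hinge risk of $g_N$ is bounded by the same quantity. Second, $\frac{\lambda_{\mathrm{svm}}}{2}\tilde\theta_N^\top G\tilde\theta_N$ is bounded by it too, so $\|g_N-\tilde b_N\|_{\mathcal H_k}^2\le R^2+6\xi/\lambda_{\mathrm{svm}}^2=R_\xi^2$.

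\emph{Step 3 (uniform generalization).} Consider the fixed class $\mathcal F=\{f+b:\|f\|_{\mathcal H_k}\le R_\xi,\ |b|\le B\}$. Since $k(x,x)\le1$, we have $|f(x)|\le R_\xi$, so the hinge loss on $\mathcal F$ takes values in $[0,1+R_\xi+B]$. The empirical Rademacher complexity of $\mathcal F$ is at most $(R_\xi+B)/\sqrt N$. Talagrand's contraction lemma (the hinge is $1$-Lipschitz), symmetrization (which contributes the factor $2$), and McDiarmid's inequality together give the following. With probability at least $1-\delta$, every $g\in\mathcal F$ has population hinge risk at most its empirical hinge risk plus $\Delta_N(R_\xi,B,\delta)$. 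This event depends only on $D_N$, and Step 2 places $g_N$ in $\mathcal F$ for every admissible $G_{\mathrm{imp}}$ and every minimizer. Combining the two bounds yields Eq.~\eqref{eq:generic-population-risk-bound}.
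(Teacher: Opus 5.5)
Your proposal is correct and follows essentially the same route as the paper: the uniform $3\xi/(2\lambda_{\mathrm{svm}})$ perturbation bound on $\Theta_{\lambda_{\mathrm{svm}}}\times[-B,B]$, a fixed-$b$ KKT/subgradient argument showing the exact-kernel optimum has coefficients in the box (so the comparator $(f_0,b_0)$ bounds it by $\lambda_{\mathrm{svm}}R^2/2$), splitting the two nonnegative terms to get both the empirical hinge bound and $\|g_N-\widetilde b_N\|_{\mathcal H_k}\le R_\xi$, and a uniform Rademacher bound over a $G_{\mathrm{imp}}$-independent class. The only cosmetic difference is that you chain through the specific feasible point $(\theta^\ast,b_0)$, whereas the paper chains through the infimum identity $\inf_{\Theta_{\lambda_{\mathrm{svm}}}\times[-B,B]}J(\cdot;G)=\inf_{f,|b|\le B}\mathcal J(f,b)$; the two are equivalent for this purpose.
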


To prove Lemma~\ref{lem:generic-approx-gram-svm}, we use the following standard results from Ref.~\cite{Kanamori2015StatisticalLearningTheory}, restated in our notation.

\begin{lemma}[Representer theorem~{\cite[Theorem~4.10]{Kanamori2015StatisticalLearningTheory}}]
    \label{lem:representer}
    Let $k$ be a PSD kernel on $\mathcal X$ with RKHS $\mathcal H_k$, and let $(x_1,y_1),\ldots,(x_N,y_N)\in\mathcal X\times\{-1,+1\}$.
    Let $\mathcal L:\mathbb R^N\to\mathbb R$ be any function, which may depend on $(x_i,y_i)_{i=1}^N$, and let $\Psi:[0,\infty)\to\mathbb R$ be nondecreasing.
    Consider minimizing
    \begin{equation*}
        \mathcal L\bigl(f(x_1)+b,\ldots,f(x_N)+b\bigr)
        +
        \Psi\bigl(\|f\|_{\mathcal H_k}^2\bigr)
    \end{equation*}
    over $f\in\mathcal H_k$ and $b\in\mathbb R$.
    Then there exists an optimal solution in which $f$ has the form
    \begin{equation*}
        f=\sum_{i=1}^N\alpha_ik(\,\cdot\,,x_i),
        \qquad
        \alpha\in\mathbb R^N.
    \end{equation*}
\end{lemma}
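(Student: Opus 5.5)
The plan is the classical orthogonal-projection argument in the RKHS. Let $V:=\operatorname{span}\{k(\,\cdot\,,x_1),\ldots,k(\,\cdot\,,x_N)\}\subseteq\mathcal H_k$. Since $V$ is finite-dimensional, it is a closed subspace, so $\mathcal H_k=V\oplus V^\perp$. Every $f\in\mathcal H_k$ then decomposes uniquely as $f=f_\parallel+f_\perp$ with $f_\parallel\in V$ and $f_\perp\in V^\perp$. Because $f_\parallel\in V$, we can write $f_\parallel=\sum_{i=1}^N\alpha_ik(\,\cdot\,,x_i)$ for some $\alpha\in\mathbb R^N$. This $\alpha$ need not be unique if the $k(\,\cdot\,,x_i)$ are linearly dependent, but the representation is all we need.

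The first key step uses the reproducing property $f(x_i)=\langle f,k(\,\cdot\,,x_i)\rangle_{\mathcal H_k}$. Since $f_\perp\perp k(\,\cdot\,,x_i)$ for every $i$, we get $f(x_i)=f_\parallel(x_i)$. Hence, for any $b\in\mathbb R$, the data-fitting term is unchanged:
\begin{equation*}
\mathcal L\bigl(f(x_1)+b,\ldots,f(x_N)+b\bigr)=\mathcal L\bigl(f_\parallel(x_1)+b,\ldots,f_\parallel(x_N)+b\bigr).
\end{equation*}
Nothing is assumed about $\mathcal L$ here; it enters only through the values at the sample points.

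The second key step uses Pythagoras: $\|f\|_{\mathcal H_k}^2=\|f_\parallel\|_{\mathcal H_k}^2+\|f_\perp\|_{\mathcal H_k}^2\ge\|f_\parallel\|_{\mathcal H_k}^2$. Since $\Psi$ is nondecreasing, $\Psi(\|f_\parallel\|^2)\le\Psi(\|f\|^2)$. Combining the two steps, the objective at $(f_\parallel,b)$ is at most the objective at $(f,b)$. So if $(f^\star,b^\star)$ is any optimal solution, then $(f^\star_\parallel,b^\star)$ is also optimal and has the claimed form.

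The only delicate point is how to read ``there exists an optimal solution''. The argument shows that any minimizer can be replaced by one in $V$ with the same $b$. It does not by itself guarantee that a minimizer exists, since $\Psi$ and $\mathcal L$ are arbitrary. I would state the lemma in this conditional sense: whenever the problem has an optimal solution, it has one of the stated form. Equivalently, the infimum over $\mathcal H_k\times\mathbb R$ equals the infimum over $V\times\mathbb R$, which is exactly what the later proof of Lemma~\ref{lem:generic-approx-gram-svm} needs.

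For the hinge-loss objective $\mathcal J$ used there, existence can be checked separately when required. In that setting the problem reduces to minimizing the continuous function $J(\theta,b;G)$ over the finite-dimensional compact set $\Theta_{\lambda_{\mathrm{svm}}}\times[-B,B]$, so a minimizer exists and the reduction to the finite-dimensional problem Eq.~\eqref{eq:generic-svm-objectives} is justified.
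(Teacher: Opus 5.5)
Your orthogonal-projection argument is correct and is the standard proof behind the cited Theorem~4.10 (the paper gives no proof of its own, only the reference). You are also right that, for arbitrary $\mathcal L$ and $\Psi$, it yields a minimizer of the stated form only when some minimizer exists; unconditionally, it shows that the infima over $\mathcal H_k\times\mathbb R$ and $V\times\mathbb R$ coincide.

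One caveat on your closing remark. The restriction to the box $\Theta_{\lambda_{\mathrm{svm}}}$ does not follow from the representer theorem, which only gives $\theta\in\mathbb R^N$. It comes from the dual/KKT conditions of Lemma~\ref{lem:svm-dual}, which the paper invokes for that purpose, so compactness of $\Theta_{\lambda_{\mathrm{svm}}}\times[-B,B]$ cannot by itself justify the reduction. Existence of a minimizer for the hinge-loss objective follows instead from continuity and coercivity of $f\mapsto\mathcal J(f,b)$ on the finite-dimensional space $V$.
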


\begin{lemma}[Dual problem and optimality conditions of the SVM~{\cite[Sec.~5.3.1, pp.~82--83]{Kanamori2015StatisticalLearningTheory}}]
    \label{lem:svm-dual}
    Let $k$, $\mathcal H_k$, and $(x_i,y_i)_{i=1}^N$ be as in Lemma~\ref{lem:representer}, let $C>0$, and set $K_{ij}:=k(x_i,x_j)$.
    By Lemma~\ref{lem:representer}, minimizing
    \begin{equation*}
        C\sum_{i=1}^N\max\{1-y_i(f(x_i)+b),0\}
        +
        \frac12\|f\|_{\mathcal H_k}^2
    \end{equation*}
    over $f\in\mathcal H_k$ and $b\in\mathbb R$ reduces to the problem
    \begin{equation*}
    \begin{aligned}
        &\min_{\alpha,b,\zeta}\ C\sum_{i=1}^N\zeta_i+\frac12\sum_{i,j}\alpha_i\alpha_jK_{ij}\\
        &\text{subject to}\ \ \zeta_i\ge0,\ \ \zeta_i\ge1-y_i\Bigl(\sum_{j=1}^N\alpha_jK_{ji}+b\Bigr),\ \ i=1,\ldots,N,
    \end{aligned}
    \end{equation*}
    over $\alpha\in\mathbb R^N$, $b\in\mathbb R$, and $\zeta\in\mathbb R^N$.
    Since $K$ is symmetric PSD, this is a convex optimization problem.
    Its dual problem is
    \begin{equation*}
    \begin{aligned}
        &\max_{\gamma\in\mathbb R^N}\ -\frac12\sum_{i,j}\gamma_i\gamma_jy_iy_jK_{ij}+\sum_{i=1}^N\gamma_i\\
        &\text{subject to}\ \ \sum_{i=1}^N\gamma_iy_i=0,\ \ 0\le\gamma_i\le C,\ \ i=1,\ldots,N,
    \end{aligned}
    \end{equation*}
    and optimal primal and dual solutions satisfy
    \begin{equation*}
        \sum_{j=1}^N\alpha_jK_{ji}
        =
        \sum_{j=1}^N\gamma_jy_jK_{ji},
        \qquad
        \sum_{i=1}^N\gamma_iy_i=0,
        \qquad
        0\le\gamma_i\le C,
        \qquad
        i=1,\ldots,N.
    \end{equation*}
\end{lemma}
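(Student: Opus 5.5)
The reduction to the finite-dimensional problem is already supplied by Lemma~\ref{lem:representer}. Take $\mathcal L(u_1,\ldots,u_N)=C\sum_i\max\{1-y_iu_i,0\}$ and $\Psi(t)=t/2$, which is nondecreasing. Then some minimizer has $f=\sum_j\alpha_jk(\cdot,x_j)$, for which $f(x_i)=\sum_j\alpha_jK_{ji}$ and $\|f\|_{\mathcal H_k}^2=\alpha^\top K\alpha$. Introducing slack variables $\zeta_i$ gives the stated epigraph reformulation. At any optimum $\zeta_i=\max\{1-y_i(\cdots),0\}$, so the two problems have the same value and the same $(\alpha,b)$-solutions. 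The objective $C\mathbf 1^\top\zeta+\tfrac12\alpha^\top K\alpha$ is convex because $K$ is PSD, and all constraints are affine. This gives the convexity claim.

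For the dual, I would form the Lagrangian with multipliers $\gamma_i\ge0$ for the margin constraints $1-y_i((K\alpha)_i+b)-\zeta_i\le0$ and $\mu_i\ge0$ for $-\zeta_i\le0$:
\begin{equation*}
L=C\sum_i\zeta_i+\tfrac12\alpha^\top K\alpha+\sum_i\gamma_i\bigl(1-y_i((K\alpha)_i+b)-\zeta_i\bigr)-\sum_i\mu_i\zeta_i .
\end{equation*}
The dual function is obtained by minimizing $L$ over the primal variables, one block at a time.
\begin{itemize}
\item Minimizing over $\zeta$ gives $-\infty$ unless $C-\gamma_i-\mu_i=0$. Eliminating $\mu_i\ge0$, this becomes exactly $0\le\gamma_i\le C$.
\item Minimizing over $b$ gives $-\infty$ unless $\sum_i\gamma_iy_i=0$.
\item The remaining $\alpha$-dependence is $\tfrac12\alpha^\top K\alpha-(\gamma\circ y)^\top K\alpha$. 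This linear term lies in the range of $K$, so the quadratic is bounded below. Its minimizers are exactly the $\alpha$ with $K\alpha=K(\gamma\circ y)$, and the minimum value is $-\tfrac12(\gamma\circ y)^\top K(\gamma\circ y)$.
\end{itemize}
Collecting the pieces, the dual function equals $\sum_i\gamma_i-\tfrac12\sum_{i,j}\gamma_i\gamma_jy_iy_jK_{ij}$ on the feasible set $\{\sum_i\gamma_iy_i=0,\ 0\le\gamma_i\le C\}$. This is the stated dual.

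To get the optimality conditions, I would establish strong duality with dual attainment. All constraints are affine and the primal is feasible: take $\alpha=0$, $b=0$, $\zeta_i=1$. So the weak (affine) form of Slater's condition applies. The primal is also bounded below by $0$, and by Lemma~\ref{lem:representer} its optimum is attained. Hence there is no duality gap and the KKT conditions hold for any primal–dual optimal pair. Stationarity in $\alpha$ gives $\sum_j\alpha_jK_{ji}=\sum_j\gamma_jy_jK_{ji}$. Stationarity in $b$ gives $\sum_i\gamma_iy_i=0$. Dual feasibility together with stationarity in $\zeta$ gives $0\le\gamma_i\le C$. These are exactly the displayed relations.

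The only delicate point is that $K$ may be singular. In that case $\alpha$ is not unique, and the stationarity condition can only be stated as $K\alpha=K(\gamma\circ y)$, not $\alpha=\gamma\circ y$. I would handle this as above: decompose $\alpha$ along $\ker K$ and its orthogonal complement to justify both the boundedness of the $\alpha$-minimization and the form of the condition. This also shows that $f$ itself is unique, since $f$ depends on $\alpha$ only through $K\alpha$. The rest is routine convex duality.
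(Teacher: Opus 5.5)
Your proposal is correct. It is the standard slack-variable Lagrangian/KKT derivation, which the paper does not write out but imports from the cited textbook (Sec.~5.3.1), and it correctly handles a possibly singular $K$ through $K\alpha=K(\gamma\circ y)$ rather than $\alpha=\gamma\circ y$. Your closing observation that $f$ depends on $\alpha$ only through $K\alpha$ (so $\alpha=\gamma\circ y$ is itself a primal optimum) is exactly what the paper later relies on when it writes $f_b=\sum_i\gamma_iY_ik(\,\cdot\,,X_i)$ in the proof of Lemma~\ref{lem:generic-approx-gram-svm}.
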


For a set $\mathcal F$ of real-valued functions on a set $\mathcal Z$ and points $S=(z_1,\ldots,z_N)\in\mathcal Z^N$, the empirical Rademacher complexity is
\begin{equation*}
    \widehat{\mathfrak R}_S(\mathcal F)
    :=
    \frac1N
    \mathbb E_\varsigma\Bigl[
        \sup_{g\in\mathcal F}\sum_{i=1}^N\varsigma_ig(z_i)
    \Bigr],
\end{equation*}
where $\varsigma_1,\ldots,\varsigma_N$ are independent and uniformly distributed on $\{-1,+1\}$.
When $S=(Z_1,\ldots,Z_N)$ consists of i.i.d.\ random variables, the expected Rademacher complexity is
$\mathfrak R_N(\mathcal F):=\mathbb E_S[\widehat{\mathfrak R}_S(\mathcal F)]$~\cite[Definitions~2.4, and~2.5]{Kanamori2015StatisticalLearningTheory}.

\begin{lemma}[Properties of Rademacher complexity~{\cite[Theorem~2.6(4),(5)]{Kanamori2015StatisticalLearningTheory}}]
    \label{lem:rademacher-properties}
    Let $\mathcal F,\mathcal F_1,\ldots,\mathcal F_n$ be sets of real-valued functions on $\mathcal Z$, and let $S\in\mathcal Z^N$.
    \begin{enumerate}
        \item[(i)] (Talagrand's lemma)
        If $\phi:\mathbb R\to\mathbb R$ is Lipschitz continuous with Lipschitz constant $L_\phi$, then
        \begin{equation*}
            \widehat{\mathfrak R}_S(\phi\circ\mathcal F)
            \le
            L_\phi\widehat{\mathfrak R}_S(\mathcal F),
        \end{equation*}
        where $\phi\circ\mathcal F:=\{z\mapsto\phi(g(z))\mid g\in\mathcal F\}$.
        \item[(ii)]
        $\widehat{\mathfrak R}_S\bigl(\sum_{l=1}^n\mathcal F_l\bigr)\le\sum_{l=1}^n\widehat{\mathfrak R}_S(\mathcal F_l)$, where
        $\sum_{l=1}^n\mathcal F_l:=\{\sum_{l=1}^ng_l\mid g_l\in\mathcal F_l,\ l=1,\ldots,n\}$.
    \end{enumerate}
    Taking the expectation over $S$ shows that the same inequalities hold for $\mathfrak R_N$.
\end{lemma}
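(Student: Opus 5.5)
Part (ii) follows from linearity and the fact that a supremum of a sum is at most the sum of suprema. Part (i) is the standard one-coordinate-at-a-time (peeling) argument for Talagrand's contraction lemma. Throughout I assume the suprema involved are finite, or work with $[-\infty,+\infty]$-valued quantities, which does not affect the inequalities.

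\emph{Part (ii).} Fix $\varsigma\in\{-1,+1\}^N$. Every element of $\sum_{l=1}^n\mathcal F_l$ has the form $g=\sum_l g_l$ with $g_l\in\mathcal F_l$, so
\begin{equation*}
\sum_{i=1}^N\varsigma_i g(z_i)=\sum_{l=1}^n\sum_{i=1}^N\varsigma_i g_l(z_i)\le\sum_{l=1}^n\sup_{h\in\mathcal F_l}\sum_{i=1}^N\varsigma_i h(z_i).
\end{equation*}
Taking the supremum over $(g_1,\ldots,g_n)$, then the expectation over $\varsigma$, and dividing by $N$ gives the claim.

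\emph{Part (i).} I will show that $\varphi_i:=\phi$ can be replaced by $L_\phi\,\mathrm{id}$ one coordinate at a time without decreasing $\mathbb E_\varsigma\sup$. Fix $k\in\{1,\ldots,N\}$, and condition on all $\varsigma_i$ with $i\neq k$. For each $g\in\mathcal F$, let $A(g)$ collect the terms $i\neq k$, whichever of $\phi(g(z_i))$ or $L_\phi g(z_i)$ they currently are. Averaging over $\varsigma_k=\pm1$ gives
\begin{equation*}
\mathbb E_{\varsigma_k}\sup_{g}\bigl[A(g)+\varsigma_k\phi(g(z_k))\bigr]
=\tfrac12\sup_{g,g'\in\mathcal F}\bigl[A(g)+A(g')+\phi(g(z_k))-\phi(g'(z_k))\bigr].
\end{equation*}
The Lipschitz property bounds the bracket by
\begin{equation*}
A(g)+A(g')+L_\phi\,|g(z_k)-g'(z_k)|.
\end{equation*}
The key observation is that the joint supremum over the pair $(g,g')$ is invariant under swapping $g$ and $g'$. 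We may therefore drop the absolute value and write $L_\phi\bigl(g(z_k)-g'(z_k)\bigr)$. This quantity is exactly
\begin{equation*}
2\,\mathbb E_{\varsigma_k}\sup_g\bigl[A(g)+\varsigma_k L_\phi g(z_k)\bigr].
\end{equation*}
Taking expectation over the remaining $\varsigma_i$ and iterating over $k=1,\ldots,N$ yields
\begin{equation*}
N\widehat{\mathfrak R}_S(\phi\circ\mathcal F)\le L_\phi\,\mathbb E_\varsigma\sup_g\sum_i\varsigma_i g(z_i)=L_\phi N\widehat{\mathfrak R}_S(\mathcal F).
\end{equation*}
This argument does not need $\phi(0)=0$: additive constants never enter, because only differences $\phi(g(z_k))-\phi(g'(z_k))$ appear.

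The only delicate step is the symmetrization that removes the absolute value. I would carry it out carefully by writing the two-point average explicitly as a supremum over ordered pairs, as above.

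Finally, both inequalities hold for every fixed $S$. Taking $\mathbb E_S$ on both sides, which preserves inequalities and is linear, gives the corresponding statements for $\mathfrak R_N$.
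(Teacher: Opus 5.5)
Your proof is correct. The paper gives no argument of its own here and simply imports both facts from Kanamori's textbook; your sup-subadditivity argument for (ii) and your one-coordinate-at-a-time Ledoux--Talagrand peeling for (i), using the swap symmetry in $(g,g')$, are the standard proofs of those cited results. Your observation that $\phi(0)=0$ is not needed holds because the paper's empirical Rademacher complexity has no absolute value. This is exactly what justifies the later application to the hinge loss $u\mapsto\max\{1-u,0\}$, for which $\phi(0)=1$.
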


\begin{lemma}[Rademacher complexity of RKHS balls~{\cite[Theorem~4.11 and Corollary~4.12]{Kanamori2015StatisticalLearningTheory}}]
    \label{lem:rkhs-rademacher}
    Let $k$ be a PSD kernel on $\mathcal X$ with RKHS $\mathcal H_k$, let $r>0$, and let $\mathcal F\subset\{f\in\mathcal H_k\mid\|f\|_{\mathcal H_k}\le r\}$.
    For every $S=(x_1,\ldots,x_N)\in\mathcal X^N$,
    \begin{equation*}
        \widehat{\mathfrak R}_S(\mathcal F)
        \le
        \frac rN\Bigl(\sum_{i=1}^Nk(x_i,x_i)\Bigr)^{1/2}.
    \end{equation*}
    If, in addition, $\sup_{x\in\mathcal X}k(x,x)\le\Lambda^2$, then
    \begin{equation*}
        \widehat{\mathfrak R}_S(\mathcal F)\le\frac{r\Lambda}{\sqrt N},
        \qquad
        \mathfrak R_N(\mathcal F)\le\frac{r\Lambda}{\sqrt N}.
    \end{equation*}
\end{lemma}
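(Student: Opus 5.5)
The plan is the standard argument via the reproducing property, Cauchy--Schwarz in $\mathcal H_k$, and Jensen's inequality. Since $\mathcal F$ is contained in the ball of radius $r$, enlarging the supremum to the whole ball only increases $\widehat{\mathfrak R}_S$. So it suffices to prove the bound for $\mathcal F=\{f\in\mathcal H_k:\|f\|_{\mathcal H_k}\le r\}$.

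\emph{Step 1: rewrite the sum as an inner product.} Fix $S=(x_1,\ldots,x_N)$ and a sign vector $\varsigma$. By the reproducing property $f(x_i)=\langle f,k(\cdot,x_i)\rangle_{\mathcal H_k}$, so
\begin{equation*}
    \sum_{i=1}^N\varsigma_if(x_i)=\Bigl\langle f,\sum_{i=1}^N\varsigma_ik(\cdot,x_i)\Bigr\rangle_{\mathcal H_k}.
\end{equation*}

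\emph{Step 2: take the supremum over the ball.} By Cauchy--Schwarz, the supremum over $\|f\|_{\mathcal H_k}\le r$ equals exactly
\begin{equation*}
    r\Bigl\|\sum_{i=1}^N\varsigma_ik(\cdot,x_i)\Bigr\|_{\mathcal H_k}.
\end{equation*}
This removes the supremum and leaves only the expectation over $\varsigma$.

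\emph{Step 3: bound the expectation of the norm.} Since the square root is concave, Jensen's inequality gives
\begin{equation*}
    \mathbb E_\varsigma\Bigl\|\sum_i\varsigma_ik(\cdot,x_i)\Bigr\|_{\mathcal H_k}\le\Bigl(\mathbb E_\varsigma\sum_{i,j}\varsigma_i\varsigma_jk(x_i,x_j)\Bigr)^{1/2}.
\end{equation*}
The Rademacher variables are independent, mean zero, and satisfy $\varsigma_i^2=1$, so $\mathbb E[\varsigma_i\varsigma_j]=\delta_{ij}$. The right-hand side therefore collapses to $\bigl(\sum_ik(x_i,x_i)\bigr)^{1/2}$. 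Dividing by $N$ yields the first inequality.

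\emph{Step 4: the uniform bounds.} If $\sup_x k(x,x)\le\Lambda^2$, then $\sum_ik(x_i,x_i)\le N\Lambda^2$, so $\widehat{\mathfrak R}_S(\mathcal F)\le r\Lambda/\sqrt N$ for every $S$. Taking the expectation over $S$ gives the same bound for $\mathfrak R_N(\mathcal F)$.

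\emph{Main obstacle.} The only delicate point is measurability of the supremum inside the expectation, which is needed to make sense of $\widehat{\mathfrak R}_S$ and $\mathfrak R_N$. This is handled by Step 2, which gives the supremum in closed form as a continuous function of $\varsigma$ and of the $k(x_i,x_j)$. Together with the paper's assumptions that $k$ is measurable and $\mathcal H_k$ is separable, this makes the expectation over $S$ well defined.
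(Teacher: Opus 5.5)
Your proof is correct and is the standard argument (reproducing property, Cauchy--Schwarz over the ball, then Jensen with $\mathbb E[\varsigma_i\varsigma_j]=\delta_{ij}$) behind the textbook result the paper cites (Theorem~4.11 and Corollary~4.12 of Kanamori et al.); the paper gives no separate proof of its own. One small correction to your closing measurability remark: the $\varsigma$-expectation is a finite average and needs no measurability, and Step~2's closed form holds for the full ball rather than for an arbitrary subset $\mathcal F$, but since your bound holds for every $S$, this does not affect the conclusion for $\mathfrak R_N(\mathcal F)$.
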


\begin{lemma}[Uniform law of large numbers~{\cite[Theorem~2.7]{Kanamori2015StatisticalLearningTheory}}]
    \label{lem:ulln}
    Let $\mathcal F\subset\{g:\mathcal Z\to[c_-,c_+]\}$, and let $Z_1,\ldots,Z_N$ be independent random variables with the same distribution as a random variable $Z$.
    Then, for every $\delta\in(0,1)$, with probability at least $1-\delta$,
    \begin{equation*}
        \sup_{g\in\mathcal F}
        \Bigl\{
            \mathbb E[g(Z)]
            -
            \frac1N\sum_{i=1}^Ng(Z_i)
        \Bigr\}
        \le
        2\mathfrak R_N(\mathcal F)
        +
        (c_+-c_-)\sqrt{\frac{\log(1/\delta)}{2N}}.
    \end{equation*}
\end{lemma}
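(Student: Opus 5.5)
The plan is the classical route: McDiarmid's bounded-differences inequality reduces the supremum to its expectation, and symmetrization with a ghost sample bounds that expectation by $2\mathfrak R_N(\mathcal F)$. I work with
\begin{equation*}
    \Phi(Z_1,\ldots,Z_N):=\sup_{g\in\mathcal F}\Bigl\{\mathbb E[g(Z)]-\frac1N\sum_{i=1}^Ng(Z_i)\Bigr\}.
\end{equation*}
If measurability of the supremum is an issue, I would handle it by restricting to countable subfamilies. This is routine and I will not belabor it.

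\emph{Step 1: concentration.} Replacing a single $Z_i$ by an independent copy $Z_i'$ changes each term $\frac1Ng(Z_i)$ by at most $(c_+-c_-)/N$, since $g$ takes values in $[c_-,c_+]$. The supremum therefore changes by at most $(c_+-c_-)/N$ as well. McDiarmid's inequality then gives
\begin{equation*}
    \Pr\bigl[\Phi-\mathbb E\Phi\ge t\bigr]\le\exp\!\left(-\frac{2t^2N}{(c_+-c_-)^2}\right).
\end{equation*}
Setting the right-hand side equal to $\delta$ gives $t=(c_+-c_-)\sqrt{\log(1/\delta)/(2N)}$. This is exactly the deviation term in Lemma~\ref{lem:ulln}.

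\emph{Step 2: symmetrization.} Let $Z_1',\ldots,Z_N'$ be an independent ghost sample. Then $\mathbb E[g(Z)]=\mathbb E'\bigl[\frac1N\sum_i g(Z_i')\bigr]$. Pulling the supremum inside the expectation over the ghost sample (Jensen) gives
\begin{equation*}
    \mathbb E\Phi\le\mathbb E_{S,S'}\sup_{g}\frac1N\sum_i\bigl(g(Z_i')-g(Z_i)\bigr).
\end{equation*}
Each pair $(Z_i,Z_i')$ is exchangeable, so inserting independent signs $\varsigma_i$ does not change the distribution of the summands. Splitting the supremum of the sum into the sum of two suprema then bounds the right-hand side by
\begin{equation*}
    \mathbb E\sup_g\frac1N\sum_i\varsigma_i g(Z_i')+\mathbb E\sup_g\frac1N\sum_i(-\varsigma_i)g(Z_i).
\end{equation*}
Since $-\varsigma_i$ has the same law as $\varsigma_i$, each of these two terms equals $\mathfrak R_N(\mathcal F)$. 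Hence $\mathbb E\Phi\le2\mathfrak R_N(\mathcal F)$.

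Combining Steps 1 and 2 yields the claim with probability at least $1-\delta$.

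The main obstacle is Step 2. The sign-insertion argument must be written carefully: the joint law of $(Z_i,Z_i')_{i}$ is invariant under swapping the $i$-th pair for any subset of indices, and this invariance is what justifies averaging over $\varsigma$. Step 1 is a direct application of McDiarmid's inequality once the bounded-difference constant is identified.
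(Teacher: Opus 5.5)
Your proof is correct: McDiarmid's inequality with bounded-difference constants $(c_+-c_-)/N$ gives exactly the deviation term, and ghost-sample symmetrization with Rademacher signs gives $\mathbb E\Phi\le 2\mathfrak R_N(\mathcal F)$ under the paper's definition of $\mathfrak R_N$ (no absolute value). The paper gives no proof of its own and simply cites the textbook of Kanamori et al.; your McDiarmid-plus-symmetrization argument is the standard proof of that result.
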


\begin{proof}[Proof of Lemma~\ref{lem:generic-approx-gram-svm}]
    Every $\theta\in\Theta_{\lambda_{\mathrm{svm}}}$ satisfies
    $\|\theta\|_1\le1/\lambda_{\mathrm{svm}}$ by Eq.~\eqref{eq:generic-theta-box}.
    Hence, for every $G_{\mathrm{imp}}$ with $\max_{i,j}|(G_{\mathrm{imp}})_{ij}-G_{ij}|\le\xi$, every such $\theta$
    satisfies
    \begin{equation*}
        \|(G_{\mathrm{imp}}-G)\theta\|_\infty
        \le
        \xi\|\theta\|_1
        \le
        \frac{\xi}{\lambda_{\mathrm{svm}}},
        \qquad
        \frac{\lambda_{\mathrm{svm}}}{2}
        |\theta^\top(G_{\mathrm{imp}}-G)\theta|
        \le
        \frac{\xi}{2\lambda_{\mathrm{svm}}}.
    \end{equation*}
    Since the hinge loss has Lipschitz constant one, these estimates imply,
    uniformly over all $(\theta,b)\in\Theta_{\lambda_{\mathrm{svm}}}\times[-B,B]$,
    \begin{equation}
        \left|
            J(\theta,b;G_{\mathrm{imp}})-J(\theta,b;G)
        \right|
        \le
        \frac{3\xi}{2\lambda_{\mathrm{svm}}}.
        \label{eq:generic-objective-perturbation}
    \end{equation}

    Since $(\widetilde\theta_N,\widetilde b_N)$ minimizes
    $J(\,\cdot\,;G_{\mathrm{imp}})$, applying
    Eq.~\eqref{eq:generic-objective-perturbation} twice gives
    \begin{equation}
    \begin{aligned}
        J(\widetilde\theta_N,\widetilde b_N;G)
        &\le
        J(\widetilde\theta_N,\widetilde b_N;G_{\mathrm{imp}})
        +
        \frac{3\xi}{2\lambda_{\mathrm{svm}}}\\
        &=
        \inf_{\substack{
            \theta\in\Theta_{\lambda_{\mathrm{svm}}}\\
            |b|\le B
        }}
        J(\theta,b;G_{\mathrm{imp}})
        +
        \frac{3\xi}{2\lambda_{\mathrm{svm}}}\\
        &\le
        \inf_{\substack{
            \theta\in\Theta_{\lambda_{\mathrm{svm}}}\\
            |b|\le B
        }}
        J(\theta,b;G)
        +
        \frac{3\xi}{\lambda_{\mathrm{svm}}}.
    \end{aligned}
        \label{eq:generic-exact-objective-bound}
    \end{equation}

    By Eq.~\eqref{eq:generic-unit-margin-separator}, $(f_0,b_0)$ has zero empirical
    hinge loss almost surely, so that
    $\mathcal J(f_0,b_0)=\lambda_{\mathrm{svm}}\|f_0\|_{\mathcal H_k}^2/2\le\lambda_{\mathrm{svm}}R^2/2$
    almost surely.
    For each fixed $b$, apply the same Lagrange-multiplier argument as in Lemma~\ref{lem:svm-dual} to $\mathcal J(\,\cdot\,,b)/\lambda_{\mathrm{svm}}$, with $C=1/(\lambda_{\mathrm{svm}}N)$.
    Since $b$ is fixed, no stationarity condition in $b$ is imposed.
    At a minimizer $f_b$, stationarity in $f$ and the slack variables gives
    \begin{equation*}
        f_b
        =
        \sum_{i=1}^N \gamma_iY_i k(\,\cdot\,,X_i),
        \qquad
        0\le\gamma_i\le C,
    \end{equation*}
    where $\gamma_i$ are the Lagrange multipliers for the margin constraints.
    Thus, the coefficients $\theta_i=\gamma_iY_i$ satisfy $\theta\in\Theta_{\lambda_{\mathrm{svm}}}$.
    Since the minimum of $\mathcal J(\,\cdot\,,b)$ over $\mathcal H_k$
    equals $\min_{\theta\in\Theta_{\lambda_{\mathrm{svm}}}}J(\theta,b;G)$
    % by Lemmas~\ref{lem:representer} and~\ref{lem:svm-dual}
    by this representation
    and Eq.~\eqref{eq:generic-svm-objectives}, we obtain
    \begin{equation*}
        \inf_{\substack{
            \theta\in\Theta_{\lambda_{\mathrm{svm}}}\\
            |b|\le B
        }}
        J(\theta,b;G)
        =
        \inf_{\substack{
            f\in\mathcal H_k\\
            |b|\le B
        }}
        \mathcal J(f,b)
        \le
        \mathcal J(f_0,b_0)
        \le
        \frac{\lambda_{\mathrm{svm}}R^2}{2}
        \qquad
        \text{almost surely}.
    \end{equation*}
    Combining this inequality with
    Eq.~\eqref{eq:generic-exact-objective-bound} gives
    $J(\widetilde\theta_N,\widetilde b_N;G)\le\lambda_{\mathrm{svm}}R^2/2+3\xi/\lambda_{\mathrm{svm}}$.
    By Eqs.~\eqref{eq:generic-svm-objectives}
    and~\eqref{eq:generic-reference-score},
    \begin{equation*}
        J(\widetilde\theta_N,\widetilde b_N;G)
        =
        \frac{\lambda_{\mathrm{svm}}}{2}
        \Bigl\|\sum_{i=1}^N\widetilde\theta_{N,i}k(\,\cdot\,,X_i)\Bigr\|_{\mathcal H_k}^2
        +
        \frac1N\sum_{i=1}^N\max\{1-Y_ig_N(X_i),0\}.
    \end{equation*}
    Since both terms are nonnegative, each is bounded by
    $\lambda_{\mathrm{svm}}R^2/2+3\xi/\lambda_{\mathrm{svm}}$, which gives
    \begin{equation*}
    \begin{aligned}
        \frac1N\sum_{i=1}^N
        \max\{1-Y_i g_N(X_i),0\}
        &\le
        \frac{\lambda_{\mathrm{svm}}R^2}{2}
        +
        \frac{3\xi}{\lambda_{\mathrm{svm}}},\\
        \left\|
            \sum_{i=1}^N
            \widetilde\theta_{N,i}k(\,\cdot\,,X_i)
        \right\|_{\mathcal H_k}^2
        &\le
        R^2+
        \frac{6\xi}{\lambda_{\mathrm{svm}}^2}
        =
        R_\xi^2.
    \end{aligned}
    \end{equation*}

    % The set of scores $x\mapsto f(x)+b$ with
    % $\|f\|_{\mathcal H_k}\le R_\xi$ and $|b|\le B$ has expected
    % Rademacher complexity at most $(R_\xi+B)/\sqrt N$
    % by Lemma~\ref{lem:rkhs-rademacher} and Lemma~\ref{lem:rademacher-properties}(ii).
    % Since $k(x,x)\le1$, each such score satisfies
    % $|f(x)+b|\le R_\xi+B$, so its hinge loss takes values in
    % $[0,1+R_\xi+B]$.
    By Lemma~\ref{lem:rkhs-rademacher} and Lemma~\ref{lem:rademacher-properties}(i)--(ii),
    \begin{equation*}
    \mathfrak R_N\left(
        \left\{
            (x,y)\mapsto\max\{1-y(f(x)+b),0\}:
            \|f\|_{\mathcal H_k}\le R_\xi,\ |b|\le B
        \right\}
    \right)
    \le \frac{R_\xi+B}{\sqrt N}.
\end{equation*}
    Since $k(x,x)\le1$, the reproducing property gives
    \begin{equation*}
        |f(x)+b| \le \|f\|_{\mathcal H_k}\sqrt{k(x,x)}+|b| \le R_\xi+B.
    \end{equation*}
    Hence, for every $y\in\{-1,+1\}$,
    \begin{equation*}
        0 \le \max\{1-y(f(x)+b),0\} \le 1+R_\xi+B.
    \end{equation*}
    
    % These two facts imply, by 
    Combining these observations, we conclude from Lemma~\ref{lem:ulln}
    that, with probability at least $1-\delta$ over
    the random draw of $D_N$, simultaneously for all such $f$ and $b$,
    \begin{equation*}
        R_{\mathrm{hinge}}(x\mapsto f(x)+b)
        \le
        \frac1N\sum_{i=1}^N
        \max\{1-Y_i(f(X_i)+b),0\}
        +
        \Delta_N(R_\xi,B,\delta).
    \end{equation*}
    For every such $G_{\mathrm{imp}}$, the norm bound above together with
    $|\widetilde b_N|\le B$ ensures that $g_N$ is among these scores.
    The uniform bound therefore applies simultaneously to every such $g_N$.
    Combining it with the empirical hinge loss bound gives Eq.~\eqref{eq:generic-population-risk-bound}.
\end{proof}

\subsubsection{Training Gram matrix from finite measurements}
\label{app:training-finite-shot-gram}

Our algorithm projects the training Gram matrix estimated from finite measurements onto PSD matrices, as is done in the experiment (Sec.~\ref{sppl:DiagramOfKernel}), so that the minimization in Eq.~\eqref{eq:generic-svm-minimizer} is a convex quadratic program.
We denote the resulting matrix by $G_{\mathrm{imp}}$ and bound its entrywise error relative to the exact Gram matrix $G$, as required to apply Lemma~\ref{lem:generic-approx-gram-svm}.
The bound follows from Hoeffding's inequality for the fidelity estimates, the Lipschitz continuity of the RBF transform, and the fact that the projection used to construct $G_{\mathrm{imp}}$ does not increase the distance to $G$ in Frobenius norm.

We follow the notation of Sec.~\ref{app:generic-approx-gram-svm}.
In the rest of this section, we consider quantum states as inputs: we take $\mathcal X$ to be a set of density operators of pure states, so that each input $x\in\mathcal X$, including each training input $X_i$, is a pure state.
For $x,x'\in\mathcal X$, let $F(x,x'):=\operatorname{Tr}(xx')$ denote their fidelity.
For a hyperparameter $\sigma>0$, define $h_\sigma$ for fidelity values $q\in[0,1]$ and the kernel $K_\sigma^{\mathrm{RBF}}$ by
\begin{equation}
    h_\sigma(q)
    :=
    \exp\left[
        -\frac{1-q}{\sigma^2}
    \right],
    \qquad
    K_\sigma^{\mathrm{RBF}}(x,x')
    :=
    h_\sigma\bigl(F(x,x')\bigr),
    \label{eq:rbf-kernel-definition}
\end{equation}
which is the quantum RBF kernel in Eq.~\eqref{eq:RBFkernel}.
We use $h_\sigma$ to convert fidelity estimates into RBF kernel estimates.
We take $k=K_\sigma^{\mathrm{RBF}}$ and set $F_{ij}:=F(X_i,X_j)$.
In this setting, the exact Gram matrix $G$ becomes
\begin{equation}
    G_{ij}=h_\sigma(F_{ij}).
    \label{eq:training-exact-fidelity-and-gram}
\end{equation}

We estimate the fidelities $F_{ij}$ from a finite number of measurements.
For each unordered pair $1\le i\le j\le N$, we repeat a binary measurement on copies of $X_i$ and $X_j$ $S_{\mathrm{train}}$ times, and denote by $W_{ij,s}$ the outcome of the $s$-th repetition.
We assume that $W_{ij,1},\ldots,W_{ij,S_{\mathrm{train}}}$ are conditionally independent given $D_N$ and satisfy
\begin{equation}
    W_{ij,s}\in\{0,1\},
    \qquad
    \mathbb E[W_{ij,s}\mid D_N]=F_{ij}.
    \label{eq:training-shot-model}
\end{equation}
Such measurements can be realized by the compute-uncompute method~\cite{havlicek2019supervised}, in which $W_{ij,s}=1$ if the projection onto $X_j$ succeeds and $W_{ij,s}=0$ otherwise.
More generally, if each measurement outcome used for training or prediction takes values in an interval of length $w=O(1)$, as for the SWAP test with outcomes $\pm1$~\cite{PhysRevLett.87.167902}, all the arguments below apply after clipping the sample means $\overline F_{ij}$ in Eq.~\eqref{eq:training-fidelity-estimates} and $\widehat F_i(x)$ to $[0,1]$, with the numbers of measurements multiplied by $w^2$.
For $1\le i\le j\le N$, define the fidelity estimate $\overline F_{ij}$ and the corresponding RBF kernel estimate $\widetilde G_{ij}$ by
\begin{equation}
\begin{aligned}
    \overline F_{ij}
    &:=
    \frac{1}{S_{\mathrm{train}}}
    \sum_{s=1}^{S_{\mathrm{train}}}W_{ij,s},\\
    \widetilde G_{ij}
    &:=
    h_\sigma(\overline F_{ij}),
\end{aligned}
    \label{eq:training-fidelity-estimates}
\end{equation}
and extend $\overline F$ and $\widetilde G$ symmetrically to $i>j$.

The empirical matrix $\widetilde G$ is not necessarily PSD.
Let $\mathcal C_N$ be a closed convex set of symmetric PSD $N\times N$ matrices that contains the exact Gram matrix $G$ for every training sample; examples include the set of all symmetric PSD matrices and its intersection with $\{H:\ H_{ii}=1\ \text{for all }i\}$ or $\{H:\ |H_{ij}|\le1\ \text{for all }i,j\}$.
Define the projected training Gram matrix by
\begin{equation}
    G_{\mathrm{imp}}
    :=
    \operatorname*{arg\,min}_{H\in\mathcal C_N}
    \frac12\|H-\widetilde G\|_F^2.
    \label{eq:training-implemented-gram}
\end{equation}
Since $\mathcal C_N$ is nonempty, closed, and convex, $G_{\mathrm{imp}}$ is uniquely defined.

\begin{lemma}[Gram approximation from finite measurements]
    \label{lem:training-finite-shot-gram}
    Fix a failure probability parameter $\delta_{\mathrm{shot}}\in(0,1)$ and an entrywise error tolerance $\xi>0$.
    If
    \begin{equation}
        S_{\mathrm{train}}
        \ge
        \frac{
            N^2
        }{
            2\sigma^4\xi^2
        }
        \log\left(
            \frac{N(N+1)}{\delta_{\mathrm{shot}}}
        \right),
        \label{eq:training-shots-rbf}
    \end{equation}
    then, for almost every training sample $D_N$,
    \begin{equation}
        \Pr\left(
            \max_{i,j}
            \left|
                (G_{\mathrm{imp}})_{ij}-G_{ij}
            \right|
            \le\xi
            \,\middle|\,
            D_N
        \right)
        \ge
        1-\delta_{\mathrm{shot}}.
        \label{eq:training-gram-probability}
    \end{equation}
\end{lemma}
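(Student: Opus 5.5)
The proof runs in three stages: concentration of each fidelity estimate, propagation through the RBF map, and control of the PSD projection.

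\emph{Stage 1: fidelity estimates.} Conditionally on $D_N$, fix an unordered pair $i\le j$. The $W_{ij,s}$ are independent $\{0,1\}$ variables with mean $F_{ij}$, so Hoeffding's inequality gives
\begin{equation*}
    \Pr\bigl(|\overline F_{ij}-F_{ij}|>t\mid D_N\bigr)\le 2\exp(-2S_{\mathrm{train}}t^2).
\end{equation*}
There are $N(N+1)/2$ unordered pairs. A union bound therefore shows that with conditional probability at least $1-N(N+1)\exp(-2S_{\mathrm{train}}t^2)$, all pairs satisfy $|\overline F_{ij}-F_{ij}|\le t$ simultaneously. Choosing $t=\sigma^2\xi/N$, the condition \eqref{eq:training-shots-rbf} is exactly what makes $N(N+1)\exp(-2S_{\mathrm{train}}t^2)\le\delta_{\mathrm{shot}}$. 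For the diagonal $i=j$ we may even have $F_{ii}=1$ exactly, but the bound covers it regardless.

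\emph{Stage 2: RBF map.} On $[0,1]$ we have
\begin{equation*}
    h_\sigma'(q)=\sigma^{-2}\exp[-(1-q)/\sigma^2]\le\sigma^{-2},
\end{equation*}
so $h_\sigma$ is $\sigma^{-2}$-Lipschitz. Both $\overline F_{ij}$ and $F_{ij}$ lie in $[0,1]$, hence on the good event $|\widetilde G_{ij}-G_{ij}|\le t/\sigma^2=\xi/N$ for every entry, symmetric ones included.

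\emph{Stage 3: projection.} By hypothesis $G\in\mathcal C_N$, and $G_{\mathrm{imp}}$ is the Frobenius projection of $\widetilde G$ onto the closed convex set $\mathcal C_N$. Projection onto a closed convex set is nonexpansive and fixes points of the set, so
\begin{equation*}
    \|G_{\mathrm{imp}}-G\|_F\le\|\widetilde G-G\|_F.
\end{equation*}
Alternatively, the obtuse-angle characterization of the projection gives this directly. The entrywise error is then bounded by the Frobenius norm:
\begin{equation*}
    \max_{i,j}|(G_{\mathrm{imp}}-G)_{ij}|\le\|G_{\mathrm{imp}}-G\|_F\le\|\widetilde G-G\|_F\le\sqrt{N^2(\xi/N)^2}=\xi.
\end{equation*}
This holds on the good event, which proves \eqref{eq:training-gram-probability}. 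The phrase ``almost every $D_N$'' only reflects that the conditional probability statement is made at the level of regular conditional distributions.

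\emph{Main obstacle.} The only delicate step is Stage 3. The projection can spread error across entries, so entrywise control of $\widetilde G$ does not transfer to $G_{\mathrm{imp}}$ directly. The plan handles this by passing through the Frobenius norm, which is where the factor $N^2$ in $S_{\mathrm{train}}$ comes from: the per-entry precision must be $\xi/N$ rather than $\xi$. The remaining work is to check that the constants in \eqref{eq:training-shots-rbf} match this choice of $t$, and they do.
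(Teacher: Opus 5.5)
Your proposal is correct and follows the paper's proof essentially step for step. The shared steps are Hoeffding plus a union bound over the $N(N+1)/2$ measured pairs with per-pair precision $\sigma^2\xi/N$, the $\sigma^{-2}$-Lipschitz bound on $h_\sigma$, and the chain $\max_{i,j}|(G_{\mathrm{imp}}-G)_{ij}|\le\|G_{\mathrm{imp}}-G\|_F\le\|\widetilde G-G\|_F\le N\sigma^{-2}\max_{i\le j}|\overline F_{ij}-F_{ij}|$, which uses nonexpansiveness of the projection and the fact that it fixes $G\in\mathcal C_N$. The only extra ingredient in the paper is an explicit check, via Schoenberg's theorem, that $K_\sigma^{\mathrm{RBF}}$ is PSD, so that example sets such as the PSD cone really contain $G$; you correctly take $G\in\mathcal C_N$ as part of the standing hypothesis on $\mathcal C_N$.
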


\begin{proof}
    For pure states,
    \begin{equation*}
        \|x-x'\|_{\mathrm{HS}}^2
        =
        2\bigl[1-F(x,x')\bigr],
    \end{equation*}
    where $\|\cdot\|_{\mathrm{HS}}$ denotes the Hilbert--Schmidt norm.
    Hence $K_\sigma^{\mathrm{RBF}}$ is the Gaussian kernel evaluated on these density operators and is PSD by Schoenberg's theorem~\cite{Schoenberg1938}.
    By the definition of $F$, $K_\sigma^{\mathrm{RBF}}$ is symmetric and satisfies
    \begin{equation*}
        e^{-1/\sigma^2}\le K_\sigma^{\mathrm{RBF}}(x,x')\le1,
        \qquad
        K_\sigma^{\mathrm{RBF}}(x,x)=1.
    \end{equation*}
    By the assumption on $\mathcal C_N$, $G\in\mathcal C_N$, so its metric projection onto $\mathcal C_N$ is $G$ itself.
    Conditional on $D_N$, applying Hoeffding's inequality to both tails
    \cite{Hoeffding1963} and then taking a union bound over
    the $N(N+1)/2$ measured pairs gives, for every $\varepsilon_F>0$,
    \begin{equation}
        \Pr\left(
            \max_{i\le j}
            |\overline F_{ij}-F_{ij}|>\varepsilon_F
            \,\middle|\,
            D_N
        \right)
        \le
        N(N+1)
        \exp\left[
            -2S_{\mathrm{train}}\varepsilon_F^2
        \right].
        \label{eq:training-fidelity-union-bound}
    \end{equation}
    Since $|h_\sigma'(q)|\le\sigma^{-2}$ on $[0,1]$, the mean-value theorem gives
    \begin{equation}
        \left|
            h_\sigma(q)-h_\sigma(q')
        \right|
        \le
        \frac{1}{\sigma^2}
        |q-q'|
        \qquad
        (q,q'\in[0,1]).
        \label{eq:rbf-lipschitz-bound}
    \end{equation}
    Because $F_{ij},\overline F_{ij}\in[0,1]$, the Lipschitz bound in Eq.~\eqref{eq:rbf-lipschitz-bound} and the nonexpansiveness of the metric projection combine to give
    \begin{equation}
    \begin{aligned}
        \max_{i,j}
        |(G_{\mathrm{imp}})_{ij}-G_{ij}|
        &\le
        \|G_{\mathrm{imp}}-G\|_F\\
        &\le
        \|\widetilde G-G\|_F\\
        &\le
        \frac{N}{\sigma^2}
        \max_{i\le j}
        |\overline F_{ij}-F_{ij}|.
    \end{aligned}
        \label{eq:training-projection-bound}
    \end{equation}
    Here, the second inequality follows from the nonexpansiveness of the metric projection onto $\mathcal C_N$ in Frobenius norm, since this projection fixes $G$.
    With $\varepsilon_F=\sigma^2\xi/N$,
    Eq.~\eqref{eq:training-fidelity-union-bound} and the condition in
    Eq.~\eqref{eq:training-shots-rbf} give a failure probability of at most
    $\delta_{\mathrm{shot}}$.
\end{proof}

\subsubsection{Existence of a separator with unit margin}
\label{app:margin-certificate}

Lemma~\ref{lem:generic-approx-gram-svm} assumes the existence of a separator $(f_0,b_0)$ with unit margin, Eq.~\eqref{eq:generic-unit-margin-separator}.
The resulting bound on the expected hinge loss, Eq.~\eqref{eq:generic-population-risk-bound}, depends on this separator only through the bounds $R$ on $\|f_0\|_{\mathcal H_k}$ and $B$ on $|b_0|$.
In this subsection, we show that such a separator exists when, as in Theorem~1, one class has a finite set of representative states such that every state in this class is similar to at least one representative, whereas every state in the other class is dissimilar to all representatives.
The bounds $R$ and $B$ are then inversely proportional to the margin $\eta$ in Theorem~1, which lower-bounds the gap between the two classes in their kernel values averaged over the representatives.

\begin{lemma}[Separator with unit margin]
    \label{lem:anchor-margin-certificate}
    Let $P$ be a probability distribution on
    $\mathcal X\times\{-1,+1\}$ such that $P(Y=y)>0$ for each
    $y\in\{-1,+1\}$.
    Let $k:\mathcal X\times\mathcal X\to[0,1]$ be a measurable symmetric
    PSD kernel with RKHS $\mathcal H_k$.
    For each $y\in\{-1,+1\}$, let
    $\mathcal S_y\subset\mathcal X$ be measurable and satisfy
    \begin{equation*}
        P(X\in\mathcal S_y\mid Y=y)=1.
    \end{equation*}
    Fix $y_\star\in\{-1,+1\}$ and a finite set $A_{y_\star}=\{a_{y_\star,1},\ldots,a_{y_\star,M_{y_\star}}\}\subset\mathcal S_{y_\star}$ with $M_{y_\star}\ge1$.
    Suppose that $\kappa_{y_\star}\in[0,1]$ and
    $\varepsilon_{\mathrm{sep}}\in[0,1)$ satisfy
    \begin{equation}
        \max_{1\le j\le M_{y_\star}}
        k(x,a_{y_\star,j})
        \ge
         \kappa_{y_\star}
        \qquad
         (x\in\mathcal S_{y_\star}),
        \label{eq:margin-anchor-coverage}
    \end{equation}
    and
    \begin{equation}
        k(x,a)
        \le
        \varepsilon_{\mathrm{sep}}
        \qquad
         (x\in\mathcal S_{-y_\star},\ a\in A_{y_\star}).
        \label{eq:margin-anchor-separation}
    \end{equation}
    Define
    \begin{equation}
        \eta
        :=
        \frac{\kappa_{y_\star}}{M_{y_\star}}
        -
        \varepsilon_{\mathrm{sep}}.
        \label{eq:margin-certificate}
    \end{equation}
    If $\eta>0$, then there exist $f_0\in\mathcal H_k$ and
    $b_0\in\mathbb R$ such that
    \begin{equation}
        Y\bigl(f_0(X)+b_0\bigr)
        \ge
        1
        \qquad
        P\text{-almost surely},
        \label{eq:margin-unit-separator}
    \end{equation}
    with
    \begin{equation}
        \|f_0\|_{\mathcal H_k}
        \le
        \frac{2}{\eta},
        \qquad
        |b_0|
        \le
        \frac{2}{\eta}.
        \label{eq:margin-comparator-bounds}
    \end{equation}
\end{lemma}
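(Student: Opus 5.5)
The plan is to build the separator explicitly from the averaged score suggested in Fig.~\ref{fig:coverage-separation}. Define
\begin{equation*}
    s(x):=\frac{1}{M_{y_\star}}\sum_{j=1}^{M_{y_\star}}k(x,a_{y_\star,j}),
\end{equation*}
which is the evaluation at $x$ of the RKHS element $u:=M_{y_\star}^{-1}\sum_j k(\,\cdot\,,a_{y_\star,j})\in\mathcal H_k$ by the reproducing property. The first step is to bound $s$ on each class.

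For $x\in\mathcal S_{y_\star}$, nonnegativity of $k$ lets us drop every term except the maximizing one in Eq.~\eqref{eq:margin-anchor-coverage}, which gives $s(x)\ge\kappa_{y_\star}/M_{y_\star}$. For $x\in\mathcal S_{-y_\star}$, Eq.~\eqref{eq:margin-anchor-separation} bounds every term by $\varepsilon_{\mathrm{sep}}$, so $s(x)\le\varepsilon_{\mathrm{sep}}$. The two classes are therefore separated by a gap of width $\eta$.

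Next, place the threshold at the midpoint $t:=(\kappa_{y_\star}/M_{y_\star}+\varepsilon_{\mathrm{sep}})/2$ and set
\begin{equation*}
    f_0:=\frac{2y_\star}{\eta}\,u,\qquad b_0:=-\frac{2y_\star}{\eta}\,t.
\end{equation*}
Then $f_0(x)+b_0=(2y_\star/\eta)(s(x)-t)$, and $|s(x)-t|\ge\eta/2$ with the correct sign on each class. Hence $y(f_0(x)+b_0)\ge1$ for all $x\in\mathcal S_y$. Because $P(X\in\mathcal S_Y)=1$, obtained by conditioning on $Y$, this holds $P$-almost surely. The assumption $P(Y=y)>0$ only serves to make the conditional probabilities well defined.

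The last step is the norm bounds, which I expect to need the most care. For the RKHS norm, the triangle inequality gives
\begin{equation*}
    \|u\|_{\mathcal H_k}\le\frac{1}{M_{y_\star}}\sum_j\|k(\,\cdot\,,a_{y_\star,j})\|_{\mathcal H_k}=\frac{1}{M_{y_\star}}\sum_j\sqrt{k(a_{y_\star,j},a_{y_\star,j})}\le1,
\end{equation*}
using that $k$ takes values in $[0,1]$. This yields $\|f_0\|_{\mathcal H_k}\le2/\eta$. For the bias, $0\le t\le1$ because $\kappa_{y_\star}\le1$, $M_{y_\star}\ge1$ and $\varepsilon_{\mathrm{sep}}<1$, so $|b_0|\le2/\eta$. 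One could tighten the norm bound by expanding $\|u\|^2$ into kernel values, but that is not needed.
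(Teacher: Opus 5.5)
Your proposal is correct and matches the paper's proof step for step. Both use the averaged kernel section $M_{y_\star}^{-1}\sum_j k(\,\cdot\,,a_{y_\star,j})$, the midpoint threshold $t$, the rescaling $f_0=(2y_\star/\eta)\mu$ and $b_0=-2y_\star t/\eta$, and the bounds $\|\mu\|_{\mathcal H_k}\le1$ (triangle inequality with $\|k(\,\cdot\,,a)\|_{\mathcal H_k}=\sqrt{k(a,a)}$) and $0\le t\le1$.
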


We refer to Eqs.~\eqref{eq:margin-anchor-coverage} and~\eqref{eq:margin-anchor-separation} as the coverage and separation conditions, respectively.

\begin{proof}
    We prove the claim by explicitly constructing $f_0$ and $b_0$.
    Define
    \begin{equation*}
        \mu
        :=
        \frac{1}{M_{y_\star}}
        \sum_{j=1}^{M_{y_\star}}
        k(\,\cdot\,,a_{y_\star,j}),
        \qquad
        t
        :=
        \frac12
        \left(
            \frac{\kappa_{y_\star}}{M_{y_\star}}
            +
            \varepsilon_{\mathrm{sep}}
        \right).
    \end{equation*}
    The reproducing property, Eqs.~\eqref{eq:margin-anchor-coverage}
    and~\eqref{eq:margin-anchor-separation}, and the condition $k\ge0$ imply
    \begin{equation*}
    \begin{aligned}
        \langle\mu,k(\,\cdot\,,x)\rangle_{\mathcal H_k}
        &\ge
        \frac{\kappa_{y_\star}}{M_{y_\star}}
        &&(x\in\mathcal S_{y_\star}),\\
        \langle\mu,k(\,\cdot\,,x)\rangle_{\mathcal H_k}
        &\le
        \varepsilon_{\mathrm{sep}}
        &&(x\in\mathcal S_{-y_\star}).
    \end{aligned}
    \end{equation*}
    Set
    \begin{equation*}
        f_0
        :=
        \frac{2y_\star}{\eta}\mu,
        \qquad
        b_0
        :=
        -\frac{2y_\star t}{\eta}.
    \end{equation*}
    These choices satisfy
    \begin{equation*}
    \begin{aligned}
        y_\star\bigl(f_0(x)+b_0\bigr)
        &\ge1
        &&(x\in\mathcal S_{y_\star}),\\
        -y_\star\bigl(f_0(x)+b_0\bigr)
        &\ge1
        &&(x\in\mathcal S_{-y_\star}).
    \end{aligned}
    \end{equation*}
    Since $P(X\in\mathcal S_y\mid Y=y)=1$ for each
    $y\in\{-1,+1\}$, these inequalities imply
    Eq.~\eqref{eq:margin-unit-separator}.

    Finally, the triangle inequality and the identity
    $\|k(\,\cdot\,,a)\|_{\mathcal H_k}=\sqrt{k(a,a)}$ give
    \begin{equation*}
        \|\mu\|_{\mathcal H_k}
        \le
        \frac{1}{M_{y_\star}}
        \sum_{j=1}^{M_{y_\star}}
        \sqrt{
            k(a_{y_\star,j},a_{y_\star,j})
        }
        \le1.
    \end{equation*}
    Since also $0\le t\le1$ by its definition,
    Eq.~\eqref{eq:margin-comparator-bounds} follows.
\end{proof}

\subsection{Prediction with finite measurements}
\label{app:prediction-finite-measurements}

Section~\ref{app:training-finite-measurements} bounds the expected hinge loss of the score $g_N$ in Eq.~\eqref{eq:generic-reference-score}.
We bound the misclassification probability when the fidelities used for prediction are estimated from finite measurements.
Fix a positive integer $N$, a kernel hyperparameter $\sigma>0$, and a regularization parameter $\lambda_{\mathrm{svm}}>0$, and consider the quantum RBF kernel $k=K_\sigma^{\mathrm{RBF}}$ in Eq.~\eqref{eq:rbf-kernel-definition}.
% The bound below does not use the optimality of the SVM solution, so we state it for an arbitrary score of the same form as $g_N$.
For this prediction analysis, we consider an arbitrary fixed score of the same form as $g_N$.
Fix pure states $x_1,\ldots,x_N\in\mathcal X$, coefficients $\theta\in\mathbb R^N$ with $\|\theta\|_1\le1/\lambda_{\mathrm{svm}}$, and an intercept $b\in\mathbb R$, and define
\begin{equation}
    g(x)
    :=
    \sum_{i=1}^N
    \theta_i
    h_\sigma\bigl(F(x,x_i)\bigr)
    +
    b.
    \label{eq:prediction-exact-score}
\end{equation}
Let $(X,Y)\sim P$ be a test point.
Let $S_{\mathrm{pred}}$ be a positive integer denoting the number of measurement outcomes collected for each pair consisting of the test state and one of $x_1,\ldots,x_N$.
For each test input $x$ and each $i\in\{1,\ldots,N\}$, let $\widehat F_i(x)$ be the mean of
$S_{\mathrm{pred}}$ independent measurement outcomes in $[0,1]$, each having
expectation $F(x,x_i)$.
Replacing the exact fidelities $F(x,x_i)$ by their estimates gives
\begin{equation}
    \widehat g(x)
    :=
    \sum_{i=1}^N
    \theta_i
    h_\sigma\bigl(\widehat F_i(x)\bigr)
    +
    b.
    \label{eq:prediction-estimated-score}
\end{equation}

\begin{lemma}
    \label{lem:prediction-finite-shot}
    Fix $\varepsilon_{\mathrm{pred}}\in(0,1)$.
    The condition
    \begin{equation}
        S_{\mathrm{pred}}
        \ge
        \frac{2}{\sigma^4\lambda_{\mathrm{svm}}^2}
        \log\left(
            \frac{2N}{\varepsilon_{\mathrm{pred}}}
        \right),
        \label{eq:prediction-shots-rbf}
    \end{equation}
    implies
    \begin{equation}
        \Pr\left(Y\widehat g(X)\le0\right)
        \le
        % 2\varepsilon_{\mathrm{learn}}
        2R_{\mathrm{hinge}}(g)
        +
        \varepsilon_{\mathrm{pred}},
        \label{eq:assembly-prediction-risk}
    \end{equation}
    where the probability is taken over the test point $(X,Y)\sim P$ and the measurements used to obtain $\widehat F_1(X),\ldots,\widehat F_N(X)$.
\end{lemma}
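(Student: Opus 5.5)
We split the misclassification event according to whether the exact score $g$ already has a small margin at the test point or whether the finite-shot estimate $\widehat g$ deviates substantially from $g$. Concretely, for every realization,
\begin{equation*}
    \{Y\widehat g(X)\le0\}
    \subseteq
    \{Yg(X)\le 1/2\}
    \cup
    \{|\widehat g(X)-g(X)|\ge 1/2\},
\end{equation*}
since $Yg(X)>1/2$ together with $|\widehat g(X)-g(X)|<1/2$ forces $Y\widehat g(X)>0$. A union bound then reduces the claim to two estimates: $\Pr(Yg(X)\le1/2)\le 2R_{\mathrm{hinge}}(g)$ and $\Pr(|\widehat g(X)-g(X)|\ge1/2)\le\varepsilon_{\mathrm{pred}}$.

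For the first term we use Markov's inequality on the hinge loss. On the event $Yg(X)\le1/2$ we have $\max\{1-Yg(X),0\}\ge1/2$. Hence
\begin{equation*}
    \Pr(Yg(X)\le1/2)\le 2\,\mathbb E\left[\max\{1-Yg(X),0\}\right]=2R_{\mathrm{hinge}}(g).
\end{equation*}
This step is routine.

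For the second term we condition on $(X,Y)$. By the Lipschitz bound in Eq.~\eqref{eq:rbf-lipschitz-bound}, both $F(X,x_i)$ and $\widehat F_i(X)$ lie in $[0,1]$. Using $\|\theta\|_1\le1/\lambda_{\mathrm{svm}}$, we get
\begin{equation*}
    |\widehat g(X)-g(X)|
    \le
    \frac{1}{\sigma^2\lambda_{\mathrm{svm}}}\max_i|\widehat F_i(X)-F(X,x_i)|.
\end{equation*}
So it suffices that $\max_i|\widehat F_i(X)-F(X,x_i)|<\sigma^2\lambda_{\mathrm{svm}}/2$. Hoeffding's inequality for each $i$ (two-sided, outcomes in $[0,1]$), followed by a union bound over the $N$ training states, bounds the failure probability by
\begin{equation*}
    2N\exp\!\left[-2S_{\mathrm{pred}}\,\sigma^4\lambda_{\mathrm{svm}}^2/4\right]
    =
    2N\exp\!\left[-S_{\mathrm{pred}}\,\sigma^4\lambda_{\mathrm{svm}}^2/2\right].
\end{equation*}
This is at most $\varepsilon_{\mathrm{pred}}$ exactly under condition~\eqref{eq:prediction-shots-rbf}. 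Integrating the conditional bound over $(X,Y)$ finishes the argument.

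The only delicate point is choosing the split threshold so that it matches the stated constants. The threshold $1/2$ yields the factor $2$ in front of $R_{\mathrm{hinge}}(g)$, and it also yields the exact constant $2/(\sigma^4\lambda_{\mathrm{svm}}^2)$ in the shot condition. We also need to confirm that the conditional independence of the shots given the test point justifies applying Hoeffding's inequality pointwise in $X$. Since $X$ is independent of the measurement noise apart from fixing the means $F(X,x_i)$, this holds.
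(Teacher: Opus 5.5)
Your proposal is correct and follows essentially the same route as the paper's proof. Both split the misclassification event at threshold $1/2$ and use $\mathbf 1\{Yg(X)\le 1/2\}\le 2\max\{1-Yg(X),0\}$; both then apply the Lipschitz estimate with $\|\theta\|_1\le 1/\lambda_{\mathrm{svm}}$, pointwise Hoeffding, a union bound over the $N$ training states, and averaging over $X$. Your non-strict deviation event $\ge 1/2$, where the paper uses $>1/2$, is harmless, and one wording slip should be fixed: $F(X,x_i),\widehat F_i(X)\in[0,1]$ is what licenses the Lipschitz bound, not a consequence of it.
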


\begin{proof}
    The assumption $\|\theta\|_1\le1/\lambda_{\mathrm{svm}}$ and Eq.~\eqref{eq:rbf-lipschitz-bound} give
    \begin{equation*}
        |\widehat g(x)-g(x)|
        \le
        \frac{1}{\lambda_{\mathrm{svm}}\sigma^2}
        \max_{1\le i\le N}
        |\widehat F_i(x)-F(x,x_i)|.
    \end{equation*}
    Thus the score error is at most $1/2$ whenever the maximum fidelity
    error is at most $\lambda_{\mathrm{svm}}\sigma^2/2$.
    For each fixed $x$, Hoeffding's inequality
    \cite[Theorem~2]{Hoeffding1963} and a union bound over
    $i=1,\ldots,N$ give
    \begin{equation*}
        \Pr\left(
            |\widehat g(x)-g(x)|>\frac12
        \right)
        \le
        2N\exp\left(
            -\frac{
                S_{\mathrm{pred}}
                \lambda_{\mathrm{svm}}^2\sigma^4
            }{2}
        \right),
    \end{equation*}
    where the probability is over the measurements used to predict $x$.
    Because the right-hand side is uniform in $x$, averaging over $X$ and
    applying Eq.~\eqref{eq:prediction-shots-rbf} gives
    \begin{equation*}
        \Pr\left(
            |\widehat g(X)-g(X)|>\frac12
        \right)
        \le
        \varepsilon_{\mathrm{pred}}.
    \end{equation*}
    Since $\mathbf 1\{Yg(X)\le1/2\}\le2\max\{1-Yg(X),0\}$, taking the expectation over $(X,Y)$ gives
    \begin{equation*}
        \Pr\left(Yg(X)\le\frac12\right)
        \le
        2\,\mathbb E_{(X,Y)\sim P}\left[\max\{1-Yg(X),0\}\right]
        =
        2R_{\mathrm{hinge}}(g).
    \end{equation*}
    Misclassification by $\widehat g$ implies that either
    $Yg(X)\le1/2$ or $|\widehat g(X)-g(X)|>1/2$.
    % Taking expectations and using $R_{\mathrm{hinge}}(g_n)\le\varepsilon_{\mathrm{learn}}$ proves the claim.
    Combining these bounds, we obtain
    \begin{equation*}
        \Pr\left(Y\widehat g(X)\le0\right)
        \le
        \Pr\left(Yg(X)\le\frac12\right)
        +
        \Pr\left(|\widehat g(X)-g(X)|>\frac12\right)
        \le
        2R_{\mathrm{hinge}}(g)
        +
        \varepsilon_{\mathrm{pred}},
    \end{equation*}
    which proves Eq.~\eqref{eq:assembly-prediction-risk}.
\end{proof}

\subsection{Formal statement and proof of the main-text theorem}
\label{app:assembly-resource-scaling}

We combine the preceding results to give a formal guarantee for a fixed classification problem in Theorem~\ref{thm:formal-main-polynomial-learning}, from which Theorem~1 follows as shown after its proof.

Let $\eta_0>0$ be a known lower bound on the margin $\eta$ in Eq.~\eqref{eq:margin-certificate}, which is used to choose the hyperparameters below.
For a target misclassification probability $\varepsilon\in(0,1)$ and failure probability $\delta\in(0,1)$, set
\begin{equation*}
    \varepsilon_{\mathrm{learn}}
    :=
    \frac{\varepsilon}{4},
    \qquad
    \varepsilon_{\mathrm{pred}}
    :=
    \frac{\varepsilon}{2},
    \qquad
    \delta_{\mathrm{gen}}
    =
    \delta_{\mathrm{shot}}
    :=
    \frac{\delta}{2}.
\end{equation*}
Here, $\varepsilon_{\mathrm{learn}}$ is the target bound on the expected hinge loss after training, $\varepsilon_{\mathrm{pred}}$ is the additive error term in the misclassification bound of Lemma~\ref{lem:prediction-finite-shot}, and $\delta_{\mathrm{gen}}$ and $\delta_{\mathrm{shot}}$ are the failure probability parameters in Lemmas~\ref{lem:generic-approx-gram-svm} and~\ref{lem:training-finite-shot-gram}, respectively.
Choose $R$, $B$, $\lambda_{\mathrm{svm}}$, and $\xi$ as
\begin{equation}
\begin{aligned}
    R
    &:=
    \frac{2}{\eta_0},
    &
    B
    &:=
    \frac{2}{\eta_0},
    &
    \lambda_{\mathrm{svm}}
    &:=
    \frac{
        \eta_0^2\varepsilon_{\mathrm{learn}}
    }{64},
    &
    \xi
    &:=
    \frac{
        \lambda_{\mathrm{svm}}
        \varepsilon_{\mathrm{learn}}
    }{96}.
\end{aligned}
    \label{eq:assembly-parameter-choice}
\end{equation}

\begin{theorem}[Finite-measurement learning and prediction]
    \label{thm:formal-main-polynomial-learning}
    Let $\mathcal X$ be a measurable space, $P$ a probability distribution on $\mathcal X\times\{-1,+1\}$, and $\sigma>0$.
    Suppose that $K_\sigma^{\mathrm{RBF}}$ in Eq.~\eqref{eq:rbf-kernel-definition} has a separable RKHS and that the hypotheses of
     Lemma~\ref{lem:anchor-margin-certificate} hold with $k=K_\sigma^{\mathrm{RBF}}$.
    Assume further that $\eta$ defined in
    Eq.~\eqref{eq:margin-certificate} satisfies
    \begin{equation}
        \eta
        \ge
        \eta_0.
        \label{eq:formal-margin-lower-bound}
    \end{equation}
    Set
    \begin{equation}
        N
        :=
        \left\lceil
        \frac{4}{\varepsilon_{\mathrm{learn}}^2}
        \left(
            1+
            \frac{2(1+\sqrt2)}{\eta_0}
        \right)^2
        \left(
            2+
            \sqrt{
                \frac{
                    \log(1/\delta_{\mathrm{gen}})
                }{2}
            }
        \right)^2
        \right\rceil,
        \label{eq:assembly-sample-size}
    \end{equation}
    and let $S_{\mathrm{train}}$ and $S_{\mathrm{pred}}$ be the right-hand sides of Eqs.~\eqref{eq:training-shots-rbf} and~\eqref{eq:prediction-shots-rbf}, respectively, rounded up to integers, where the parameters are those in Eq.~\eqref{eq:assembly-parameter-choice} and the value of $N$ above.
    Let $T_{\mathrm{shot}}^{\mathrm{train}}$ and $T_{\mathrm{shot}}^{\mathrm{pred}}$ denote, respectively, the total number of measurement outcomes used to estimate fidelities during training and the number used for one prediction:
    \begin{equation}
    \label{eq:total_measurement_count}
        T_{\mathrm{shot}}^{\mathrm{train}}
        :=
        \frac{N(N+1)}{2}S_{\mathrm{train}},
        \qquad
        T_{\mathrm{shot}}^{\mathrm{pred}}
        :=
        NS_{\mathrm{pred}}.
    \end{equation}
    The training fidelities are estimated as in Sec.~\ref{app:training-finite-shot-gram}: for each pair $1\le i\le j\le N$, $S_{\mathrm{train}}$ binary measurements are performed, with outcomes satisfying Eq.~\eqref{eq:training-shot-model}.
    For prediction, each $\widehat F_i(X)$ is the mean of $S_{\mathrm{pred}}$ measurement outcomes as specified before Lemma~\ref{lem:prediction-finite-shot} with $x_i=X_i$,  obtained 
    % independently of $D_N$ and $(W_{ij,s})$.
    independently of the training measurement outcomes $(W_{ij,s})$, conditional on $D_N$ and $X$.
    Construct $\overline F$, $\widetilde G$, and $G_{\mathrm{imp}}$ as in Eqs.~\eqref{eq:training-fidelity-estimates}--%
    \eqref{eq:training-implemented-gram}.
    For the matrix $G_{\mathrm{imp}}$ defined in
    Eq.~\eqref{eq:training-implemented-gram}, let
    $(\widetilde\theta_N,\widetilde b_N)$ be a measurable minimizer of
    $J(\,\cdot\,;G_{\mathrm{imp}})$ over
    $\Theta_{\lambda_{\mathrm{svm}}}\times[-B,B]$, as in
    Eq.~\eqref{eq:generic-svm-minimizer}.
    Let $\widehat g_N$ be the estimated score in Eq.~\eqref{eq:prediction-estimated-score} with $x_i=X_i$, $\theta=\widetilde\theta_N$, and $b=\widetilde b_N$.
    Then, with probability at least $1-\delta$ over the training sample and the measurements used during training,
    \begin{equation}
        \Pr\left(Y\widehat g_N(X)\le0\right)
        \le
        \varepsilon,
        \label{eq:formal-main-classification-guarantee}
    \end{equation}
    where the probability in Eq.~\eqref{eq:formal-main-classification-guarantee} is conditional on $D_N$ and $(W_{ij,s})$ and is taken over an independent test point $(X,Y)\sim P$ and the measurements used for prediction.
    Moreover, the sample size and measurement counts chosen above to achieve Eq.~\eqref{eq:formal-main-classification-guarantee} satisfy
    \begin{equation}
    \begin{aligned}
        N
        &=
        \widetilde O\left(
            \eta_0^{-2}
            \varepsilon^{-2}
        \right),\\
        T_{\mathrm{shot}}^{\mathrm{train}}
        &=
        \widetilde O\left(
            \sigma^{-4}
            \eta_0^{-12}
            \varepsilon^{-12}
        \right),\\
        T_{\mathrm{shot}}^{\mathrm{pred}}
        &=
        \widetilde O\left(
            \sigma^{-4}
            \eta_0^{-6}
            \varepsilon^{-4}
        \right).
    \end{aligned}
        \label{eq:formal-main-resource-scaling}
    \end{equation}
    Here, $\widetilde O$ suppresses logarithmic factors in $\eta_0^{-1}$, $\varepsilon^{-1}$, and $\delta^{-1}$.
\end{theorem}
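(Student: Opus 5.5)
The plan is to chain the four lemmas in order: first the separator (Lemma~\ref{lem:anchor-margin-certificate}), then the Gram error (Lemma~\ref{lem:training-finite-shot-gram}), then the hinge-loss bound (Lemma~\ref{lem:generic-approx-gram-svm}), and finally prediction (Lemma~\ref{lem:prediction-finite-shot}). A union bound over the two failure events of probability $\delta/2$ each finishes the argument.

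\textbf{Separator.} Since $K_\sigma^{\mathrm{RBF}}$ takes values in $[e^{-1/\sigma^2},1]\subset[0,1]$ and is PSD, Lemma~\ref{lem:anchor-margin-certificate} applies. It gives $(f_0,b_0)$ with unit margin almost surely and $\|f_0\|_{\mathcal H_k},|b_0|\le 2/\eta\le 2/\eta_0=R=B$. Hence the hypothesis of Lemma~\ref{lem:generic-approx-gram-svm} holds with the choices in Eq.~\eqref{eq:assembly-parameter-choice}.

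\textbf{Gram error.} By Lemma~\ref{lem:training-finite-shot-gram}, with the stated $S_{\mathrm{train}}$, conditional on $D_N$ the event $\max_{ij}|(G_{\mathrm{imp}})_{ij}-G_{ij}|\le\xi$ has probability at least $1-\delta/2$. Lemma~\ref{lem:generic-approx-gram-svm} holds uniformly over all admissible $G_{\mathrm{imp}}$ with probability at least $1-\delta/2$ over $D_N$. Intersecting the two events, with probability at least $1-\delta$ we get
\[
R_{\mathrm{hinge}}(g_N)\le \tfrac{\lambda_{\mathrm{svm}}R^2}{2}+\tfrac{3\xi}{\lambda_{\mathrm{svm}}}+\Delta_N(R_\xi,B,\delta/2).
\]

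\textbf{Hinge-loss arithmetic.} Substituting the parameters, the first term is $\tfrac{\eta_0^2\varepsilon_{\mathrm{learn}}}{64}\cdot\tfrac{4}{2\eta_0^2}=\varepsilon_{\mathrm{learn}}/32$, and the second is $\varepsilon_{\mathrm{learn}}/32$. For $R_\xi$ we have $6\xi/\lambda_{\mathrm{svm}}^2=\varepsilon_{\mathrm{learn}}/(16\lambda_{\mathrm{svm}})=4/\eta_0^2$, so $R_\xi=\sqrt{8}/\eta_0$ and $R_\xi+B=2(1+\sqrt2)/\eta_0$. Then
\[
\Delta_N\le \frac{1+R_\xi+B}{\sqrt N}\Bigl(2+\sqrt{\log(1/\delta_{\mathrm{gen}})/2}\Bigr)\le \varepsilon_{\mathrm{learn}}/2
\]
by the choice of $N$ in Eq.~\eqref{eq:assembly-sample-size}. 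The total is at most $\varepsilon_{\mathrm{learn}}$. This step is the one I expect to be the main obstacle. The constants $64$, $96$, and the factor $4$ in $N$ must combine exactly, and one must check that the chosen $N$ absorbs both the $2(R_\xi+B)/\sqrt N$ term and the concentration term, including the bound $1+R_\xi+B\ge R_\xi+B$.

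\textbf{Prediction.} Conditionally on the good event, the minimizer satisfies $\widetilde\theta_N\in\Theta_{\lambda_{\mathrm{svm}}}$, so $\|\widetilde\theta_N\|_1\le1/\lambda_{\mathrm{svm}}$. Because prediction shots are independent of the training shots given $D_N$ and $X$, Lemma~\ref{lem:prediction-finite-shot} applies to the fixed score $g_N$. It gives misclassification at most $2\varepsilon_{\mathrm{learn}}+\varepsilon_{\mathrm{pred}}=\varepsilon/2+\varepsilon/2=\varepsilon$.

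\textbf{Resource scaling.} Since $\varepsilon_{\mathrm{learn}}=\Theta(\varepsilon)$, we get $N=\widetilde O(\eta_0^{-2}\varepsilon^{-2})$. Next, $\lambda_{\mathrm{svm}}=\Theta(\eta_0^2\varepsilon)$ and $\xi=\Theta(\eta_0^2\varepsilon^2)$. Hence $S_{\mathrm{train}}=\widetilde O(N^2\sigma^{-4}\eta_0^{-4}\varepsilon^{-4})$ and $T^{\mathrm{train}}=\widetilde O(N^4\sigma^{-4}\eta_0^{-4}\varepsilon^{-4})=\widetilde O(\sigma^{-4}\eta_0^{-12}\varepsilon^{-12})$. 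Likewise $S_{\mathrm{pred}}=\widetilde O(\sigma^{-4}\eta_0^{-4}\varepsilon^{-2})$, so $T^{\mathrm{pred}}=NS_{\mathrm{pred}}=\widetilde O(\sigma^{-4}\eta_0^{-6}\varepsilon^{-4})$. The logarithms of $N$ that appear are themselves logarithmic in $\eta_0^{-1}$, $\varepsilon^{-1}$, and $\delta^{-1}$, so they are suppressed by the $\widetilde O$.
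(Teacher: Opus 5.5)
Your derivation of the classification guarantee matches the paper's proof step for step, and your constants are right. The steps are: the separator from Lemma~\ref{lem:anchor-margin-certificate}; the budgets $\lambda_{\mathrm{svm}}R^2/2+3\xi/\lambda_{\mathrm{svm}}=\varepsilon_{\mathrm{learn}}/16$ and $R_\xi+B=2(1+\sqrt2)/\eta_0$; the bound $\Delta_N\le\varepsilon_{\mathrm{learn}}/2$; a union bound over the two $\delta/2$ events; and Lemma~\ref{lem:prediction-finite-shot}.

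The gap is in the resource scaling. $S_{\mathrm{train}}$ and $S_{\mathrm{pred}}$ are the right-hand sides of Eqs.~\eqref{eq:training-shots-rbf} and~\eqref{eq:prediction-shots-rbf} \emph{rounded up to integers}, so each is at least $1$. Your claim $S_{\mathrm{train}}=\widetilde O(N^2\sigma^{-4}\eta_0^{-4}\varepsilon^{-4})$, and the analogous one for $S_{\mathrm{pred}}$, therefore needs the unrounded expressions to be bounded below by a constant. Because $\sigma$ appears explicitly in the claimed bounds, this is not automatic. Formally, fixing $\eta_0,\varepsilon$ and letting $\sigma\to\infty$ sends $\sigma^{-4}\eta_0^{-12}\varepsilon^{-12}$ to $0$, while $T_{\mathrm{shot}}^{\mathrm{train}}\ge N(N+1)/2\ge1$ always. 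Only a constraint coming from the hypotheses rules this out, and your proposal never invokes one.

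The paper supplies the missing ingredient: the hypotheses couple $\sigma$ and $\eta_0$.
\begin{itemize}
\item $K_\sigma^{\mathrm{RBF}}\ge e^{-1/\sigma^2}$ everywhere.
\item $\mathcal S_{-y_\star}$ is nonempty, because $P(Y=-y_\star)>0$.
\item Hence the separation condition~\eqref{eq:margin-anchor-separation} forces $\varepsilon_{\mathrm{sep}}\ge e^{-1/\sigma^2}$.
\item Combined with $\kappa_{y_\star}/M_{y_\star}\le1$, this gives $\eta_0\le\eta\le1-e^{-1/\sigma^2}\le\min\{1,\sigma^{-2}\}$, so $\sigma^{-4}\eta_0^{-4}\ge\eta_0^{-2}\ge1$.
\end{itemize}
With this, both unrounded shot counts are bounded below by a constant much larger than one; for example, the unrounded $S_{\mathrm{pred}}$ is at least $2\cdot64^2\log4$. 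Rounding up therefore at most doubles them, and your $\widetilde O$ bounds hold as stated.

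The same chain also gives $\eta_0\le1$. You use this implicitly when you replace $(1+2(1+\sqrt2)/\eta_0)^2$ by $O(\eta_0^{-2})$ in the bound on $N$. With this paragraph added, your proof is complete.
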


\begin{proof}
    We first bound the expected hinge loss of $g_N$, the reference score in Eq.~\eqref{eq:generic-reference-score} with $k=K_\sigma^{\mathrm{RBF}}$ in Eq.~\eqref{eq:rbf-kernel-definition}.
    Lemma~\ref{lem:anchor-margin-certificate} gives a comparator with unit margin satisfying
    $\|f_0\|_{\mathcal H_{K_\sigma^{\mathrm{RBF}}}}\le2/\eta\le R$ and
    $|b_0|\le2/\eta\le B$.
    The parameter choices in Eq.~\eqref{eq:assembly-parameter-choice} give
    \begin{equation}
        \frac{\lambda_{\mathrm{svm}}R^2}{2}
        +
        \frac{3\xi}{\lambda_{\mathrm{svm}}}
        =
        \frac{\varepsilon_{\mathrm{learn}}}{16},
        \qquad
        R_\xi+B
        =
        \frac{2(1+\sqrt2)}{\eta_0}.
        \label{eq:assembly-risk-budgets}
    \end{equation}
    The choice of $N$ in
    Eq.~\eqref{eq:assembly-sample-size} then ensures
    \begin{equation*}
        \Delta_N(R_\xi,B,\delta_{\mathrm{gen}})
        \le
        \frac{\varepsilon_{\mathrm{learn}}}{2}.
    \end{equation*}

    The law of total probability applied to Eq.~\eqref{eq:training-gram-probability} gives $\Pr(\max_{i,j}|(G_{\mathrm{imp}})_{ij}-G_{ij}|\le\xi)\ge1-\delta_{\mathrm{shot}}$, where the probability is joint over the training sample and the measurements used for training. 
    Lemma~\ref{lem:generic-approx-gram-svm} gives an event of probability at least $1-\delta_{\mathrm{gen}}$ over the training sample on which the conclusion of Lemma~\ref{lem:generic-approx-gram-svm} holds uniformly over all measurement realizations satisfying $\max_{i,j}|(G_{\mathrm{imp}})_{ij}-G_{ij}|\le\xi$.
    % the measurement realizations.
    On the intersection of these events,
    \begin{equation}
        R_{\mathrm{hinge}}(g_N)
        \le
        \left(
            \frac1{16}
            +
            \frac12
        \right)
        \varepsilon_{\mathrm{learn}}
        <
        \varepsilon_{\mathrm{learn}}.
    \label{eq:assembly-training-risk}
    \end{equation}
    A union bound shows that this intersection has probability at least $1-\delta_{\mathrm{gen}}-\delta_{\mathrm{shot}}=1-\delta$.
    For every training realization in this intersection, $\|\widetilde\theta_N\|_1\le1/\lambda_{\mathrm{svm}}$ by Eq.~\eqref{eq:generic-theta-box}, and Lemma~\ref{lem:prediction-finite-shot} with $x_i=X_i$, $\theta=\widetilde\theta_N$, and $b=\widetilde b_N$, for which $g=g_N$, gives Eq.~\eqref{eq:formal-main-classification-guarantee}, since $2R_{\mathrm{hinge}}(g_N)+\varepsilon_{\mathrm{pred}}\le2\varepsilon_{\mathrm{learn}}+\varepsilon_{\mathrm{pred}}=\varepsilon$.

    We next derive the resource bounds.
    We first show that $\sigma^{-4}\eta_0^{-4}\ge1$.
    The bound $K_\sigma^{\mathrm{RBF}}(x,a)\ge e^{-1/\sigma^2}$ and the separation condition in Eq.~\eqref{eq:margin-anchor-separation} imply $\varepsilon_{\mathrm{sep}}\ge e^{-1/\sigma^2}$.
    The definition in Eq.~\eqref{eq:margin-certificate}, the assumption $\eta_0\le\eta$, and $\kappa_{y_\star}/M_{y_\star}\le1$ therefore give $\eta_0\le\eta\le1-e^{-1/\sigma^2}\le\min\{1,\sigma^{-2}\}$.
    Hence $\sigma^{-4}\eta_0^{-4}\ge\eta_0^{-2}\ge1$, so rounding the sample size and measurement counts up to integers does not change the asymptotic bounds below.
    The definitions of the total measurement counts in Eq.~\eqref{eq:total_measurement_count} and the choices of $S_{\mathrm{train}}$ and $S_{\mathrm{pred}}$ from Eqs.~\eqref{eq:training-shots-rbf} and~\eqref{eq:prediction-shots-rbf} give
    \begin{equation*}
    \begin{aligned}
        T_{\mathrm{shot}}^{\mathrm{train}}
        &=
        \widetilde O\left(
            \frac{N^4}{\sigma^4\xi^2}
        \right),\\
        T_{\mathrm{shot}}^{\mathrm{pred}}
        &=
        \widetilde O\left(
            \frac{N}{\sigma^4\lambda_{\mathrm{svm}}^2}
        \right).
    \end{aligned}
    \end{equation*}
    % For these choices of $N$, $S_{\mathrm{train}}$, and $S_{\mathrm{pred}}$, 
    Substituting the choices of $N$, $\lambda_{\mathrm{svm}}$, and $\xi$ above into these bounds, we obtain, up to logarithmic factors,
    \begin{equation}
    \begin{aligned}
        N
        &=
        \widetilde O\left(
            \eta_0^{-2}
            \varepsilon_{\mathrm{learn}}^{-2}
        \right),\\
        T_{\mathrm{shot}}^{\mathrm{train}}
        &=
        \widetilde O\left(
            \sigma^{-4}
            \eta_0^{-12}
            \varepsilon_{\mathrm{learn}}^{-12}
        \right),\\
        T_{\mathrm{shot}}^{\mathrm{pred}}
        &=
        \widetilde O\left(
            \sigma^{-4}
            \eta_0^{-6}
            \varepsilon_{\mathrm{learn}}^{-4}
        \right).
    \end{aligned}
        \label{eq:assembly-resource-scaling}
    \end{equation}
    Substituting $\varepsilon_{\mathrm{learn}}=\varepsilon/4$ in Eq.~\eqref{eq:assembly-resource-scaling} gives Eq.~\eqref{eq:formal-main-resource-scaling}.
\end{proof}

% [御手洗：theorem 自体には system size を入れない形に書き換え、ここでそれらの仮定を全て吸収する形に整理。theorem は \(\eta\) の lower bound についてコストが多項式になる、ということが本質的なところで、\(\eta\) がシステムサイズの多項式になる、とかそういうことは本質的でないため。]

Theorem~1 in the main text follows from Theorem~\ref{thm:formal-main-polynomial-learning} applied to a family of classification problems indexed by an integer $m\ge1$ representing the system size.
Since the margin $\eta$ depends on $\sigma$ through the kernel, the choice of $\sigma$ is constrained by requiring the margin condition
\begin{equation}
    \eta
    \ge
    \eta_0
    =
    \Omega(m^{-\nu})
    \label{eq:assembly-margin-lower-bound}
\end{equation}
for a constant $\nu\ge0$ independent of $m$: if $\sigma$ is too small, the kernel values between states and their representatives become too small for the coverage condition with $\kappa_{y_\star}\ge\eta_0$.
As a result, any $\sigma$ for which Eq.~\eqref{eq:assembly-margin-lower-bound} holds satisfies $\sigma=\Omega(1/\sqrt{1+\log m})$, provided that the class $y_\star$ does not collapse onto its representatives in the sense that some $x\in\mathcal S_{y_\star}$ satisfies
\begin{equation}
    F(x,a)
    \le
    1-c
    \qquad
    (a\in A_{y_\star})
    \label{eq:no-collapse}
\end{equation}
with a constant $c>0$ independent of $m$.
Indeed, for such $x$, the coverage condition in Eq.~\eqref{eq:margin-anchor-coverage} and $\kappa_{y_\star}\ge\eta\ge\eta_0$, which follows from the definition of $\eta$ in Eq.~\eqref{eq:margin-certificate}, give $\eta_0\le\max_{a\in A_{y_\star}}K_\sigma^{\mathrm{RBF}}(x,a)\le\exp(-c/\sigma^2)$, so that $\sigma^2\ge c/\log(1/\eta_0)=\Omega(1/(1+\log m))$.
Therefore, under Eqs.~\eqref{eq:assembly-margin-lower-bound} and~\eqref{eq:no-collapse}, Eq.~\eqref{eq:formal-main-resource-scaling} gives
\begin{equation*}
    N
    =
    \widetilde O\left(m^{2\nu}\varepsilon^{-2}\right),
    \qquad
    T_{\mathrm{shot}}^{\mathrm{train}}
    =
    \widetilde O\left(m^{12\nu}\varepsilon^{-12}\right),
    \qquad
    T_{\mathrm{shot}}^{\mathrm{pred}}
    =
    \widetilde O\left(m^{6\nu}\varepsilon^{-4}\right),
\end{equation*}
which are polynomial in $m$ for fixed $\varepsilon$ and $\delta$.

\section{VERIFICATION OF COVERAGE, SEPARATION, AND MARGIN FOR STELLAR RANK CLASSIFICATION}\label{sppl:CheckOfAssumptions}

For the ideal state families defining the stellar-rank classification tasks considered in the main text, we verify the coverage and separation conditions for the exact quantum RBF kernel in Eqs.~\eqref{eq:margin-anchor-coverage} and \eqref{eq:margin-anchor-separation}.
We then show that the resulting margin satisfies the inverse-polynomial lower bound in Eq.~\eqref{eq:assembly-margin-lower-bound}.

\begin{figure}[htbp]
    \centering
    \includegraphics[width=0.9\linewidth]{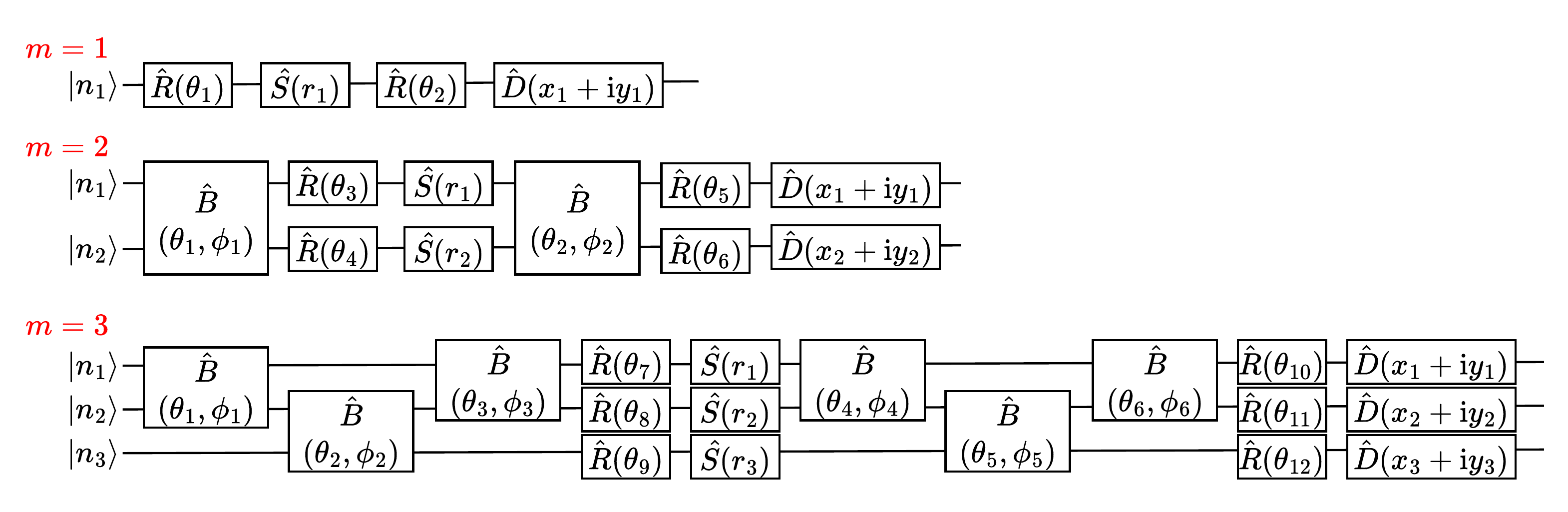}
    \caption{\textbf{Quantum circuit used to generate multimode quantum data.} The simulations use states with one, two, and three modes.}
    \label{fig:MultiModeCircuit}
\end{figure}

\subsection{Classification tasks and state families}

\subsubsection{Multimode state family}

The datasets used in the simulations reported in Fig.~\ref{fig:MultiModeAccuracy_main} are constructed from the following state family. For any integer $m\geq1$, the family consists of pure $m$-mode states of the form
\[
\ket{\psi}=\hat{G}\ket{\bm{n}},
\]
where $\hat{G}$ is an $m$-mode Gaussian unitary and $\ket{\bm{n}}$ is the product Fock state
\[
\ket{\bm{n}}=\ket{n_1}\otimes\cdots\otimes\ket{n_m},
\qquad
\bm{n}=(n_1,\ldots,n_m)\in\mathbb{Z}_{\geq 0}^m.
\]
A general Gaussian unitary operator $\hat{G}$ can be decomposed into a beam-splitter network, single-mode squeezing operations, another beam-splitter network, and displacement operations by the Bloch--Messiah decomposition~\cite{PhysRevA.71.055801}. 
Accordingly, these datasets are generated using the circuit shown in Fig.~\ref{fig:MultiModeCircuit}. 
The beam splitter and phase rotation are defined by
\begin{align*}
    \hat{B}(\theta,\phi)
    &=
    \exp\!\left[
        \theta\left(
            e^{\mathrm{i}\phi}\hat{a}\otimes\hat{a}^\dagger
            -e^{-\mathrm{i}\phi}\hat{a}^\dagger\otimes\hat{a}
        \right)
    \right],\\
    \hat{R}(\theta)
    &=
    \exp\!\left(\mathrm{i}\theta\hat{a}^\dagger\hat{a}\right).
\end{align*}
The squeezing and displacement operators are defined by
\begin{align*}
    \hat{S}(r)
    &=
    \exp\!\left[
        \frac{r\hat{a}^2-r(\hat{a}^\dagger)^2}{2}
    \right],\\
    \hat{D}(x+\mathrm{i}y)
    &=
    \exp\!\left[
        (x+\mathrm{i}y)\hat{a}^\dagger
        -
        (x-\mathrm{i}y)\hat{a}
    \right].
\end{align*}
Here, $\theta,\phi,r,x,y\in\mathbb{R}$. 
The operators $\hat{a}$ and $\hat{a}^\dagger$ denote the bosonic annihilation and creation operators, respectively.
We collect the circuit parameters in
\[
\bm{\vartheta}=(\bm{n},\bm{\theta},\bm{\phi},\bm{r},\bm{x},\bm{y}),
\]
with
\begin{align*}
\bm{n}&=(n_1,\ldots,n_m)\in\mathbb{Z}_{\ge 0}^m, \\
\bm{\theta}&=(\theta_1,\ldots,\theta_{m(m+1)})\in[0,2\pi)^{m(m+1)}, \\
\bm{\phi}&=(\phi_1,\ldots,\phi_{m(m-1)})\in[0,2\pi)^{m(m-1)}, \\
\bm{r}&=(r_1,\ldots,r_m)\in\mathbb{R}^m, \\
\bm{x}&=(x_1,\ldots,x_m)\in\mathbb{R}^m, \\
\bm{y}&=(y_1,\ldots,y_m)\in\mathbb{R}^m .
\end{align*}
The state specified by $\bm{\vartheta}$ is denoted by
\[
\ket{\psi(\bm{\vartheta})}.
\]
Gaussian unitaries preserve stellar rank, and the product Fock state $\ket{\bm{n}}$ has stellar rank $|\bm{n}|$ \cite{PhysRevResearch.3.033018}. We therefore assign each state the label
\[
z=|\bm{n}|:=\sum_{i=1}^{m} n_i .
\]
In the numerical simulations reported in Fig.~\ref{fig:MultiModeAccuracy_main}, we consider the minimal nontrivial binary task
\[
z\in\{0,1\}.
\]
The construction extends to multiclass classification involving higher stellar ranks; see Supplementary Sec.~\ref{sppl:numericalsimulation}.

Motivated by the sufficient conditions in Lemma~\ref{lem:anchor-margin-certificate}, we construct the dataset in two stages: first a finite set of representative states for each class and then data states whose fidelity with their representatives exceeds a prescribed threshold.

\paragraph*{Representative states.}
For each label $z\in\{0,1\}$, we generate a finite set of representative states
\[
A_z=\{a_{z,1},\ldots,a_{z,M_z}\},
\qquad
a_{z,i}:=\ket{\psi(\bm{\vartheta}^{(i)}_z)},
\]
with
\[
M_z=9m(m+1).
\]
We choose the number of representative states to scale with the number of independently varied parameters in the circuit shown in Fig.~\ref{fig:MultiModeCircuit}. 
The prefactor is chosen so that the single-mode construction uses the same number of representative states per class as the classification task implemented in the experiment. 
Thus, $M_z=\Theta(m^2)$.

Each representative state is defined by the parameter vector
\[
\bm{\vartheta}^{(i)}_z=
\bigl(\bm{n}_z^{(i)},\bm{\theta}_z^{(i)},\bm{\phi}_z^{(i)},\bm{r}_z^{(i)},\bm{x}_z^{(i)},\bm{y}_z^{(i)}\bigr)
\]
whose components are chosen as follows. 
First, the components of $\bm{\theta}_z^{(i)}$ and $\bm{\phi}_z^{(i)}$ are sampled independently and uniformly from $[0,2\pi)$. 
Second, all squeezing levels are fixed at 4 dB. 
Third, the displacement parameters $\bm{x}_z^{(i)}$ and $\bm{y}_z^{(i)}$ have components sampled independently and uniformly from $[-1,1]$. 
Finally, the Fock-state component is chosen according to the class label:
\begin{itemize}
    \item for $z=0$, we set
    \[
    \bm{n}_z^{(i)}=\bm{0};
    \]
    \item for $z=1$, we choose $\bm{n}_z^{(i)}$ uniformly from
    \[
    \{\bm{e}_1,\ldots,\bm{e}_m\}.
    \]
    Here, $\bm{e}_\ell$ is the $\ell$th standard basis vector in
    $\mathbb{Z}^m$.
\end{itemize}

\paragraph*{Data states.}
To generate a candidate state in class $z$, we first sample a representative state $a_{z,i}\in A_z$ uniformly. 
All parameters are held fixed except for the squeezing parameters, which are resampled independently and uniformly over the interval corresponding to 2--6 dB. 
The resulting candidate $\ket{\psi}$ is retained only if its fidelity with the corresponding representative satisfies
\begin{equation}
F\!\left(\ket{\psi},a_{z,i}\right)\ge F_{\mathrm{in}},
\qquad
F_{\mathrm{in}}=0.9.
\label{eq:stellar_Fin_condition}
\end{equation}
Let $\mathcal S_z$ denote the family of all candidate states satisfying Eq.~\eqref{eq:stellar_Fin_condition}.
Because 4 dB lies in the allowed squeezing range and $F(a,a)=1\ge F_{\mathrm{in}}$, every $a\in A_z$ belongs to $\mathcal S_z$.

\subsubsection{Single-mode state family for the experimental task}
\label{subsubsec:single_mode_exp}

We next define the ideal state family underlying the single-mode classification task implemented in the experiment reported in Fig.~\ref{fig:accuracy_exp}.
This family is a discrete single-mode version of the construction above.
For each $z\in\{0,1\}$, let $A_z$ denote the set of representative states specified by
\[
n_1=z,
\qquad
\theta_2\in\left\{0,\frac{\pi}{2}\right\},
\qquad
x_1,y_1\in\{-1,0,1\},
\qquad
r_1=\frac{4\ln(10)}{20}.
\]
The two values of $\theta_2$ and the nine displacement pairs yield $M_z=18$ representatives per class. For this task, let $\mathcal S_z$ denote the ideal class-$z$ state family containing, for each representative, the five states defined by squeezing levels of 2, 3, 4, 5, and 6 dB, with all other parameters fixed. The parameter $\phi$ is absent in this single-mode case, and $\theta_1$ is omitted because $\hat{R}(\theta_1)\ket{n_1}$ differs from $\ket{n_1}$ only by a global phase. 

\subsection{Coverage and separation}\label{subsec:verification_of_assumptions}

We verify the coverage and separation conditions required by Theorem~\ref{thm:formal-main-polynomial-learning}, separately for the two ideal state-family constructions defined above.

Throughout this subsection, $k_\sigma$ denotes $K_\sigma^{\mathrm{RBF}}$ defined in Eq.~\eqref{eq:RBF_sppl}.
We use $Y=2z-1$ as the SVM label for $z\in\{0,1\}$.

Conditioning on the realized representative sets, let $P$ be any distribution on
$\mathcal X\times\{-1,+1\}$.
We assume that the probability of sampling quantum data from each class is nonzero, i.e., for each
$z\in\{0,1\}$,
\begin{equation*}
    P(Y=2z-1)>0.
\end{equation*}
% Moreover, since the class label is defined by the stellar rank, which is uniquely determined for each quantum state, any sampled state with label $Y=2z-1$ belongs to the corresponding state family $\mathcal S_z$. 
Moreover, we assume that, under $P$, data states with label $Y=2z-1$ are sampled from the corresponding state family $\mathcal S_z$ constructed above.
Therefore,
\begin{equation*}
    P(X\in\mathcal S_z\mid Y=2z-1)=1.
\end{equation*}
Together, these conditions establish the distributional assumptions required by Theorem~1; the remaining coverage and separation conditions are verified below.

\subsubsection{Coverage within each class}

For the multimode construction, let $\ket{\psi}\in\mathcal S_z$.
By construction, there is a representative state $a\in A_z$ such that
\begin{equation*}
    F(\ket{\psi},a)
    \ge
    F_{\mathrm{in}}.
\end{equation*}
By definition of the quantum RBF kernel,
\begin{equation}
    \max_{a\in A_z} k_\sigma(\ket{\psi},a)
    \ge
    \exp\!\left[-\frac{1-F_{\mathrm{in}}}{\sigma^2}\right]
    =:
    \kappa_z .
    \label{eq:stellar_kappa}
\end{equation}
Thus, the multimode state families $\mathcal S_z$ and representative sets
$A_z$ satisfy the coverage condition in Eq.~\eqref{eq:margin-anchor-coverage}.

For the state family defining the single-mode classification task used in the experiment, the acceptance threshold $F_{\mathrm{in}}=0.9$ is replaced by $F_{\mathrm{in}}^{\mathrm{exp}}=0.49$ to retain all experimental data points, including those with reduced fidelities due to experimental imperfections.
Accordingly, it suffices to replace $F_{\mathrm{in}}$ by $F_{\mathrm{in}}^{\mathrm{exp}}$ in the previous argument.

\subsubsection{Separation between classes}\label{subsec:separation_between_classes}

To verify the separation condition, we bound the fidelity between states of stellar rank zero and one.
For the state families considered here, invariance of fidelity under Gaussian unitaries and a relabelling of the modes reduce the optimization defining multimode stellar robustness
\cite[Appendix~E, Theorem~2]{PRXQuantum.2.020333} to the comparison between $\ket1\otimes\ket0^{\otimes(m-1)}$ and an $m$-mode Gaussian state.
Tracing out all but the occupied mode yields a single-mode Gaussian state, and monotonicity of fidelity under partial trace~\cite{Liang_2019} bounds the multimode fidelity by the corresponding single-mode optimum. 
Appending vacuum modes gives the converse bound. The multimode optimum therefore equals the single-mode value reported in Ref.~\cite[Sec.~III.B]{PRXQuantum.2.020333}:
\begin{equation*}
    F_{\mathrm{out}}
    :=
    \frac{3\sqrt{3}}{4e}
    \simeq
    0.478.
\end{equation*}
Hence, for any $\ket{\psi}\in\mathcal S_z$ and $a\in A_{1-z}$,
$F(\ket{\psi},a)\le F_{\mathrm{out}}$. 
Consequently,
\begin{equation}
    k_\sigma(\ket{\psi},a)
    \le
    \exp\!\left[-\frac{1-F_{\mathrm{out}}}{\sigma^2}\right]
    =:
    \varepsilon_{\mathrm{sep}}.
    \label{eq:stellar_epssep}
\end{equation}
Thus, the separation condition in Eq.~\eqref{eq:margin-anchor-separation} is satisfied.

\subsection{Inverse-polynomial lower bound on the margin}
For the multimode construction, both classes use the same number of
representative states and the same fidelity threshold $F_{\mathrm{in}}$.
Hence,
\[
M_0=M_1=:M=9m(m+1),
\qquad
\kappa_0=\kappa_1=:\kappa=\exp\!\left[-\frac{1-F_{\mathrm{in}}}{\sigma^2}\right].
\]
In particular, $M=\Theta(m^2)$.
Equations~\eqref{eq:margin-certificate},
\eqref{eq:stellar_kappa}, and
\eqref{eq:stellar_epssep} therefore give
\begin{equation}
\eta
=
\frac{\kappa}{M}-\varepsilon_{\mathrm{sep}} .
\label{eq:stellar_eta}
\end{equation}
Fix a constant $C'>0$ independent of $m$ and set
\begin{equation}
\sigma^2
=
\frac{F_{\mathrm{in}}-F_{\mathrm{out}}}{\ln M + C'}.
\label{eq:stellar_sigma_choice}
\end{equation}
With this choice,
\begin{align*}
\frac{\kappa}{M}
&=
\frac{1}{M}\exp\left[-\frac{1-F_{\mathrm{in}}}{\sigma^2}\right]
\\
&=
M^{-\alpha}
\exp\!\left[-C'\frac{1-F_{\mathrm{in}}}{F_{\mathrm{in}}-F_{\mathrm{out}}}\right],
\\
\varepsilon_{\mathrm{sep}}
&=
\exp\left[-\frac{1-F_{\mathrm{out}}}{\sigma^2}\right]
\\
&=
M^{-\alpha}
\exp\!\left[-C'\frac{1-F_{\mathrm{out}}}{F_{\mathrm{in}}-F_{\mathrm{out}}}\right],
\end{align*}
where
\[
\alpha:=\frac{1-F_{\mathrm{out}}}{F_{\mathrm{in}}-F_{\mathrm{out}}}>0.
\]
Substitution into Eq.~\eqref{eq:stellar_eta} yields
\begin{equation}
\eta
=
M^{-\alpha}
\exp\!\left[-C'\frac{1-F_{\mathrm{out}}}{F_{\mathrm{in}}-F_{\mathrm{out}}}\right]
\left(e^{C'}-1\right).
\label{eq:stellar_eta_explicit}
\end{equation}
Because $F_{\mathrm{in}}>F_{\mathrm{out}}$ and
$F_{\mathrm{in}}$, $F_{\mathrm{out}}$, and $C'$ are independent of $m$,
the factor multiplying $M^{-\alpha}$ is a positive constant. Since
$M=9m(m+1)\le18m^2$ for $m\ge1$,
Eq.~\eqref{eq:stellar_eta_explicit} gives
\begin{equation*}
\eta
\ge
c_\eta m^{-\nu},
\qquad
\nu:=2\alpha,
\qquad
c_\eta
:=
18^{-\alpha}
\exp\!\left[-C'\frac{1-F_{\mathrm{out}}}{F_{\mathrm{in}}-F_{\mathrm{out}}}\right]
\left(e^{C'}-1\right)
>0.
\end{equation*}
Thus Eq.~\eqref{eq:assembly-margin-lower-bound} holds with
$\eta_0:=c_\eta m^{-\nu}$ and $\nu=2\alpha$. Moreover,
Eq.~\eqref{eq:stellar_sigma_choice} gives
$\sigma=\Omega(1/\sqrt{\log m})$. 
Under the bounded measurement model of Theorem~1, the resulting sufficient training and prediction measurement
counts are therefore polynomial in $m$ for fixed risk and confidence parameters.

For the single-mode task associated with the experiment, the same calculation
applies with $F_{\mathrm{in}}^{\mathrm{exp}}=0.49$ in place of
$F_{\mathrm{in}}$. Since
$F_{\mathrm{in}}^{\mathrm{exp}}>F_{\mathrm{out}}$, the same construction
gives a strictly positive margin.

The hyperparameter $\sigma$ constructed above are sufficient theoretical choices. By
contrast, $\sigma$ used in the numerical simulations and the experiment
were selected by cross-validation; the argument above does not establish the
same margin bound for these empirically selected values.

For the theoretical hyperparameter constructed above, this verifies the coverage,
separation, and margin conditions required by Theorem~1 for the ideal state
families underlying the numerical simulations and the experimental
classification task.

\subsection{Experimental fidelity estimation}

The preceding analysis establishes the sufficient conditions for the ideal classification task. The present experimental implementation uses a homodyne-based fidelity estimator, whose statistical model differs from the binary-outcome model considered in Theorem~1. More specifically, the fidelity estimator used in the experiment is constructed from
phase-averaged homodyne data. Because the value assigned to an individual
homodyne outcome can lie outside $[0,1]$, this estimator does not satisfy the
measurement model in Eq.~\eqref{eq:training-shot-model}. This restriction is
specific to the homodyne estimator and is not imposed by the classification
task or the quantum RBF kernel. The measurement model can instead be
satisfied using a bounded unbiased fidelity estimator.

For the state families considered here, such an estimator can be obtained by
applying the exact inverse Gaussian operation associated with the target
state and assigning one to the target Fock outcome and zero to every other
outcome. Under independent repetitions, these binary outcomes satisfy
Eq.~\eqref{eq:training-shot-model}.

A SWAP test provides another bounded estimator, but assigning $+1$ and $-1$
to its two ancilla outcomes gives a random variable in $\{-1,+1\}$ whose
expectation equals the fidelity. 
It is covered by the bounded-outcome extension described after Eq.~\eqref{eq:training-shot-model}.
The empirical mean is then clipped to $[0,1]$.

\section{STELLAR ROBUSTNESS AND ITS APPLICATION TO OUR TASK}\label{sppl:StellarRobustnessAnd}

Here, we describe the non-QML method for identifying the stellar rank and its application in our experiment. Conventionally, several theoretical works provide the witness of stellar rank~\cite{PhysRevLett.124.063605, PRXQuantum.2.020333, PhysRevLett.123.043601, fiurasek2022efficient, hct3-6hgn}. Among them, we adopt the concept of ``stellar robustness~\cite{PhysRevLett.124.063605}'' due to its convenience in our experiment. The stellar robustness $\mathrm{Rob}^\ast(\ket{\psi})$ of a state $\ket{\psi}$ is defined as
\begin{equation}
    \mathrm{Rob}^\ast(\ket{\psi}) = \inf_{r^\ast(\ket{\phi}) < r^\ast_0} D_1(\ket{\phi}, \ket{\psi}),
\end{equation}
where $r^\ast_0=r^\ast(\ket{\psi})$ is the stellar rank of the state $\ket{\psi}$, and $D_1(\ket{\phi}, \ket{\psi})=\sqrt{1-|\braket{\phi|\psi}|^2}$ denotes the trace distance between $\ket{\phi}$ and $\ket{\psi}$. The stellar robustness is also related to the fidelity $F(\ket{\phi}, \ket{\psi}) = |\braket{\phi|\psi}|^2$ by the following relation:
\begin{equation}
    \sup_{r^\ast(\ket{\phi}) < r^\ast_0} F(\ket{\phi}, \ket{\psi}) = 1 - [\mathrm{Rob}^\ast(\ket{\psi})]^2.
    \label{eq:StellarRobustnessFidelity}
\end{equation}
From Eq.~\eqref{eq:StellarRobustnessFidelity}, we see that if a state $\ket{\phi}$ has fidelity greater than $1 - [\mathrm{Rob}^\ast(\ket{\psi})]^2$ with $\ket{\psi}$, then $\ket{\phi}$ must have stellar rank greater than or equal to $r^\ast(\ket{\psi})$.

We estimate the stellar rank of test quantum data and evaluate the accuracy labeled “non-QML” in Figs.~\ref{fig:MultiModeAccuracy_main} and~\ref{fig:accuracy_exp}, following the strategy proposed in Ref.~\cite{PhysRevLett.124.063605}. Reference~\cite{PhysRevLett.124.063605} introduces a method to certify the stellar rank of an unknown quantum state via the stellar robustness. The key idea is to directly measure the fidelity between a test state of unknown stellar rank and reference states of known stellar rank greater than or equal to one, without performing full quantum-state tomography. If there exists at least one reference state whose fidelity with the test state exceeds the threshold, the stellar rank of the test state is certified to be equal to or higher than that of the reference state.

Hereafter, we describe the classification of stellar rank zero and one considered in the main text. The stellar robustness is analytically calculated for some single-mode cases~\cite{PRXQuantum.2.020333}. By contrast, it is difficult to analytically derive it for multi-mode quantum states, and it is generally derived numerically. However, for the classification of multi-mode quantum states with stellar rank zero and one, as we discuss in Supplementary Sec.~\ref{subsec:separation_between_classes}, the relevant fidelity threshold for multi-mode quantum data is identical to that of the corresponding single-mode case. Hence, we determine whether a quantum state has stellar rank one by comparing the fidelity between the test state and a reference state known to have stellar rank one, with the single-mode threshold derived in Ref.~\cite{PRXQuantum.2.020333}. We note that the threshold can be used even if the tested quantum state is a mixed state.

In general, the test state is unknown, and the choice of reference states is nontrivial. One simple strategy is to select the reference states randomly, while a more systematic strategy is to iteratively optimize the reference states to increase their fidelity with the test state. In this work, we implement both strategies. The random strategy is denoted as “non-QML (random)” in Fig.~\ref{fig:MultiModeAccuracy_main} and Fig.~\ref{fig:accuracy_exp}, while the optimization-based strategy is denoted as “non-QML (optimized)” in Fig.~\ref{fig:MultiModeAccuracy_main}. For both strategies, we iteratively prepare reference states of stellar rank one, measure their fidelities with the test state, and assign stellar rank one to the test state if there exists at least one reference state whose fidelity exceeds the threshold value derived in Ref.~\cite{PRXQuantum.2.020333}. For the optimization-based strategy, we benchmark several classical optimization algorithms, such as Bayesian Optimization, Constrained Optimization BY Linear Approximation (COBYLA), Covariance Matrix Adaptation Evolution Strategy, Powell, and Nelder-Mead. Finally, we adopt COBYLA, which achieves the highest performance among the tested methods.

We note that the optimization-based strategy is not experimentally implemented in Fig.~\ref{fig:accuracy_exp}, as it requires an iterative coordination between quantum and classical processors, which is beyond the scope of the present experiment. In contrast, the random strategy can be straightforwardly implemented without additional experiments once the pairwise fidelities $|\braket{\psi_j|\psi_i}|^2$ have been measured for the quantum kernel method (Fig.~\ref{fig:kernel_exp}). Specifically, the random strategy checks whether the fidelity between a test state and at least one randomly chosen reference state exceeds a fixed threshold; these fidelities $|\braket{\psi_j|\psi_i}|^2$ are already measured to construct the kernel. Therefore, we can evaluate the random strategy directly from the same measured fidelity data used for QML, by treating a subset of the training states as candidate reference states and comparing their fidelities with the test state to the threshold.

\section{NUMERICAL SIMULATION OF THE MULTICLASS CLASSIFICATION}\label{sppl:numericalsimulation} 

\begin{figure}[htbp]
    \centering
    \includegraphics[width=0.4\linewidth]{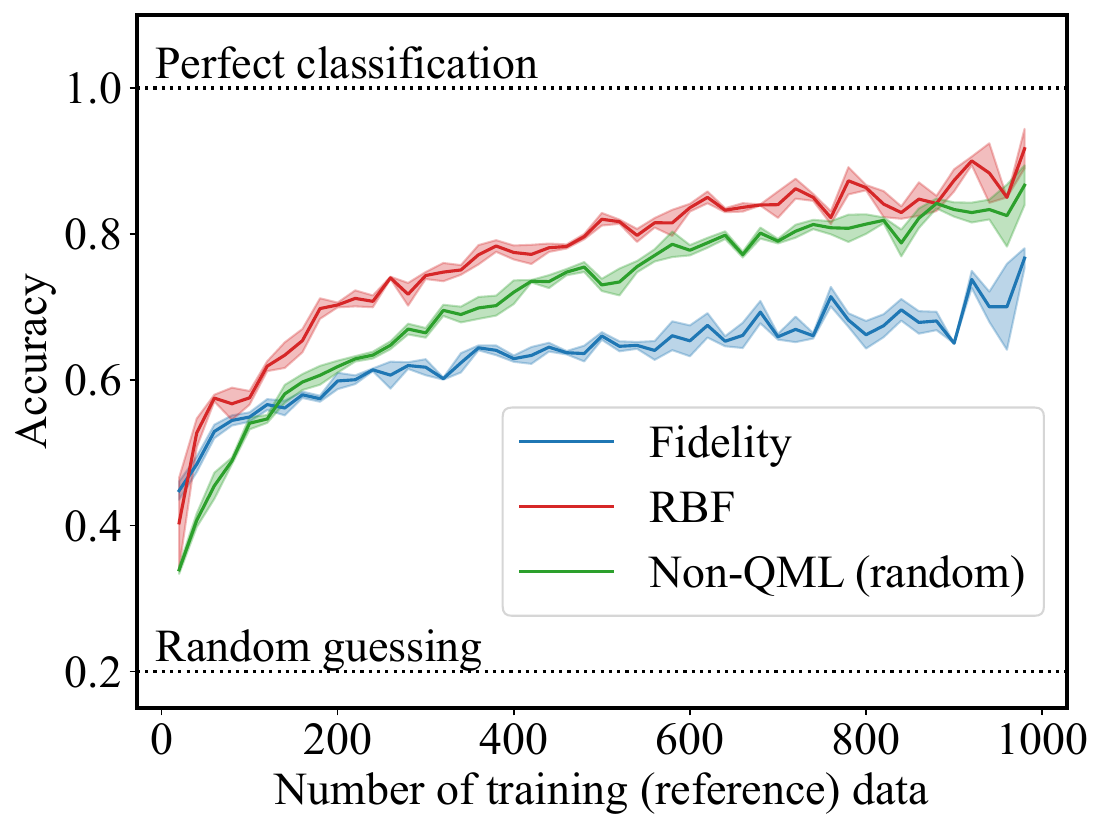}
    \caption{\textbf{Numerical results of classification accuracy for higher stellar rank.} Blue, red, and green lines correspond to results obtained using the quantum fidelity kernel, quantum RBF kernel, and non-QML methods, respectively. The shaded area represents the standard error.}
    \label{fig:AccuracyStellar}
\end{figure}

\begin{figure}[htbp]
    \centering
    \includegraphics[width=0.9\linewidth]{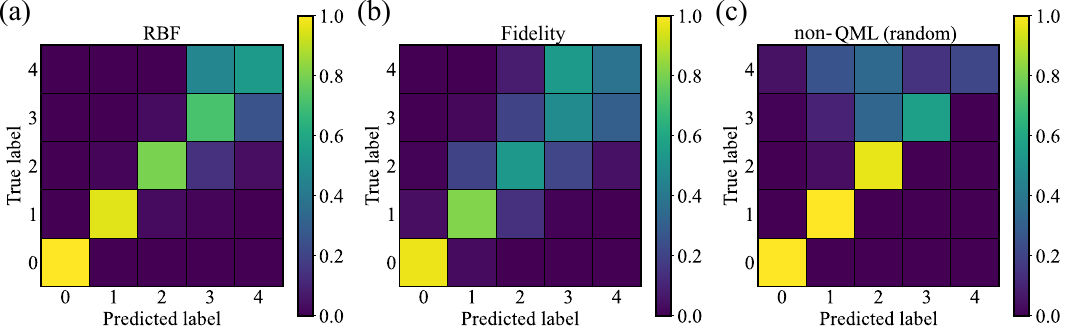}
    \caption{\textbf{Numerical results of confusion matrices for higher-stellar-rank classification.} The horizontal axis of the matrix represents the predicted label, while the vertical axis represents the true label. The color bar indicates the number of quantum data points in each grid, normalized over the true labels. Confusion matrices obtained from (a) the quantum RBF kernel, (b) the quantum fidelity kernel, and (c) the non-QML method.
}
    \label{fig:ConfusionMatrix}
\end{figure}

In this section, we perform numerical simulations of the classification of quantum datasets with higher stellar ranks to show that our quantum kernel method can also be applied to multiclass classification. Here, we consider single-mode quantum data represented by
\begin{equation}
    \ket{\psi} = \hat{D}(x_0 + \mathrm{i} p_0)\hat{R}(\theta_0)\hat{S}(r_0)\ket{n_0},
\end{equation}
where $x_0, p_0$ are sampled uniformly from $-1$ to $1$, $\theta_0$ is sampled uniformly from $0$ to $2\pi$, $r_0$ is sampled uniformly from 2 to 6 dB of squeezing, and $n_0\in\mathbb{Z}_{\geq0}$ is sampled uniformly from $0$ to $4$. Thus, our task is to classify the data into five labels (stellar rank) associated with $n_0$. We simulate this classification task using our QML method and the non-QML method. For the non-QML method, the classification follows the threshold values given in Ref.~\cite{PRXQuantum.2.020333}.

Figure~\ref{fig:AccuracyStellar} shows the classification accuracy as a function of the number of training quantum data, which does not consider experimental imperfections. In Fig.~\ref{fig:AccuracyStellar}, all of the methods we employ here successfully classify the quantum data, at least in the sense that the accuracies are higher than that of random guessing. The quantum RBF kernel achieves higher accuracy than the quantum fidelity kernel, showing the expressivity of the quantum RBF kernel. The difference between the methods is also evident in Fig.~\ref{fig:ConfusionMatrix}. Figure~\ref{fig:ConfusionMatrix} presents the confusion matrices for the quantum RBF kernel (Fig.~\ref{fig:ConfusionMatrix}(a)), the quantum fidelity kernel (Fig.~\ref{fig:ConfusionMatrix}(b)), and the non-QML method (Fig.~\ref{fig:ConfusionMatrix}(c)). Although the quantum kernel method can sometimes overestimate the stellar rank, it generally provides an ``unbiased'' label estimation than the non-QML approach. It can be understood as follows: the non-QML method relies on a sufficient condition for identifying the stellar rank. Hence, it may underestimate the true stellar rank but can never overestimate it, leading to a biased estimator. In contrast, the quantum kernel method is a QML approach that does not rely on such a sufficient condition, enabling unbiased classification.

\section{WORKING PRINCIPLE OF OUR PHOTONIC PROCESSOR} \label{sppl:WorkingPrincipleOf}

\begin{figure}[htbp]
    \centering
    \includegraphics[width=0.9\linewidth]{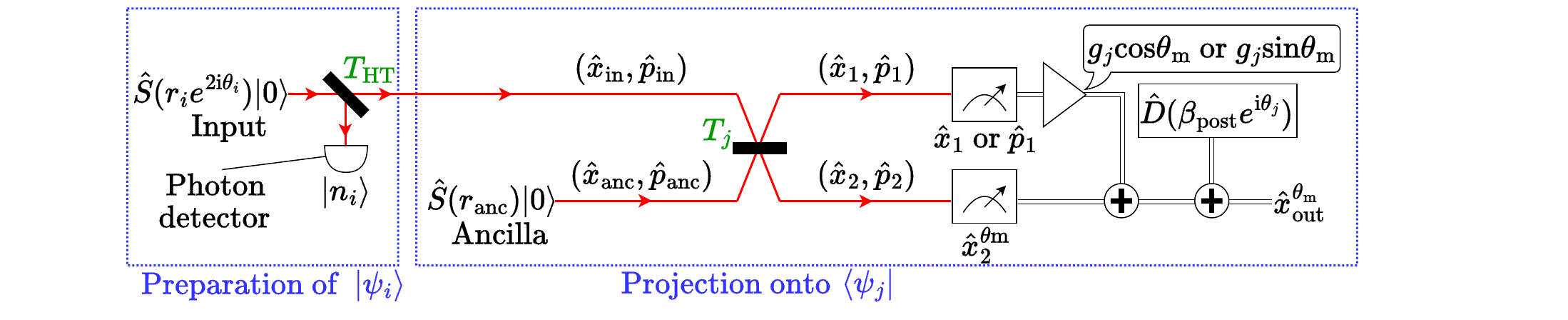}
    \caption{\textbf{Experimental setup to derive input-output relation.} We derive the working principle based on the notation described in this figure.}
    \label{fig:setup_sppl}
\end{figure}

Here, we first describe the quantum dataset in our experiment and then describe how to estimate the fidelities between the quantum data with our photonic processor. Our experiment focuses on the simplest nontrivial scenario, namely the classification of the single-mode quantum dataset of stellar rank 0 or 1, described by $\{\ket{\psi_i}=\hat{G}_{i}\ket{n_i}\}_{i=1}^N\ (n_i=0,1)$ and its label $\{z_i=n_i\}_{i=1}^N$. The arbitrary single-mode Gaussian operation $\hat{G}$ can be decomposed as $\hat{G}=\hat{D}(\beta)\hat{R}(\theta)\hat{S}(r)\hat{R}(\theta')$ with rotation $\hat{R}(\theta')$, squeezing $\hat{S}(r)$, another rotation $\hat{R}(\theta)$, and displacement $\hat{D}(\beta)$, where $\theta,\theta'\in\mathbb{R}$ and $r,\beta\in\mathbb{C}$~\cite{PhysRevA.71.055801, Fukui2022}. Hence, quantum data in our experiment can be described as
\begin{align}
    \ket{\psi}&=\hat{D}(\beta)\hat{R}(\theta)\hat{S}(r)\hat{R}(\theta')\ket{n}\nonumber\\
    &=\hat{D}(\beta)\hat{R}(\theta)\hat{S}(r)\ket{n}\quad (n=0,1).
    \label{eq:quantumdata}
\end{align}
In our experiment, we consider quantum data in Eq.~\eqref{eq:quantumdata} with $\beta=x_0+\mathrm{i}y_0\quad(x_0, y_0\in\{-1,0,1\}),\ \theta\in\{0,\pi/2\},\ r \in \left\{ k\ln(10)/20 \ \middle| \ k = 2,3,4,5,6 \right\}
$, and $n\in\{0,1\}$. Hence, the whole number of quantum data is $3^2\times2\times5\times2=180$. In Fig.~\ref{fig:kernel_exp}, these quantum data are sorted as follows: The rows and columns of the kernel matrix are ordered identically. The data are grouped according to their class label, with stellar-rank-zero states placed first and stellar-rank-one states placed second. Within each class block, the data are ordered by $\theta=0,\pi/2$, then by $x_0=-1,0,1$, then by $y_0=-1,0,1$, and finally by $k=2,\ldots,6$, where $k$ varies fastest.

Hereafter, we explain how to estimate the fidelities between quantum data with our photonic processor. To implement the displacement operations in the post-processing manner, we first rewrite the fidelity of two quantum data $\ket{\psi_i}$ and $\ket{\psi_j}$ as
\begin{align}
    |\braket{\psi_j|\psi_i}|^2
    &= |\braket{n_j| \hat{S}^\dagger(r_j)\hat{R}^\dagger(\theta_j)\hat{D}^\dagger(\beta_j)\hat{D}(\beta_i)\hat{R}(\theta_i)\hat{S}(r_i) | n_i}|^2 \nonumber \\
    &= |\braket{n_j| \hat{D}(\beta_{\mathrm{post}}) \hat{S}^\dagger(r_j)\hat{R}^\dagger(\theta_j)\hat{R}(\theta_i)\hat{S}(r_i) | n_i}|^2 \nonumber \\
    &= |\braket{n_j| \hat{D}(\beta_{\mathrm{post}}e^{\mathrm{i}\theta_j}) \hat{S}^\dagger(r_je^{2\mathrm{i}\theta_j}) \hat{S}(r_ie^{2\mathrm{i}\theta_i}) | n_i}|^2,
    \label{eq:expkernel}
\end{align}
where $\beta_{\mathrm{post}} = (\beta_i - \beta_j) \exp(-\mathrm{i}\theta_j)\cosh(r_j) + (\beta_i - \beta_j)^\ast \exp(\mathrm{i}\theta_j)\sinh(r_j)$. To measure the fidelity in Eq.~\eqref{eq:expkernel}, we proceed as follows: first, we prepare the quantum data $\hat{S}(r_ie^{2\mathrm{i}\theta_i})\ket{n_i}$, which omits the displacement operation from Eq.~\eqref{eq:quantumdata}; then, we experimentally implement the squeezing operation $\hat{S}^\dagger(r_je^{2\mathrm{i}\theta_j})$; we implement the displacement operation $\hat{D}(\beta_{\mathrm{post}}e^{\mathrm{i}\theta_j})$ by classical post-processing after measurement; finally, we estimate the population of the $n_j$-photon component.

To realize this measurement, we design the photonic quantum circuit shown in Fig.~\ref{fig:setup_sppl}, which is slightly modified from the setup in Fig.~\ref{fig:setup}(b), enabling the displacement operation to be implemented classically. The operation of this circuit is validated in the Heisenberg picture below.

We begin by preparing the input squeezed vacuum state $\hat{S}(r_ie^{2\mathrm{i}\theta_i})\ket{0}$. By detecting a single photon with an on/off detector in Fig.~\ref{fig:setup_sppl}, we can also prepare the input squeezed single-photon state $\hat{S}(r_ie^{2\mathrm{i}\theta_i})\ket{1}$. After preparing the input state, we implement a measurement-induced squeezing operation~\cite{PhysRevA.71.042308} by using another ancillary squeezed vacuum state $\hat{S}(r_{\mathrm{anc}})\ket{0}$. The quadrature amplitudes of these input and ancillary states are denoted as $(\hat{x}_{\mathrm{in}},\hat{p}_{\mathrm{in}})$ and $(\hat{x}_{\mathrm{anc}},\hat{p}_{\mathrm{anc}})$, respectively. Let $(\hat{x}_1,\hat{p}_1)$ and $(\hat{x}_2,\hat{p}_2)$ represent the quadrature amplitudes of the beams after passing through a beam splitter with transmissivity $T_j$, defined by
\begin{align}
    \hat{x}_1 &= \sqrt{1-T_j}\,\hat{x}_{\mathrm{in}} + \sqrt{T_j}\,\hat{x}_{\mathrm{anc}}, \\
    \hat{p}_1 &= \sqrt{1-T_j}\,\hat{p}_{\mathrm{in}} + \sqrt{T_j}\,\hat{p}_{\mathrm{anc}}, \\
    \hat{x}_2 &= \sqrt{T_j}\,\hat{x}_{\mathrm{in}} - \sqrt{1-T_j}\,\hat{x}_{\mathrm{anc}}, \\
    \hat{p}_2 &= \sqrt{T_j}\,\hat{p}_{\mathrm{in}} - \sqrt{1-T_j}\,\hat{p}_{\mathrm{anc}}.
\end{align}

In our experiment, we apply the squeezing operation with two phases, $\theta_j = 0$ and $\pi/2$, and the experimental settings depend on $\theta_j$. For $\theta_j=0$, we have $r_je^{2\mathrm{i}\theta_j}=r_j > 0$. Then, we choose $r_{\mathrm{anc}} < 0$ ($p_{\mathrm{anc}}$-squeezing), set $T_j = \exp(-2r_j)$, and measure $\hat{x}_1$ using the upper homodyne detector in Fig.~\ref{fig:setup_sppl}. In the lower homodyne detector, we measure the quadrature amplitude $\hat{x}^{\theta_{\mathrm{m}}}_2 = \hat{x}_2\cos\theta_{\mathrm{m}} + \hat{p}_2\sin\theta_{\mathrm{m}}$ while changing the measurement angle $\theta_{\mathrm{m}}$. The measurement outcomes $\hat{x}_{1}$ and $\hat{x}^{\theta_{\mathrm{m}}}_2$ from the upper and lower homodyne detectors are obtained, and we add the scaled value $g_j\cos\theta_{\mathrm{m}}\,\hat{x}_{1}$ to $\hat{x}^{\theta_{\mathrm{m}}}_2$. Moreover, the displacement operation $\hat{D}(\beta_{\mathrm{post}}e^{\mathrm{i}\theta_j})=\hat{D}(\beta_{\mathrm{post}})$ is implemented by classically adding $\Re(\beta_{\mathrm{post}})\cos\theta_{\mathrm{m}} + \Im(\beta_{\mathrm{post}})\sin\theta_{\mathrm{m}}$ to the measurement outcome above. Then, the post-processed output $\hat{x}^{\theta_{\mathrm{m}}}_{\mathrm{out}}$ is given by
\begin{align}
    \hat{x}^{\theta_{\mathrm{m}}}_{\mathrm{out}}
    &= \hat{x}^{\theta_{\mathrm{m}}}_2 + g_j\cos\theta_{\mathrm{m}}\,\hat{x}_{1} + \left[\Re(\beta_{\mathrm{post}})\cos\theta_{\mathrm{m}} + \Im(\beta_{\mathrm{post}})\sin\theta_{\mathrm{m}}\right] \nonumber \\
    &= \hat{x}_2\cos\theta_{\mathrm{m}} + \hat{p}_2\sin\theta_{\mathrm{m}} + \sqrt{\frac{1-T_j}{T_j}} \cos\theta_{\mathrm{m}}\,\hat{x}_{1} + \left[\Re(\beta_{\mathrm{post}})\cos\theta_{\mathrm{m}} + \Im(\beta_{\mathrm{post}})\sin\theta_{\mathrm{m}}\right] \nonumber \\
    &= (\sqrt{T_j}\,\hat{x}_{\mathrm{in}} - \sqrt{1-T_j}\,\hat{x}_{\mathrm{anc}})\cos\theta_{\mathrm{m}} + (\sqrt{T_j}\,\hat{p}_{\mathrm{in}} - \sqrt{1-T_j}\,\hat{p}_{\mathrm{anc}})\sin\theta_{\mathrm{m}} + \sqrt{\frac{1-T_j}{T_j}}\cos\theta_{\mathrm{m}}(\sqrt{1-T_j}\,\hat{x}_{\mathrm{in}} + \sqrt{T_j}\,\hat{x}_{\mathrm{anc}}) \nonumber \\
    &\quad + \left[\Re(\beta_{\mathrm{post}})\cos\theta_{\mathrm{m}} + \Im(\beta_{\mathrm{post}})\sin\theta_{\mathrm{m}}\right] \nonumber \\
    &= \frac{1}{\sqrt{T_j}}\hat{x}_{\mathrm{in}}\cos\theta_{\mathrm{m}} + \sqrt{T_j}\hat{p}_{\mathrm{in}}\sin\theta_{\mathrm{m}} - \sqrt{1-T_j}\hat{p}_{\mathrm{anc}}\sin \theta_{\mathrm{m}} + \left[\Re(\beta_{\mathrm{post}})\cos\theta_{\mathrm{m}} + \Im(\beta_{\mathrm{post}})\sin\theta_{\mathrm{m}}\right] \nonumber \\
    &= e^{r_j}\hat{x}_{\mathrm{in}}\cos\theta_{\mathrm{m}} + e^{-r_j}\hat{p}_{\mathrm{in}}\sin\theta_{\mathrm{m}} - \sqrt{1-e^{-2r_j}}\hat{p}_{\mathrm{anc}}\sin \theta_{\mathrm{m}} + \left[\Re(\beta_{\mathrm{post}})\cos\theta_{\mathrm{m}} + \Im(\beta_{\mathrm{post}})\sin\theta_{\mathrm{m}}\right].
    \label{eq:heisemberg1}
\end{align}
In the limit $r_{\mathrm{anc}}\rightarrow-\infty$, the term $\sqrt{1-e^{-2r_j}}\hat{p}_{\mathrm{anc}}$ vanishes. $(e^{r_j}\hat{x}_{\mathrm{in}}, e^{-r_j}\hat{p}_{\mathrm{in}})$ are affected by the squeezing operator $\hat{S}^\dagger(r_je^{2\mathrm{i}\theta_j})=\hat{S}^\dagger(r_j)$, and the term $\Re(\beta_{\mathrm{post}})\cos\theta_{\mathrm{m}} + \Im(\beta_{\mathrm{post}})\sin\theta_{\mathrm{m}}$ corresponds to the effect of the displacement operator $\hat{D}(\beta_{\mathrm{post}})$. Therefore, the output in Eq.~\eqref{eq:heisemberg1} corresponds to the measurement output of the state $\hat{D}(\beta_{\mathrm{post}})\hat{S}^\dagger(r_je^{2\mathrm{i}\theta_j})\hat{S}(r_ie^{2\mathrm{i}\theta_i})\ket{n_i}$ in the basis $\hat{x}\cos\theta_{\mathrm{m}} + \hat{p}\sin\theta_{\mathrm{m}}$. We measure the quadrature distributions at various angles $\theta_{\mathrm{m}}$, average these distributions, and perform an appropriately weighted integration to obtain the $n_j$-photon component~\cite{PhysRevA.52.R924}. From these procedures, the setup in Fig.~\ref{fig:setup_sppl} and Eq.~\eqref{eq:heisemberg1} allow for estimation of the quantum kernel defined in Eq.~\eqref{eq:expkernel}.

For $\theta_j=\pi/2$, we have $r_je^{2\mathrm{i}\theta_j}=-r_j<0$. Then, we choose $r_{\mathrm{anc}} > 0$ ($x_{\mathrm{anc}}$-squeezing), set $T_j = \exp(-2r_j)$, and measure $\hat{p}_1$ using the upper homodyne detector. We use the feedforward gain $g_j \sin\theta_{\mathrm{m}}$ for adding the $\hat{p}_1$ outcome, and apply the displacement by adding  $\Re(\mathrm{i}\beta_{\mathrm{post}})\cos\theta_{\mathrm{m}} + \Im(\mathrm{i}\beta_{\mathrm{post}})\sin\theta_{\mathrm{m}}$ to the measurement result. With these procedures, the post-processed output $\hat{x}^{\theta_{\mathrm{m}}}_{\mathrm{out}}$ is similarly calculated as
\begin{align}
    \hat{x}^{\theta_{\mathrm{m}}}_{\mathrm{out}}
    &= \hat{x}^{\theta_{\mathrm{m}}}_2 + g_j\sin\theta_{\mathrm{m}}\,\hat{p}_{1} + \left[\Re(\mathrm{i}\beta_{\mathrm{post}})\cos\theta_{\mathrm{m}} + \Im(\mathrm{i}\beta_{\mathrm{post}})\sin\theta_{\mathrm{m}}\right] \nonumber \\
    &= e^{-r_j}\hat{x}_{\mathrm{in}}\cos\theta_{\mathrm{m}} + e^{r_j}\hat{p}_{\mathrm{in}}\sin\theta_{\mathrm{m}} - \sqrt{1-e^{-2r_j}}\hat{x}_{\mathrm{anc}}\cos\theta_{\mathrm{m}}  + \left[\Re(\mathrm{i}\beta_{\mathrm{post}})\cos\theta_{\mathrm{m}} + \Im(\mathrm{i}\beta_{\mathrm{post}})\sin\theta_{\mathrm{m}}\right].
    \label{eq:heisemberg1neg}
\end{align}
Therefore, the $n_j$-photon component, corresponding to the quantum kernel in Eq.~\eqref{eq:expkernel}, can also be estimated under this condition.

\section{EXPERIMENTAL SETTINGS AND PARAMETERS}\label{sppl:ExperimentalSettingsAnd}

\begin{figure*}[htbp]
    \centering
    \includegraphics[width=0.9\linewidth]{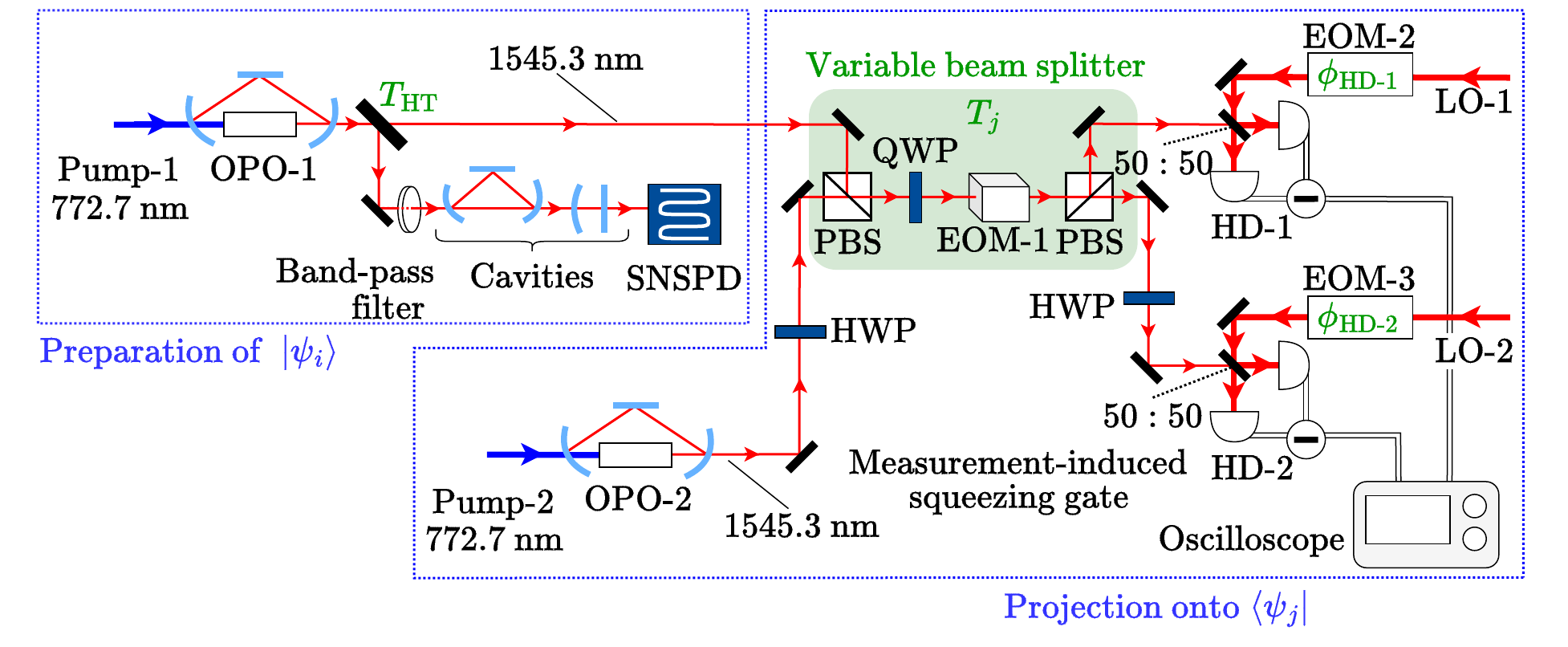}
    \caption{\textbf{Detailed optical setup used in our experiment.} See Supplementary Sec.~\ref{sppl:ExperimentalSettingsAnd} for details. EOM, electro-optic modulator; HD, homodyne detector; HWP, half-wave plate; LO, local oscillator; PBS, polarizing beam splitter; QWP, quarter-wave plate; OPO, optical parametric oscillator; SNSPD, superconducting nanostrip single-photon detector.}
    \label{fig:setup_methods}
\end{figure*}

Here, we describe the detailed parameters of the experimental setup in Fig.~\ref{fig:setup_methods}. Our experiment utilizes a continuous-wave laser operating at 1545.3 nm. Optical parametric oscillator-1 (OPO-1) produces the input states, whereas OPO-2 generates the ancillary states for the measurement-induced squeezing gate. These two OPOs, pumped by frequency-doubled beams at 772.7 nm, generate squeezed vacuum states. Both OPOs exhibit a spectral bandwidth (half-width at half maximum, HWHM) of 30 MHz. The pump power of OPO-1 is set to 10.1 mW, 22.4 mW, 39.1 mW, 59.7 mW, and 83.8 mW to achieve squeezing levels of approximately 2, 3, 4, 5, and 6 dB, respectively. OPO-2 is pumped at 200 mW to maximize its squeezing level. The squeezing and antisqueezing levels from OPO-2 are measured to be -4.4 dB and +9.1 dB. This measurement result indicates that the overall optical loss of the experimental setup is approximately 30\% and the pure squeezing level is 10 dB. These parameters are used in the theoretical model in Figs.~\ref{fig:kernel_exp}~and~\ref{fig:accuracy_exp} to take the experimental imperfection into account.

As an input state, we also create the squeezed single-photon state by subtracting a photon from the squeezed vacuum state of OPO-1. The photon subtraction is performed probabilistically using a superconducting nanostrip single-photon detector (SNSPD). The trigger light, which goes to the SNSPD, passes through a spectral filtering system before photon detection. This filtering restricts the frequency region of the detected photons and consequently limits that of the heralded squeezed single-photon states, making their bandwidth narrower than that of the ancillary squeezed state from OPO-2. Our filtering system has three parts: a dielectric multilayer band-pass filter with HWHM of $\sim 60$ GHz, a triangle cavity with HWHM of 11 MHz and free spectral range (FSR) of 4.5 GHz, and a Fabry-Perot cavity with HWHM of 144 MHz and FSR of 150 GHz. This combination ensures that only the trigger light within HWHM of 11 MHz around the center wavelength of the laser transmits through the filtering system. The detection rates are about 150, 350, 700, 1100, and 1700 counts per second (cps) when the squeezing levels $r_i$ correspond to 2, 3, 4, 5, and 6 dB, respectively. Fake and dark counts are measured to be approximately 20 cps and 3 cps, respectively.

The input and ancilla interfere at a variable beam splitter (VBS), realized by polarization optics consisting of a bulk electro-optic modulator (EOM-1), a quarter-wave plate (QWP), and polarizing beam splitters. The QWP is adjusted to achieve a transmissivity of 50\% when no voltage is applied to EOM-1, which makes it easy to lock the relative phase between the beams. Homodyne detectors measure quadrature amplitudes along homodyne angles $\phi_{\mathrm{HD-1}}$ and $\phi_{\mathrm{HD-2}}$. Their bandwidth is approximately 200 MHz, and the quantum efficiency of photodiodes in homodyne detection systems is designed to be 99\%. The VBS transmissivity ($T_j$) and the homodyne angles ($\phi_{\mathrm{HD-1}}, \phi_{\mathrm{HD-2}}$) are actively controlled by voltage-driven electro-optic modulators (EOM-1, EOM-2, and EOM-3) in Fig.~\ref{fig:setup_methods}.

Measurement outcomes from homodyne detection are acquired by an oscilloscope and analyzed using a classical computer. We measure 3000 time waveforms of quadratures at 12 phases in $15^\circ$ increments with HD-2 in Fig.~\ref{fig:setup_methods}. Simultaneously, HD-1 in Fig.~\ref{fig:setup_methods} measures 3000 time waveforms at a fixed phase ($\hat{x}_1$ or $\hat{p}_1$). Then, quadrature amplitudes of interest are extracted with a temporal mode function $f(t)$. The mode function is obtained by convolving the both-sided exponential correlation function $r(t)\propto\exp(-\alpha|t|)$ of the OPO squeezing with the time reversal of the response function of the filtering system $h(t)\propto\exp(-\beta t)u(t)$, where $u(t)$ is the unit step function~\cite{PhysRevA.96.052304}. Here, the parameters $\alpha$ and $\beta$ theoretically correspond to the bandwidth of OPO-1 and the filtering system, respectively. Then, the exact form of the normalized mode function is calculated as
\begin{equation}
    f(t) =\sqrt{\frac{\alpha\beta}{2\alpha+\beta}}\times
    \begin{cases}
     \frac{2\alpha}{\alpha-\beta}\exp(\beta t)-\frac{\alpha+\beta}{\alpha-\beta}\exp(\alpha t) & (t<0), \\
     \exp(-\alpha t) & (t\geq0).
    \end{cases}
\end{equation}
Experimentally, we optimize parameters $\alpha$ and $\beta$ to maximize the negativity of the Wigner function of squeezed single-photon states (quantum data with stellar rank 1), and the specific values used are $\alpha=2\pi\times39$ MHz and $\beta=2\pi\times10.8$ MHz.

% \bibliographystyle{apsrev4-2}
% \bibliography{supprefs} % Supplementary用Bibファイル

\end{document}